\newtheorem{thm}{Theorem}[section]
\newtheorem{cor}{Corollary}[section]
\newtheorem{lem}{Lemma}[section]
\newtheorem{remark}{Remark}[section]
\newtheorem{assumption}{Assumption}[section]
\newtheorem{example}{Example}[section]
\title{Phase Balancing of Two and Three-Agent Heterogeneous Gain Systems With Extensions to Multiple Agents}
\author{Anoop Jain and Debasish Ghose % <-this % stops a space
\thanks{A. Jain is a graduate student at the Guidance, Control and Decision System Laboratory (GCDSL) in the Department of
                 Aerospace Engineering, Indian Institute of Science,
                  Bangalore, India (email: anoopj@aero.iisc.ernet.in).}
\thanks{D. Ghose is a Professor at the Guidance, Control and Decision System Laboratory (GCDSL) in the Department of
              Aerospace Engineering, Indian Institute of
              Science, Bangalore, India (email: dghose@aero.iisc.ernet.in).}
\thanks {This work is partially supported by Asian Office of Aerospace Research and Development
(AOARD).}}
\begin{document}
\maketitle

%%%%%%%%%%%%%%%%%%%%%%%%%%%%%%%%%%%%%%%%%%%%%%%%%%%%%%%%%%%%%%%%%%%%%%%%%%%%%%%%%%%%%%%%%%%%%%%%%%%%%%%%%%%%%%%%%%%%%%%%%%%%%%%%%%%%%%%%%%%%%%%%%%%%%%%%%%%%
\begin{abstract}
This paper studies the phase balancing of a two and three-agent system where the agents are coupled through heterogeneous controller gains. Balancing refers to the situation in which the movement of agents causes the position of their centroid to become stationary. We generalize existing results and show that by using heterogeneous controller gains, the velocity directions of the agents in balanced formation can be controlled. The effect of heterogeneous gains on the reachable set of these velocity directions is further analyzed. For the two-agent's case, the locus of steady-state location of the centroid is also analyzed against the variations in the heterogeneous controller gains. Simulations are given to illustrate the theoretical findings.
\end{abstract}

\begin{IEEEkeywords}
Balanced formation, phase balancing, heterogeneous control gains, reference direction, convergence point.
\end{IEEEkeywords}

%%%%%%%%%%%%%%%%%%%%%%%%%%%%%%%%%%%%%%%%%%%%%%%%%%%%%%%%%%%%%%%%%%%%%%%%%%%%%%%%%%%%%%%%%%%%%%%%%%%%%%%%%%%%%%%%%%%%%%%%%%%%%%%%%%%%%%%%%%%%%%%%%%%%%%%%%%%%%%%%%%%%%%%%%
\section{Introduction}
\subsection{Prelude}
Multi-agent systems exhibit different collective behaviors because of their potential applications in several areas such as formation control of unmanned aerial vehicles (UAVs) \cite{Beard2008}, \cite{Mesbhai2010}, autonomous underwater vehicles (AUVs) \cite{Sepulchre2007} and spacecraft \cite{Beard2004}, cooperative robotics \cite{Zqu2009}, and sensor networks \cite{cortes2004}. In this paper, our main interest is to study a particular type of collective formation of a multi-agent system so called balanced formation. Balancing refers to the situation when all the agents of a group move in such a way that their position centroid remains stationary. A contrary notion of balancing is synchronization, which refers to the situation when all the agents of a group have a common velocity direction. The phenomenon of synchronization is widely studied in the literature, for instance refer \cite{Strogatz2000}$-$\cite{Jain2016} and the references therein. In this paper, the phrases ``balanced formation" and ``phase balancing" are used interchangeably.

Recently, the important insights in understanding the phenomenon of phase synchronization and balancing have come from the study of the Kuramoto model \cite{Strogatz2000}, \cite{Jadbabaie2004}. This model is widely studied in the literature in the context of achieving synchronization and balancing in multi-agent systems. For instance in \cite{Sepulchre2007}, Kuramoto model type steering control law is derived to stabilize synchronized and balanced formations in a group of agents. The proposed control law in \cite{Sepulchre2007} operates with homogeneous controller gains, which gives rise to a balanced formation of agents with their unique phase arrangement. Recently, the effect of heterogeneity in various aspects have been studied in the literature. For example, \cite{Seyboth2014} considers heterogeneous velocities of the agents. In a similar spirit, in this paper, we consider that the controller gains are heterogeneously distributed, that is, they are not necessarily the same for each agent, and can be deterministically varied. It will be shown that this type of heterogeneity in the controller gains also leads to a balanced formation, in which a desired phase arrangement of the agents can be obtained by a proper selection of heterogeneous gains.

\subsection{Motivations}
The motivation to study balancing under heterogeneous controller gains is twofold \cite{Jain2016}. First, in many engineering applications in the field of aerial and underwater vehicles, it is required that all the vehicles move in a formation. Utilizing heterogeneity in the controller gains, the formation of these vehicles can be made to move in a desired direction, thus helping to explore an area of interest. Secondly, while implementing the control law physically for the homogeneous gains case, it is impossible to get identical controller gain for each agent. Thus, some errors in the individual controller gains is inevitable, leading to heterogeneity in the controller gains. It would be useful to know the effect of this heterogeneity on phase balancing performance of the multi-agent system.

\subsection{Literature Review}
The literature, related to achieving phase balancing or balanced formation in a multi-agent system, has focused on the controller design methodology which operates with homogeneous controller gains. In \cite{Sepulchre2007}, steering control laws are proposed to stabilize synchronized and balanced formations of a group of agents moving at unit speed. By taking into account the non-identical velocities of the agents, the stabilization of balanced formation is further discussed in \cite{Seyboth2014}. In \cite{Xu2013}, a modified Kuramoto model based control algorithm is proposed for making multiple agents spread out equidistantly on a circle, which is usually called splay formation. The splay phase arrangement \cite{Paley2005} is a special case of balanced formation, in which the phases are separated by multiples of $2\pi/N$ ($N$ being the number of agents). In \cite{Klein2008}, the asymptotic stability of the balanced set is proved in discrete time. An algorithm to stabilize synchronization and balancing in phase models on the $N$-torus, is proposed in \cite{Scardovi2007}. The clustering phenomenon, which referred to as the coexistence of synchronization and phase balancing, is also studied in the literature \cite{Okuda1993}$-$\cite{AJain2016} for coupled oscillators as well as multi-agent systems. Other than these, the phenomenon of phase balancing in coupled oscillators is discussed in the literature with various names like incoherent states, chimera sates, etc. \cite{Strogatz2000}, \cite{Strogatz1991}. Moreover, in the context of heterogeneity in the control gains, in \cite{Sinha2006}, heterogeneous controller gains have been used in a cyclic pursuit framework to obtain desired meeting points (rendezvous) and directions. The idea of dynamically adjustable control gains have been used in \cite{Ding2012} to study the pursuit formation of multiple autonomous agents.

\subsection{Contributions}
Many applications (like underwater exploration) are done with small number of vehicles as these are large and expensive. The strategies presented in this paper, although computationally feasible for more than three agents, are analytically tractable only for two and three agents. However, the analytical results provide important insights into the control of multi-agent systems as demonstrated by using simulations.

The contributions of the present paper are the following:
\begin{itemize}
\item A steering control law, which operates with heterogeneous controller gains, is proposed to asymptotically stabilize balanced formation of a group of $N$ agents.
\item It is proved analytically for the two and three-agent systems that the heterogeneity in the controller gains leads to a balanced formation, in which the desired arrangement of  agents' velocity vectors, can be obtained by a proper selection of the heterogeneous control gains.
\item The reachable set of the velocity directions of the agents in balanced formation is further analyzed under the effect of heterogenous control gains as well as homogeneous control gains with an inevitable non-uniform error.
\item Unlike all control gains being positive for phase balancing, it is analytically shown for the two-agent system that there exist a combination of both positive and negative values of the control gains which results in the further expansion of the reachable set of the agents' velocity directions in balanced formation.
\item For the two-agent system, we obtain the closed form expressions of the velocity directions, and analyze the locus of steady-state location of the centroid against the variations in the heterogeneous controller gains. Moreover, it is shown how this locus is useful in achieving balanced formation of agents about a desired steady-state location of the centroid.
\end{itemize}

The preliminaries of the present work have been presented in \cite{Jain2013}.

\subsection{Organization}
The paper is organized as follows: Section II describes the dynamics of the system and formulates the problem. In Section III, we analyze the effect of heterogeneous control gains on the velocity directions of agents in balanced formation. In Section IV, by deriving a less restrictive condition on the heterogeneous gains for the two-agent's case, we show that the reachable set of the velocity directions in balanced formation further expands. Section IV obtains the explicit expressions of the velocity directions of the two agents and their convergence point, and find its locus against the variations in the heterogeneous controller gains. Simulation results are provided in Section V. Finally, Section VI concludes the paper with a brief summary of future challenges.

\section{System Description and Problem Formulation}

\subsection{System Model}
A multi-agent system composed of $N$ autonomous agents, moving in a planar space, each assumed to have unit mass and unit speed, is considered in this paper and represented as
\begin{subequations}\label{modelNew}
\begin{align}
\label{modelNew1}\dot{r}_k & = e^{i\theta_k}\\
\label{modelNew2} \dot{\theta}_k & =  u_k; ~~~ k = 1, \ldots, N,
\end{align}
\end{subequations}
where, $r_k = x_k + i y_k \in \mathbb{C}$, $\dot{r}_k = e^{i\theta_k} = \cos\theta_k + i\sin\theta_k \in \mathbb{C}$ and $\theta_k \in \mathbb{S}^1$ are, respectively, the position, velocity and the heading angle of the $k^\text{th}$ agent, and $i = \sqrt{-1}$ denotes the standard complex number. The orientation, $\theta_k$ of the (unit) velocity vector represents a point on the unit circle $\mathbb{S}^1$, and is also referred to as the phase of the $k^\text{th}$ agent \cite{Strogatz2000}. The feedback control law $u_k \in \mathbb{R}$ controls the angular rate of the $k^\text{th}$ agent. If, $\forall k$, the control input $u_k$ is identically zero, then each agent travels at constant unit speed in a straight line in its initial direction $\theta_k(0)$ and its motion is decoupled from other agents. If, $\forall k$, the control input $u_k = \omega_0$ is constant and non-zero, then each agent rotates on a circle of radius $|\omega_0|^{-1}$. The direction of rotation around the circle is determined by the sign of $\omega_0$. If $\omega_0 > 0 $, then all the agents rotate in the anticlockwise direction and if $\omega_0 < 0$, then all the agents rotate in the clockwise direction.

Note that the agent's model, given by \eqref{modelNew}, is a unicycle model, and is widely studied in the literature \cite{Beard2008}, \cite{Mesbhai2010}, \cite{Zqu2009} in the context of modeling a real autonomous vehicle. We assume that the agents are identical and can exchange information about their orientations $\theta_k$ with all other agents of the group. Moreover, the control algorithms proposed in this paper are decentralized, and do not rely on any centralized information causing the agents to achieve phase balancing with a desired arrangement of their velocity directions. Only the heterogeneity in the controller gains is a mean to steer the agents towards phase balancing in a desired arrangement of their velocity directions.

\subsection{Notations}
We introduce a few notations, which are used in this paper. We use the bold face letters $\pmb{r} = [r_1, \ldots, r_N]^T \in \mathbb{C}^N$, $\pmb{\theta} = [\theta_1, \ldots, \theta_N]^T \in \mathbb{T}^N$, where $\mathbb{T}^N$ is the $N$-torus, which is equal to $\mathbb{S}^1 \times \ldots \times \mathbb{S}^1$ ($N$-times) to represent the vectors of length $N$ for the agent's positions and heading angles, respectively. Next, we define the inner product $\left<z_1, z_2\right>$ of the two complex numbers $z_1, z_2 \in \mathbb{C}$ as $\left<z_1, z_2\right> = \textrm{Re}(\bar{z}_1z_2)$, where, $\bar{z}_1$ represents the complex conjugate of $z_1$, and $\textrm{Re}(z)$ denotes the real part of $z \in \mathbb{C}$. This inner product is equivalent to the standard inner product on $\mathbb{R}^2$ since for some $z_k = x_k + iy_k, k=1,2$, the inner product $\left<z_1, z_2\right> = \text{Re}\{(x_1 - iy_1)(x_2 + iy_2)\} = x_1x_2 + y_1y_2$. For vectors, we use the analogous boldface notation $\left<\pmb{w}, \pmb{z}\right> = \textrm{Re}(\pmb{w^*}\pmb{z})$ for $\pmb{w}, \pmb{z} \in \mathbb{C}^N$, where $\pmb{w^*}$ denotes the conjugate transpose of $\pmb{w}$.

%The norm of $\pmb{z} \in \mathbb{C}^N$ is defined as $\|\pmb{z}\| = \left<\pmb{z}, \pmb{z}\right>^{1/2}$.

\subsection{Background and Problem Formulation}
At first, our prime requirement is to stabilize the motion of all the agents in a balanced formation. For this, we propose the feedback control $u_k, \forall k$, which is obtained by controlling the average linear momentum of the group of agents. The average linear momentum $p_\theta$ of the group of agents satisfying \eqref{modelNew1} is,
\begin{equation}
\label{phase_order_parameter}p_\theta = \frac{1}{N}\sum_{k=1}^{N}e^{i\theta_k} = \left|p_\theta\right|e^{i\Psi},
\end{equation}
which is also referred to as the phase order parameter \cite{Strogatz2000}. The modulus of the phase order parameter $\left|p_\theta\right|$ satisfies $0 \leq \left|p_\theta\right| \leq 1$, and is a measure of synchrony of the phase variable $\pmb{\theta}$. In particular, $\left|p_\theta\right| = 1$ for synchronized phases and $p_\theta = 0$ for balanced phases \cite{Strogatz2000}. Note that since $\theta_k, \forall k$, is a function of time $t$, $p_\theta$ varies with time, and we often suppress their time argument.

As mentioned before, in balanced formation, the position centroid (of the group of agents defined in \eqref{modelNew1}),
\begin{equation}
\label{R}R = \frac{1}{N}\sum_{k=1}^{N}r_k
\end{equation}
remains fixed, which implies that the quantity
\begin{equation}
\label{R_dot}\dot{R} = \frac{1}{N}\sum_{k=1}^{N}\dot{r}_k = \frac{1}{N}\sum_{k=1}^{N}e^{i\theta_k} = p_\theta
\end{equation}
is zero. Thus, the phase arrangement $\pmb{\theta}$ is balanced if the phase order parameter \eqref{phase_order_parameter} equals zero. This suggests that the stabilization of balanced formation is accomplished by considering the potential
\begin{equation}
\label{potential function}U(\pmb{\theta}) = \frac{N}{2}|p_\theta|^2,
\end{equation}
which is minimized when $p_\theta = 0$ (balanced formation). When $|p_\theta| = 1$, it corresponds to synchronized formation of the agents, and has been discussed in \cite{Jain2016}. Note that, with unit mass assumption, the position centroid $R$, given by \eqref{R}, is also the center of mass of the group of agents.

Now, we state the following theorem, which describe a Lyapunov-based control framework to stabilize balanced formation of the agents.

%New%%%%%%%%%%%%%%%%%%%%%%%%%%%%%%%%%%%%%%%%%%%%%%%%%%%%%%%%%%%%%%%%%%%%%%%%%%%%%%%%%%%%%%%%%%%%%%%%%%%%%%%%%%%%%%%%%%%%%%%%%%%%%%%%%%%%%%%%%%%%%%%%%%%%%%%%%%%
\begin{thm}\label{Theorem1}
Consider the system dynamics \eqref{modelNew} with control law
\begin{equation}
\label{control1}u_k = -K_k\left(\frac{\partial U}{\partial \theta_k}\right);~~~K_k \neq 0,
\end{equation}
and define a term
\begin{equation}
\label{term}T_k(\pmb{\theta}) = \left(\frac{\partial U}{\partial \theta_k}\right)^2
\end{equation}
for all $k = 1, \ldots, N$. If $\sum_{k=1}^{N} K_k T_k(\pmb{\theta}) > 0$, all the agents asymptotically stabilize to a balanced formation. Moreover, $K_k > 0, \forall k$, is a restricted sufficient condition in stabilizing balanced formation.
\end{thm}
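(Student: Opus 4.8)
The plan is to establish this as a Lyapunov stability result, using the potential $U(\pmb{\theta})$ from \eqref{potential function} as the candidate Lyapunov function. Since $U(\pmb{\theta}) = \tfrac{N}{2}|p_\theta|^2 \geq 0$ and attains its minimum value of zero precisely at the balanced set where $p_\theta = 0$, it is a natural nonnegative function whose minimizers are exactly the target configurations. The heart of the argument is to compute the time derivative of $U$ along the trajectories of the closed-loop system and show it is nonpositive whenever the gain-weighted sum condition holds.

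First I would compute $\dot{U}$ by the chain rule: $\dot{U} = \sum_{k=1}^{N} \frac{\partial U}{\partial \theta_k} \dot{\theta}_k = \sum_{k=1}^{N} \frac{\partial U}{\partial \theta_k} u_k$, substituting the control law \eqref{control1}. Plugging in $u_k = -K_k \big(\tfrac{\partial U}{\partial \theta_k}\big)$ yields
\begin{equation}
\dot{U} = -\sum_{k=1}^{N} K_k \left(\frac{\partial U}{\partial \theta_k}\right)^2 = -\sum_{k=1}^{N} K_k T_k(\pmb{\theta}),
\end{equation}
using the definition \eqref{term}. The hypothesis $\sum_{k=1}^{N} K_k T_k(\pmb{\theta}) > 0$ then gives $\dot{U} < 0$ away from the critical set, so $U$ is nonincreasing and the trajectories are confined to sublevel sets of $U$.

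Next I would invoke LaSalle's invariance principle on the compact phase space $\mathbb{T}^N$. The trajectories converge to the largest invariant set contained in $\{\pmb{\theta} : \dot{U} = 0\}$, i.e.\ where $\sum_k K_k T_k(\pmb{\theta}) = 0$. I would argue that on this set each $T_k = (\partial U/\partial \theta_k)^2$ must vanish, forcing $u_k = 0$ for all $k$, so the heading angles are stationary; the invariant set therefore consists of critical points of $U$. The balanced configurations are the global minima, while synchronized and other non-balanced critical points are maxima or saddles that are not stable under the gradient-like flow, so generic trajectories are attracted to the balanced set. For the final sentence, I would observe that $K_k > 0$ for all $k$ makes every summand $K_k T_k$ individually nonnegative, so the sum is automatically positive whenever any $T_k \neq 0$; this is clearly a stronger (restricted) requirement than the weaker sign condition on the aggregate sum, establishing it as a sufficient but not necessary condition.

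The main obstacle I anticipate is the LaSalle step: showing that the zero set of $\dot{U}$ reduces to the set of critical points of $U$, and then distinguishing the balanced minima from the other critical configurations so that one can legitimately claim asymptotic stabilization \emph{to a balanced formation} rather than merely convergence to some critical set. Under the weaker condition $\sum_k K_k T_k > 0$ (allowing some negative $K_k$), the flow is no longer a pure gradient descent, so care is needed to verify that non-balanced critical points remain non-attracting; I would likely need to examine the linearization of the closed-loop dynamics at the synchronized and saddle configurations, or appeal to the structure of $U$ established in the cited companion work \cite{Jain2016}, to complete this characterization rigorously.
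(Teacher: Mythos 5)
Your outline follows the paper's own route exactly: the same Lyapunov candidate $U(\pmb{\theta}) = \tfrac{N}{2}|p_\theta|^2$, the same computation $\dot{U} = -\sum_{k} K_k T_k(\pmb{\theta})$, LaSalle's invariance principle on the compact torus $\mathbb{T}^N$, and the observation that the invariant set consists of critical points of $U$. However, the step you flag as "the main obstacle" — showing that the non-balanced critical points are not attracting, so that convergence is genuinely \emph{to the balanced set} — is precisely the substantive half of the paper's proof, and your proposal leaves it undone. Saying the other critical points "are maxima or saddles that are not stable under the gradient-like flow" is the claim to be proved, not an argument for it; and deferring to a linearization "or" the companion work \cite{Jain2016} (which treats synchronization, i.e.\ maximization of $|p_\theta|$, not balancing) does not close it. The paper closes it concretely: at a critical point with $p_\theta \neq 0$, the phases split into $M$ phases at $(\Psi+\pi)\bmod 2\pi$ and $N-M$ at $\Psi \bmod 2\pi$; the Hessian of $U$ there takes the form
\begin{equation}
H(\pmb{\theta}) = \frac{1}{N}\pmb{w}\pmb{w}^T + |p_\theta|\,\mathrm{diag}(\pmb{w}),
\end{equation}
with $\pmb{w}$ a $\pm 1$ vector encoding the two clusters. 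The diagonal entries $(1/N)+|p_\theta| > 0$ force at least one positive eigenvalue, while the test vector $\pmb{q}$ with $q_{N-1}=-1$, $q_N = 1$, zeros elsewhere, satisfies $\pmb{w}^T\pmb{q}=0$ and $\pmb{q}^T H \pmb{q} = -2|p_\theta| < 0$, so $H$ is indefinite and every such point (for $1 \le M \le N-1$) is a saddle, the case $M=0$ being the synchronized global maximum. Your instinct that a linearization is needed is correct — for a gradient-like flow that linearization \emph{is} the Hessian of $U$ — but without this computation the conclusion of the theorem is not established.

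A second, smaller slip: in your LaSalle step you assert that on $\{\dot{U}=0\}$ "each $T_k$ must vanish." That inference is valid only when $K_k > 0$ for all $k$ (nonnegative summands summing to zero), which is why the paper invokes LaSalle only in the restricted-sufficiency branch. Under the weaker aggregate condition $\sum_k K_k T_k > 0$ with mixed-sign gains, the set $\{\sum_k K_k T_k = 0\}$ need not reduce to critical points of $U$, so the two hypotheses of the theorem require the two separate treatments the paper gives them: a direct Lyapunov descent argument for the aggregate condition, and the LaSalle-plus-Hessian analysis for $K_k > 0,\ \forall k$. You acknowledge the distinction at the end of your proposal, but the argument as written blends them.
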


\begin{proof}
Consider the potential function $U(\pmb{\theta})$ defined by \eqref{potential function}. Since the magnitude of the average linear momentum $|p_\theta|$ in \eqref{phase_order_parameter} satisfies $0 \leq |p_\theta| \leq 1$, it ensures that $0 \leq U(\pmb{\theta}) \leq N/2$. Also, the potential $U(\pmb{\theta})$ attains its minimum value only in the balanced formation, that is, $U(\pmb{\theta}) = 0$ only when $p_\theta = 0$. Thus, $U(\pmb{\theta})$ can be used as a Lyapunov function candidate \cite{Khalil2000}.

The time derivative of $U(\pmb{\theta})$, along the dynamics \eqref{modelNew}, is
\begin{equation}
\label{Udot}\dot{U}(\pmb{\theta}) = \sum_{k=1}^{N}\left(\frac{\partial U}{\partial \theta_k}\right) \dot{\theta}_k  = \sum_{k=1}^{N}\left(\frac{\partial U}{\partial \theta_k}\right) u_k.
\end{equation}
Using \eqref{control1} and \eqref{term}
\begin{equation}
\label{U_dot}\dot{U}(\pmb{\theta}) = -\sum_{k=1}^{N}K_k\left(\frac{\partial U}{\partial \theta_k}\right)^2 = -\sum_{k=1}^{N}K_k T_k(\pmb{\theta}),
\end{equation}
which shows that $\dot{U}(\pmb{\theta}) < 0$, if $\sum_{k=1}^{N}K_k T_k(\pmb{\theta}) > 0$. According to the Lyapunov stability theorem \cite{Khalil2000}, all the solutions of \eqref{modelNew} with the control \eqref{control1} asymptotically stabilize to the relative equilibrium where $U(\pmb{\theta})$ attains its minimum value, that is, at $p_\theta = 0$ (balanced formation).

The restricted sufficiency condition is proved next. Note that the term $T_k(\pmb{\theta}) \geq 0$ for all $k = 1, \ldots, N$, which ensures that $\dot{U}(\pmb{\theta}) \leq  0$ for $K_k > 0, \forall k$. Moreover, $\dot{U}(\pmb{\theta}) = 0$ if and only if $({\partial U}/{\partial \theta_k}) = 0, \forall k$, which defines the critical points of $U(\pmb{\theta})$. The critical set of $U(\pmb{\theta})$ is the set of all $\pmb{\theta} \in \mathbb{T}^N$, for which $({\partial U}/{\partial \theta_k}) = 0,~\forall k$. Since $\pmb{\theta} \in \mathbb{T}^N$ is compact, it follows from the LaSalle's invariance theorem \cite{Khalil2000} that all solutions of \eqref{modelNew}, under control \eqref{control1}, converge to the largest invariant set contained in $\{\dot{U}(\pmb{\theta}) = 0\}$, that is, the set
\begin{equation}
\Lambda = \left\{\pmb{\theta}~|~ ({\partial U}/{\partial \theta_k}) = \left<p_\theta, ie^{i\theta_k}\right> = 0,~\forall k\right\},
\end{equation}
which is the critical set of $U(\pmb{\theta})$. In this set, dynamics \eqref{modelNew2} reduces to $\dot{\theta}_k = 0, \forall k$, which implies that all the agents move in a straight line. The set $\Lambda$ is itself invariant since
\begin{align}
\nonumber \frac{d}{dt}\left<p_\theta, ie^{i\theta_k}\right> &= \left<p_\theta, \frac{d(ie^{i\theta_k})}{dt}\right> + \left<\frac{dp_\theta}{dt}, ie^{i\theta_k}\right>\\
&= -\left<p_\theta, e^{i\theta_k}\right>\dot{\theta}_k + \frac{1}{N}\left<\sum_{k=1}^{N} ie^{i\theta_k} \dot{\theta}_k, ie^{i\theta_k}\right> = 0
\end{align}
on this set. Therefore, all the trajectories of the system \eqref{modelNew} under control \eqref{control1} asymptotically converges to the critical set of  $U(\pmb{\theta})$. Moreover, the balanced state characterizes the stable equilibria of the system \eqref{modelNew} in the critical set $\Lambda$ and the rest of the critical points are unstable equilibria, which is proved next.

{\it Analysis of the critical set}: The critical points of $U(\pmb{\theta})$ are given by the $N$ algebraic equations
\begin{equation}
\label{critical}\frac{\partial U}{\partial \theta_k} = \left<p_\theta, ie^{i\theta_k}\right> =  |p_\theta|\sin(\Psi-\theta_k) = 0,~~1\leq k \leq N,
\end{equation}
where, $p_\theta = |p_\theta|e^{i\Psi}$, as defined in \eqref{phase_order_parameter}, has been used. Since the critical points with $p_\theta = 0$ are the global minima of $U(\pmb{\theta})$, the phase balancing is asymptotically stable if $K_k > 0, \forall k$.

Now, we focus on the critical points for which $p_\theta \neq 0$ and $\sin(\Psi - \theta_k) = 0, \forall k$. This implies that $\theta_k \in \{\Psi~\text{mod}~2\pi, (\Psi + \pi)~\text{mod}~2\pi\}, \forall k$. Let $\theta_k = (\Psi + \pi)~\text{mod}~2\pi$ for $k \in \{1,\ldots,M\}$, and $\theta_k = \Psi~\text{mod}~2\pi$ for $k \in \{M+1,\ldots, N\}$. Note that the value $M=0$ defines synchronized state ($|p_\theta| = 1$) and corresponds to the global maximum of $U(\pmb{\theta})$, and hence unstable if $K_k >0, \forall k$. Every other value of $1 \leq M \leq N-1$ such that $p_\theta \neq 0$ corresponds to the saddle point, and is, therefore, unstable for $K_k >0, \forall k$. This is proved below.

Let $H(\pmb{\theta}) = [h_{jk}(\pmb{\theta})]$ be the Hessian of $U(\pmb{\theta})$. Then, we can find the components $[h_{jk}(\pmb{\theta})]$ of $H(\pmb{\theta})$ by evaluating the second derivatives $\frac{\partial^2 U}{\partial\theta_j \partial\theta_k}$ for all pairs of $j$ and $k$, which yields
\[
    h_{jk}(\pmb{\theta})=
\begin{cases}
    \dfrac{1}{N} - \left<p_\theta, e^{i\theta_k}\right> = \dfrac{1}{N} -|p_\theta|\cos(\Psi-\theta_k), &  j = k\\
    \dfrac{1}{N}\left<e^{i\theta_j}, e^{i\theta_k}\right> = \dfrac{1}{N} \cos(\theta_j-\theta_k),      &  j \neq k.
\end{cases}
\]
Since $\theta_k = (\Psi + \pi)~\text{mod}~2\pi$ for $k \in \{1,\ldots,M\}$, and $\theta_k = \Psi~\text{mod}~2\pi$ for $k \in \{M+1,\ldots, N\}$, $\cos(\Psi-\theta_k) = 1$ for $k \in \{1,\ldots,M\}$, and $\cos(\Psi-\theta_k) = -1$ for $k \in \{M+1,\ldots, N\}$. Hence, the diagonal entries ($j = k$) of the Hessian $H(\pmb{\theta})$ are given by
\[
    h_{kk}(\pmb{\theta})=
\begin{cases}
    (1/N) + |p_\theta|, &  k\in\{1, \ldots, M\}\\
    (1/N) - |p_\theta|, & k\in\{M+1, \ldots, N\},
\end{cases}
\]
where, $1 \leq M \leq N-1$. Since $(1/N) + |p_\theta| > 0$, the Hessian matrix $H(\pmb{\theta})$ has at least one positive pivot, and hence one positive eigenvalue \cite{strang2007}. In order to show that all critical points $1 \leq M \leq N-1$ such that $p_\theta \neq 0$, are saddle points, we verify that the Hessian matrix $H(\pmb{\theta})$ is indefinite by showing that it has at least one negative eigenvalue.

Since $\theta_k$ is as given above, $\cos(\theta_j-\theta_k) = 1$ for $j,k\in\{1, \ldots, M\}~\text{or}~j,k\in\{M+1, \ldots, N\}$, and $\cos(\theta_j-\theta_k) = -1$ for $j \in\{1, \ldots, M\}, k\in\{M+1, \ldots, N\}~\text{or}~j \in \{M+1, \ldots, N\}, k \in \{1, \ldots, M\}$. Hence, the off diagonal entries ($j \neq k$) of $H(\pmb{\theta})$ are given by
\[
    h_{jk}(\pmb{\theta})=
\begin{cases}
    (1/N), &  \left.
    \begin{array}{l}
      j,k\in\{1, \ldots, M\}\\
      \text{or}~j,k\in\{M+1, \ldots, N\}
    \end{array}
  \right.\\
    -(1/N), & \text{otherwise}.
\end{cases}
\]
Define a vector $\pmb{w} = [w_1, \ldots, w_M, -w_{M+1}, \ldots, -w_N]^T$, with $w_k = 1, \forall k$. Then, the Hessian $H(\pmb{\theta})$ can be written in a compact form as
\begin{equation}
H(\pmb{\theta}) = \frac{1}{N}\pmb{w}\pmb{w}^T + |p_\theta|\text{diag}(\pmb{w}),
\end{equation}
where, $\text{diag}(\pmb{w})$ is a diagonal matrix whose diagonal entries are given by the entries of the vector $\pmb{w}$. Now, define a vector $\pmb{q} = [q_1, \ldots, q_N]^T$ with $q_k = 0, k = {1, \ldots, N-2}$, and $q_{N-1} = -1$ and $q_{N} = 1$. By construction, $\pmb{w}^T\pmb{q} = 0$ and hence,
\begin{equation}
\pmb{q}^TH(\pmb{\theta})\pmb{q} = |p_\theta|\pmb{q}^T\text{diag}(\pmb{w})\pmb{q} = -2|p_\theta| < 0,
\end{equation}
which shows that $H(\pmb{\theta})$ is an indefinite matrix. Hence, the critical points satisfying $\sin(\Psi-\theta_k) = 0, \forall k$, along with $p_\theta \neq 0$ are saddle points and are unstable when $K_k > 0, \forall k$. This completes the proof.
\end{proof}

It is evident in the Theorem~\ref{Theorem1} that the condition $\sum_{k=1}^{N}K_k T_k(\pmb{\theta}) > 0$ is yet satisfied if at least one of the heterogeneous gains is non-zero and positive, and all other gains are zero. However, if the control gains are zero for more than a certain number of agents, no balanced formation may be achieved under the control law \eqref{control1}. For instance consider the case of $N=3$, suppose the control gains are zero for the agents 1 and 2, and is positive for the agent 3. In this situation, since the control force \eqref{control1} is zero for the agents 1 and 2, these agents will keep on moving in the directions of initial heading angles $\theta_1(0)$, and $\theta_2(0)$, respectively, and hence, phase balancing of agents may not be achievable. Nonetheless, if the heterogeneous gains are zero at most for $\lfloor N/2 \rfloor$ agents, where, $\lfloor N/2 \rfloor$ is the largest integer less than or equal to $N/2$, balanced formation can be achieved under the control law \eqref{control1}. This is proved in the following corollary.

\begin{cor}\label{cor0}
For the conditions given in Theorem~\ref{Theorem1}, if the heterogenous control gains $K_k,~k = 1, \ldots, N$, are zero at most for $\lfloor N/2 \rfloor$ agents and positive for rest of the agents, balanced formation is asymptotically stable under the control law \eqref{control1}.
\end{cor}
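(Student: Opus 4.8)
The plan is to reuse the potential $U(\pmb\theta)$ of \eqref{potential function} as a Lyapunov function and to trace exactly where the vanishing gains enter the argument of Theorem~\ref{Theorem1}. Write $\mathcal{P}=\{k:K_k>0\}$ and $\mathcal{Z}=\{k:K_k=0\}$, so that $|\mathcal{Z}|\le\lfloor N/2\rfloor$ and hence $|\mathcal{P}|=N-|\mathcal{Z}|\ge\lceil N/2\rceil$. First I would recompute $\dot U$ along \eqref{modelNew}--\eqref{control1}. Because $u_k=0$ for $k\in\mathcal{Z}$, equation \eqref{U_dot} reduces to $\dot U(\pmb\theta)=-\sum_{k\in\mathcal{P}}K_k\,(\partial U/\partial\theta_k)^2\le0$, with equality iff $\partial U/\partial\theta_k=0$ for every $k\in\mathcal{P}$. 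Applying LaSalle's theorem on the compact set $\mathbb{T}^N$ as in Theorem~\ref{Theorem1}, all trajectories converge to the largest invariant set inside $\{\dot U=0\}$; on that set $\dot\theta_k=0$ for all $k$ (for $k\in\mathcal{P}$ because $\partial U/\partial\theta_k=0$, and for $k\in\mathcal{Z}$ because $K_k=0$), so it is precisely the set of relative equilibria with $|p_\theta|\sin(\Psi-\theta_k)=0$ for all $k\in\mathcal{P}$.

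Next I would split this invariant set into its balanced part ($p_\theta=0$) and its non-balanced part ($p_\theta\ne0$, which forces $\theta_k\in\{\Psi,\Psi+\pi\}$ for every $k\in\mathcal{P}$), and this is where the hypothesis $|\mathcal{Z}|\le\lfloor N/2\rfloor$ is used. The uncontrolled agents keep their initial phases and contribute a fixed vector $c=\sum_{k\in\mathcal{Z}}e^{i\theta_k}$ to $Np_\theta$; since $|c|\le|\mathcal{Z}|\le\lfloor N/2\rfloor\le\lceil N/2\rceil\le|\mathcal{P}|$, the controlled phases can be chosen so that $\sum_{k\in\mathcal{P}}e^{i\theta_k}=-c$, i.e. $p_\theta=0$. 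Thus the balanced set is nonempty and, being the global minimum of $U$, I would argue it is asymptotically stable exactly as in Theorem~\ref{Theorem1}.

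The heart of the proof is to show that every non-balanced equilibrium is unstable, which I would do by restricting the Hessian $H(\pmb\theta)$ to the coordinates in $\mathcal{P}$ (the only phases that evolve). The computation in Theorem~\ref{Theorem1} carries over verbatim and gives $H_{\mathcal{P}\mathcal{P}}=\tfrac1N\pmb w\pmb w^T+|p_\theta|\,\text{diag}(\pmb w)$, where $w_k=+1$ if $\theta_k=\Psi+\pi$ and $w_k=-1$ if $\theta_k=\Psi$ for $k\in\mathcal{P}$. The crucial remark, again furnished by the bound, is that the only configuration making this submatrix positive definite, namely $\pmb w=\pmb1$ (all controlled agents at $\Psi+\pi$), cannot occur with $p_\theta\ne0$: it would force $c=(N|p_\theta|+|\mathcal{P}|)e^{i\Psi}$ and hence $|c|>|\mathcal{P}|$, contradicting $|c|\le|\mathcal{P}|$. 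Consequently at least one $k\in\mathcal{P}$ has $\theta_k=\Psi$, and choosing $\pmb q\perp\pmb w$ with negative weight on $\text{diag}(\pmb w)$ (the same construction as in Theorem~\ref{Theorem1}, with the $|\mathcal{P}|=2$ sub-case settled by a direct $2\times2$ determinant check) yields $\pmb q^TH_{\mathcal{P}\mathcal{P}}\pmb q=|p_\theta|\,\pmb q^T\text{diag}(\pmb w)\pmb q<0$. Hence $H_{\mathcal{P}\mathcal{P}}$ has a descent direction and the equilibrium is a saddle or maximum, so the balanced set attracts all trajectories off the stable manifolds of these unstable points.

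I expect the main obstacle to be exactly this exclusion of stable non-balanced equilibria rather than the routine Lyapunov/LaSalle bookkeeping: a priori the frozen uncontrolled agents could bias $p_\theta$ so strongly that the controlled agents can only minimize $U$ at a nonzero $|p_\theta|$, producing an attracting non-balanced formation, and the bound $|\mathcal{Z}|\le\lfloor N/2\rfloor$ is precisely what rules this out, so the argument must make that dependence explicit in both the reachability and the Hessian steps. One minor separate case is $|\mathcal{P}|=1$, possible only for $N\le2$, where the scalar restricted Hessian must be checked directly; I would dispatch it by noting that the bound again forbids the single controlled agent from settling at $\Psi+\pi$, leaving only the synchronized equilibrium at $\Psi$ with $|p_\theta|=1>1/N$, which is a maximum and hence unstable.
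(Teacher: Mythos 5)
Your proposal is correct, and it is in fact considerably more careful than the paper's own proof, which is a five-line reduction to Theorem~\ref{Theorem1}: the paper only observes that the non-balanced critical points consist of two antipodal clusters of sizes $M$ and $N-M$, claims that by symmetry it suffices to consider $1\le M < N/2$ (the clusters ``replicate''), and invokes the saddle-point analysis of Theorem~\ref{Theorem1}. The paper never addresses the two issues your proof is built around: (i) with $K_k=0$ the frozen agents need not sit at $\Psi$ or $\Psi+\pi$, so the closed-loop equilibrium set is strictly larger than the critical set of $U(\pmb{\theta})$ and is determined only by stationarity in the controlled coordinates; and (ii) instability must be certified by a descent direction supported on the controlled coordinates alone, i.e., by the restricted Hessian $H_{\mathcal{P}\mathcal{P}}$, not by the full Hessian. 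Your $\mathcal{P}/\mathcal{Z}$ decomposition, the reachability step (the balanced set is nonempty because $|c|\le|\mathcal{Z}|\le\lfloor N/2\rfloor\le|\mathcal{P}|$), and the exclusion of the all-at-$\Psi+\pi$ configuration are precisely the places where the hypothesis on the number of zero gains enters, and none of these appears explicitly in the paper; what the paper's route buys is brevity, while yours buys an argument that actually survives the presence of frozen agents. One small repair to your Hessian step: when exactly one controlled agent is at $\Psi$ and $a=|\mathcal{P}|-1\ge 2$ controlled agents are at $\Psi+\pi$, the verbatim construction of Theorem~\ref{Theorem1} (entries $\pm 1$ on two agents at $\Psi$) is unavailable, and this case is not covered by your $|\mathcal{P}|=2$ determinant check as stated. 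Either take $\pmb{q}$ with weight $a$ on the $\Psi$-agent and weight $1$ on each $\Psi+\pi$-agent, so that $\pmb{w}^T\pmb{q}=0$ and $\pmb{q}^T\text{diag}(\pmb{w})\pmb{q}=a-a^2<0$, or note that your $2\times 2$ determinant check applies to the principal submatrix formed by the $\Psi$-agent and any one $\Psi+\pi$-agent,
\begin{equation}
\det\begin{pmatrix} \frac{1}{N}+|p_\theta| & -\frac{1}{N}\\[2pt] -\frac{1}{N} & \frac{1}{N}-|p_\theta| \end{pmatrix} = -|p_\theta|^2<0,
\end{equation}
which settles every ``exactly one agent at $\Psi$'' case uniformly, since a principal submatrix with a negative eigenvalue forces $H_{\mathcal{P}\mathcal{P}}$ to have one as well.
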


\begin{proof}
From \eqref{critical}, since the critical points where $p_\theta \neq 0$ and $\sin(\Psi - \theta_k) = 0, \forall k$, are characterized by $M$ synchronized phases at $(\Psi + \pi)~\text{mod}~2\pi$, and $N-M$ synchronized phases at $\Psi~\text{mod}~2\pi$, with $1 \leq M \leq N-1$, the phases $\theta_k$ necessarily lie in one of two clusters that are on opposite sides of the unit circle. Moreover, all of the phases within each cluster are identical. The clue of the proof lies in the fact that, unlike analyzing critical point for each $1 \leq M \leq N-1$, it is sufficient to check only for $1 \leq  M  <  N/2$ since, for $N/2 <  M \leq N-1$, the clusters of the phases $\theta_k$, on the unit circle, replicates. Since $N/2 \leq N-1$, it follows from Theorem~\ref{Theorem1} that every $1 \leq M  <  N/2$ corresponds to a saddle point. Therefore, if the heterogeneous gains are zero for the rest $\lfloor N/2 \rfloor$ agents, balanced formation is asymptotically stable. This completes the proof.
\end{proof}

Next, we state a corollary, which ensures the stabilization of agents in the balanced formation when they move at an angular velocity $\omega_0$ around individual circular orbits.

\begin{cor}\label{cor1}
Theorem~\ref{Theorem1} holds for the system dynamics \eqref{modelNew}, under the control law, given by
\begin{equation}
\label{control2}u_k = \omega_0 - K_k\left(\frac{\partial U}{\partial \theta_k}\right)
\end{equation}
for all $k=1, \ldots, N$.
\end{cor}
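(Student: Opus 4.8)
The plan is to reduce the augmented control law \eqref{control2} to the setting of Theorem~\ref{Theorem1} by exploiting the rotational invariance of the potential $U(\pmb{\theta})$ in \eqref{potential function}. The essential observation is that $U$ depends on the phases only through the magnitude $|p_\theta|$ of the order parameter \eqref{phase_order_parameter}, and $|p_\theta|$ is unchanged under a uniform rotation $\theta_k \mapsto \theta_k + \alpha$ of all phases, since such a rotation sends $p_\theta \mapsto e^{i\alpha}p_\theta$. Consequently the drift term $\omega_0$, which advances every phase at the same rate, should contribute nothing to the decay of $U$, and I expect the entire Lyapunov/LaSalle analysis of Theorem~\ref{Theorem1} to survive unchanged.

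The direct route is to differentiate $U$ along \eqref{modelNew} after substituting \eqref{control2}, which gives
\begin{equation*}
\dot{U}(\pmb{\theta}) = \omega_0\sum_{k=1}^{N}\frac{\partial U}{\partial \theta_k} - \sum_{k=1}^{N}K_k T_k(\pmb{\theta}).
\end{equation*}
The key step is then the identity $\sum_{k}\partial U/\partial\theta_k = 0$, which follows directly from \eqref{critical} since $\sum_k\left<p_\theta, ie^{i\theta_k}\right> = \left<p_\theta, iNp_\theta\right> = N\,\text{Re}(i|p_\theta|^2) = 0$. Hence $\dot{U}(\pmb{\theta}) = -\sum_k K_k T_k(\pmb{\theta})$, which is \emph{identical} to the expression in the proof of Theorem~\ref{Theorem1}. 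Every subsequent argument---negativity of $\dot{U}$ under $\sum_k K_k T_k(\pmb{\theta})>0$, the LaSalle/critical-set analysis, and the Hessian computation classifying the non-balanced critical points as unstable saddles---therefore carries over verbatim.

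Equivalently, and perhaps more transparently, I would pass to the rotating coordinates $\phi_k = \theta_k - \omega_0 t$. Then $\dot\phi_k = -K_k(\partial U/\partial\theta_k)$, while $\partial U/\partial\theta_k = \partial U/\partial\phi_k$ (the two phases differ at each fixed $t$ by a quantity independent of $k$, so the Jacobian is the identity) and $|p_\theta| = |p_\phi|$ by rotational invariance, whence $U(\pmb{\theta}) = U(\pmb{\phi})$. The $\phi$-dynamics are thus \emph{exactly} system \eqref{modelNew} under the control law \eqref{control1}, to which Theorem~\ref{Theorem1} applies directly, yielding $p_\phi \to 0$ and hence $p_\theta \to 0$. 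In the original frame the balanced configuration is now a relative equilibrium in which all agents circulate at the common angular rate $\omega_0$, matching the statement preceding the corollary.

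The one point requiring care---the main obstacle---is confirming that the drift does not disturb the invariant-set part of the argument. In the rotating-frame formulation this is automatic, since the $\phi$-system is autonomous and is literally the Theorem~\ref{Theorem1} system. In the direct formulation one must recheck that the set $\Lambda = \{\pmb{\theta}~|~\left<p_\theta, ie^{i\theta_k}\right> = 0,~\forall k\}$ stays invariant when $\dot\theta_k = \omega_0$ on $\Lambda$ (rather than $\dot\theta_k = 0$ as in Theorem~\ref{Theorem1}); a short computation shows the two $\omega_0$-proportional terms arising in $\frac{d}{dt}\left<p_\theta, ie^{i\theta_k}\right>$ cancel, again by the same rotational symmetry, so invariance is preserved and the conclusion follows.
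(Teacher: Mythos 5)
Your proposal is correct, and your primary (direct) route is exactly the paper's own proof: the paper likewise computes $\dot{U}$ under \eqref{control2} and kills the drift term via the identity $\sum_{k=1}^{N}\partial U/\partial\theta_k = 0$, concluding that \eqref{Udot_final} coincides with \eqref{U_dot} so that Theorem~\ref{Theorem1} applies unchanged. The only cosmetic difference in that step is the mechanism: the paper writes $\sum_k \partial U/\partial\theta_k = \frac{1}{N}\sum_k\sum_j \sin(\theta_j-\theta_k)$ and invokes antisymmetry of the summand, whereas you compute $\sum_k\left<p_\theta, ie^{i\theta_k}\right> = N\,\mathrm{Re}\left(i|p_\theta|^2\right) = 0$ directly from the inner-product structure --- equivalent elementary verifications of the same fact. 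Where you go beyond the paper is twofold, and both additions are sound and worthwhile. First, your rotating-coordinates argument ($\phi_k = \theta_k - \omega_0 t$) is a genuinely different and cleaner route: the $\phi$-system is \emph{literally} the Theorem~\ref{Theorem1} system, so nothing needs re-verification; the paper does not use this device here, though it invokes precisely this change of frame later (Section~III, Case~2) when extending the reachability results to $\omega_0 \neq 0$. Second, you correctly flag that the paper's terse conclusion glosses over a point: in Theorem~\ref{Theorem1} the invariance of $\Lambda$ was computed with $\dot\theta_k = 0$ on that set, whereas under \eqref{control2} the on-set dynamics are $\dot\theta_k = \omega_0$, so invariance must be rechecked; your observation that the two $\omega_0$-proportional terms in $\frac{d}{dt}\left<p_\theta, ie^{i\theta_k}\right>$ cancel (one from $\dot p_\theta = i\omega_0 p_\theta$, one from $\frac{d}{dt}(ie^{i\theta_k}) = -\omega_0 e^{i\theta_k}$) closes this small gap that the paper leaves implicit.
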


\begin{proof}
Under the control \eqref{control2}, the time derivative of $U(\pmb{\theta})$ along the dynamics \eqref{modelNew} is
\begin{equation}
\label{U_dot_New}\dot{U}(\pmb{\theta}) = \omega_0\sum_{k=1}^{N}\frac{\partial U}{\partial \theta_k} - \sum_{k=1}^{N}K_k\left(\frac{\partial U}{\partial \theta_k}\right)^2
\end{equation}

Note that
\begin{equation}
\label{relation}\sum_{k=1}^{N} \frac{\partial U}{\partial \theta_k} = \sum_{k=1}^{N}\left<p_\theta, ie^{i\theta_k}\right> = \frac{1}{N}
\sum_{k=1}^{N}\sum_{j=1}^{N} \sin(\theta_j - \theta_k) = 0
\end{equation}

Using \eqref{relation}, \eqref{U_dot_New} can be rewritten as
\begin{equation}
\label{Udot_final}\dot{U}(\pmb{\theta}) = -\sum_{k=1}^{N}K_k T_k(\pmb{\theta}),
\end{equation}
which is the same as \eqref{Udot}. Therefore, the conclusions of Theorem~\ref{Theorem1} are unchanged under control \eqref{control2}.
\end{proof}

%%%%%%%%%%%%%%%%%%%%%%%%%%%%%%%%%%%%%%%%%%%%%%%%%%%%%%%%%%%%%%%%%%%%%%%%%%%%%%%%%%%%%%%%%%%%%%%%%%%%%%%%%%%%%%%%%%%%%%%%%%%%%%%%%%%%%%%%%%%%%%%%%%%%%%%%%%%%%%%%%%%%%%%%

\begin{cor}\label{cor2}
Under the conditions given in Theorem~\ref{Theorem1}, the magnitude $|p_{\theta}|$ of the phase order parameter $p_{\theta}$, given by \eqref{phase_order_parameter}, is strictly decreasing with time if $\sum_{k=1}^{N}K_k T_k(\pmb{\theta}) > 0$, and non-increasing with time if $K_k > 0, \forall k$.
\end{cor}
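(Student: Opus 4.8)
The plan is to exploit the explicit algebraic link between the Lyapunov function and the modulus of the phase order parameter that is already built into \eqref{potential function}, namely $|p_\theta|^2 = (2/N)\,U(\pmb{\theta})$, and then to transfer the sign information on $\dot{U}$ derived in the proof of Theorem~\ref{Theorem1} into a monotonicity statement for $|p_\theta|$. Since the heavy lifting (the expression for $\dot{U}$ and the sign of $\sum_k K_k T_k$) has already been done, this corollary is essentially a one-line consequence, and the only real care needed is in handling the non-smoothness of $|p_\theta|$ at instants where $p_\theta = 0$.

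First I would recall from \eqref{U_dot} that along the trajectories of \eqref{modelNew} under the control law \eqref{control1}, $\dot{U}(\pmb{\theta}) = -\sum_{k=1}^{N} K_k T_k(\pmb{\theta})$. Differentiating the identity $|p_\theta|^2 = (2/N)\,U(\pmb{\theta})$ with respect to time then gives
\begin{equation}
\frac{d}{dt}|p_\theta|^2 = \frac{2}{N}\dot{U}(\pmb{\theta}) = -\frac{2}{N}\sum_{k=1}^{N} K_k T_k(\pmb{\theta}).
\end{equation}
This single relation carries all the information we need, because the sign of its right-hand side is exactly what Theorem~\ref{Theorem1} controls.

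Next I would split into the two cases of the statement. If $\sum_{k=1}^{N} K_k T_k(\pmb{\theta}) > 0$, then $\frac{d}{dt}|p_\theta|^2 < 0$, so $|p_\theta|^2$ is strictly decreasing in time; since the map $s \mapsto \sqrt{s}$ is strictly increasing on $[0,\infty)$ and $|p_\theta| = \sqrt{|p_\theta|^2}$, the composition $|p_\theta|$ is strictly decreasing. If instead $K_k > 0$ for all $k$, then because $T_k(\pmb{\theta}) = (\partial U/\partial\theta_k)^2 \geq 0$ we have $\sum_{k=1}^{N} K_k T_k(\pmb{\theta}) \geq 0$, hence $\frac{d}{dt}|p_\theta|^2 \leq 0$, and the same monotone-composition argument shows $|p_\theta|$ is non-increasing.

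The one subtle point I expect to be the main obstacle is that $|p_\theta|$ fails to be differentiable precisely at times when $p_\theta = 0$, so one cannot naively write $\frac{d}{dt}|p_\theta| = \tfrac{1}{2|p_\theta|}\frac{d}{dt}|p_\theta|^2$ and divide by zero. I would sidestep this entirely by establishing monotonicity at the level of the smooth quantity $|p_\theta|^2$ and then composing with the strictly increasing square-root function, rather than computing $\frac{d}{dt}|p_\theta|$ pointwise. With this bookkeeping in place there is no substantive difficulty, and the proof is complete.
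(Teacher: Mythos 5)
Your proof is correct, and its core coincides with the paper's: both rest on the identity $U(\pmb{\theta}) = \frac{N}{2}|p_\theta|^2$ together with $\dot{U}(\pmb{\theta}) = -\sum_{k=1}^{N} K_k T_k(\pmb{\theta})$ from \eqref{U_dot}. Where you genuinely diverge is in handling the degenerate point $p_\theta = 0$, which you correctly identified as the only subtlety. The paper writes $\dot{U} = N|p_\theta|\,\frac{d|p_\theta|}{dt}$ and then \emph{divides} by $|p_\theta|$, justifying the division by appealing to Theorem~\ref{Theorem1}: since convergence to the balanced set is only asymptotic, $p_\theta \neq 0$ at all intermediate times, so one may write $\frac{d|p_\theta|}{dt} = -\frac{1}{N|p_\theta|}\sum_{k=1}^{N} K_k T_k(\pmb{\theta})$ and read the sign off directly. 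You never divide: you establish monotonicity of the smooth quantity $|p_\theta|^2$ and then compose with the strictly increasing square root. Your route is slightly more robust, since it does not need the auxiliary claim that the trajectory never actually reaches $p_\theta = 0$ in finite time --- a claim the paper asserts with only a brief justification (implicitly it rests on the fact that $U$ cannot attain its minimum in finite time along a unique solution). In exchange, the paper's route yields an explicit pointwise formula for $\frac{d|p_\theta|}{dt}$ itself, which is marginally more informative; but as a proof of the stated monotonicity, your version is cleaner and fully rigorous.
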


\begin{proof}
The time derivative of \eqref{potential function} yields
\begin{equation}
\dot{U}(\pmb{\theta}) = N|p_\theta|\frac{d|p_\theta|}{dt},
\end{equation}
which by using \eqref{U_dot} can be rewritten as
\begin{equation}
\label{decresing}|p_\theta|\frac{d|p_\theta|}{dt} = -\frac{1}{N}\sum_{k=1}^{N} K_kT_k(\pmb{\theta}).
\end{equation}
According to Theorem~\ref{Theorem1}, since the agents asymptotically stabilize to a balanced formation, that is, $p_\theta \rightarrow 0$ as $t \rightarrow \infty$, $p_\theta \neq 0$ for all intermediate times, and hence, \eqref{decresing} can be rewritten as
\begin{equation}
\frac{d|p_\theta|}{dt} = -\frac{1}{N|p_\theta|}\sum_{k=1}^{N} K_kT_k(\pmb{\theta}),
\end{equation}
which implies that ${d|p_\theta|}/{dt} < 0$ (i.e, $|p_\theta|$ is strictly decreasing) whenever $\sum_{k=1}^{N}K_k T_k(\pmb{\theta}) > 0$, and ${d|p_\theta|}/{dt} \leq 0$ (i.e, $|p_\theta|$ is non-increasing) if $K_k > 0, \forall k$. This completes the proof.
\end{proof}

%%%%%%%%%%%%%%%%%%%%%%%%%%%%%%%%%%%%%%%%%%%%%%%%%%%%%%%%%%%%%%%%%%%%%%%%%%%%%%%%%%%%%%%%%%%%%%%%%%%%%%%%%%%%%%%%%%%%%%%%%%%%%%%%%%%%%%%%%%%%%%%%%%

\subsection{Problem Description}

Now, we formally state the main objective of this paper. The control law, given by \eqref{control2}, can be written as
\begin{equation}
\label{control3}\dot{\theta_k} = \omega_0 - \frac{K_k}{N} \sum_{j=1}^{N} \sin(\theta_j - \theta_k).
\end{equation}
The term $K_k$ is the control gain of the $k^\text{th}$ agent. Prior work in \cite{Sepulchre2007} uses the same control gain $K_k = K,\forall k$, whereas we extend the analysis by using different gains $K_k$ for different agents. This is the heterogeneous control gains case of interest to us in this paper. Unlike the case of synchronization in \cite {Jain2016}, the analysis for $N$ agents is quite involved in the case of balanced formation. Therefore, in this work, the analytical results for heterogeneous gains are given mainly for two and three-agent systems. However, the results for $N$ agents, have been presented though simulations.

\begin{remark}
Note that, in the Theorem~\ref{Theorem1}, the conditions $\sum_{k=1}^{N}K_k T_k(\pmb{\theta}) > 0 $ may be satisfied for both positive and negative values of gains $K_k$ because of the involvement of the term $T_k(\pmb{\theta})$. However, in this paper, the idea of introducing heterogeneous gains is illustrated mainly for the restrictive sufficient condition on $K_k$, that is, $K_k > 0, \forall k$, since the analysis for the set of gains $K_k$ satisfying $\sum_{k=1}^{N}K_k T_k(\pmb{\theta}) > 0$ is quite involved for $N > 2$. Moreover, it will be shown for $N=2$ that the reachable set of the velocity directions of the agents in balanced formation further expands for the controller gains $K_k$ satisfying the condition $\sum_{k=1}^{N}K_k T_k(\pmb{\theta}) > 0 $.
\end{remark}

%%%%%%%%%%%%%%%%%%%%%%%%%%%%%%%%%%%%%%%%%%%%%%%%%%%%%%%%%%%%%%%%%%%%%%%%%%%%%%%%%%%%%%%%%%%%%%%%%%%%%%%%%%%%%%%%%%%%%%%%%%%%%%%%%%
\begin{figure*}
\centering
\subfigure[]{\includegraphics[scale=0.65]{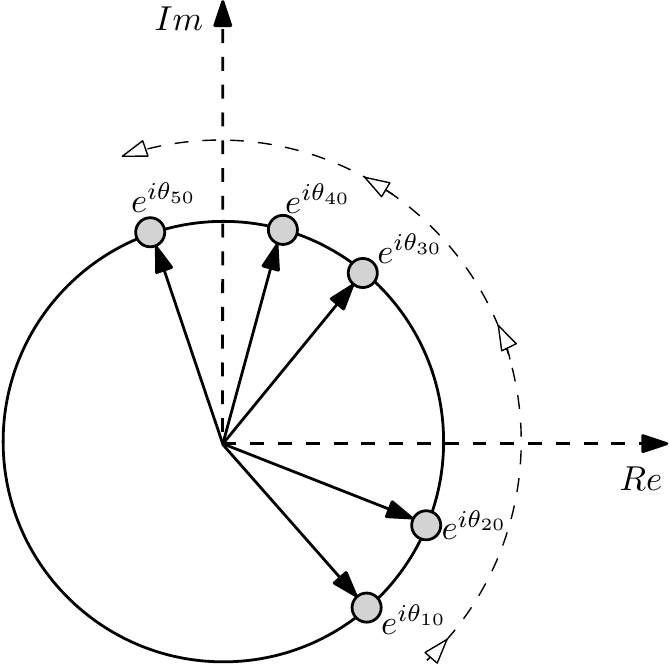}}\hspace{1cm}
\subfigure[]{\includegraphics[scale=0.65]{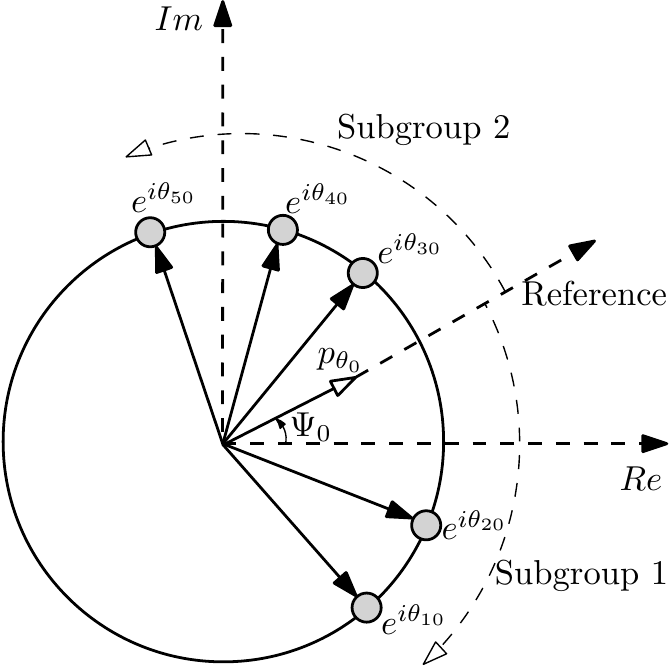}}\hspace{1cm}
\subfigure[]{\includegraphics[scale=0.65]{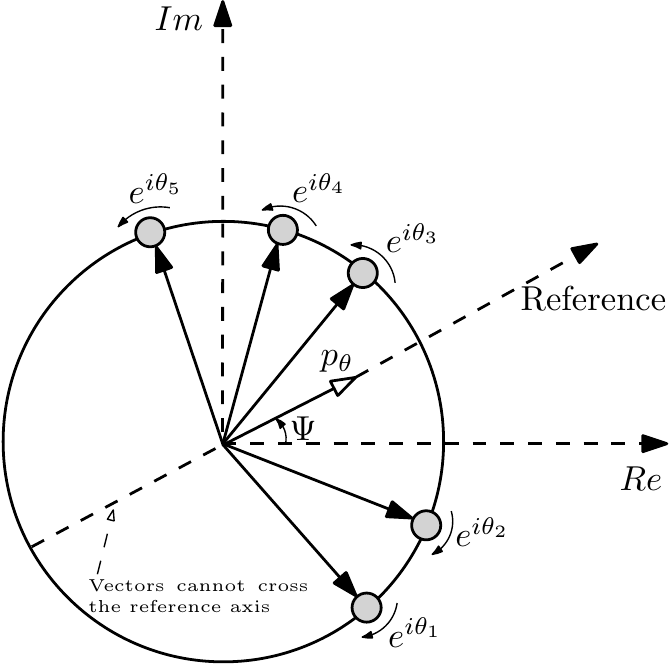}}
\caption{Representation of the velocity vectors of the agents around the unit circle for $N=5$. $(a)$ Cyclic arrangement of the initial unit velocity vectors $e^{i\theta_{k0}}, k = 1, \ldots, 5$. $(b)$ Distribution of vectors in two subgroups with respect to the initial phase order parameter vector $p_{\theta_0}$. $(c)$  Dynamics of the unit vectors at a particular instant in time $t$ against heterogeneous controller gains $K_k, k = 1, \ldots, 5$, given according to the Assumption~\ref{assumption}}.
\label{Dynamics of agents}
\end{figure*}
%%%%%%%%%%%%%%%%%%%%%%%%%%%%%%%%%%%%%%%%%%%%%%%%%%%%%%%%%%%%%%%%%%%%%%%%%%%%%%%%%%%%%%%%%%%%%%%%%%%%%%%%%%%%%%%%%%%%%%%%%%%%%%%%%%

\section{Reachable Velocity Directions}
The velocity directions that are achievable in balanced formation of agents for different values of the heterogeneous controller gains are called the reachable velocity directions.

In this section, we analyze the reachability condition of the velocity directions of agents in balanced formation against heterogeneous controller gains $K_k > 0, \forall k$. In particular, it will be shown for $N \in \{2, 3\}$ that a desired arrangement of velocity vectors of the agents in balanced formation can be obtained by suitably choosing the heterogeneous control gains.

At first, we discuss the results for $\omega_0 = 0$. Then, we extend these results to $\omega_0 \neq 0$ by performing the analysis in a rotating frame of reference.

\subsection{Case~1: $\omega_0 = 0$}
For $\omega_0 = 0$, the control law, given by \eqref{control3}, can be written as
 \begin{equation}
\label{control3_1}\dot{\theta_k} = - \frac{K_k}{N} \sum_{j=1}^{N} \sin(\theta_j - \theta_k).
\end{equation}

Let the agents, with dynamics given by \eqref{modelNew}, start from initial heading angles $\pmb{\theta}(0) = [\theta_{10}, \ldots, \theta_{N0}]^T \in (-\pi, \pi)^N$, where, $\theta_{10} < \theta_{20} < \ldots < \theta_{N0}$. This ensures that the initial velocity vectors $e^{i\theta_{10}}, \ldots, e^{i\theta_{N0}}$ of the agents, can be arranged on the unit circle (in a complex plane) in a cyclic manner in which the successor of agent $k$ is agent $k+1$ modulo $N$ (counted in the anticlockwise direction). This is illustrated in Fig.~$1(a)$ for the five agents.

For the initial heading angles $\pmb{\theta}(0)$ of the agents as defined above, the initial phase order vector $p_{\theta}(\pmb{\theta}(0))$, from \eqref{phase_order_parameter}, is given by
\begin{equation}
\label{initial_phase_order}p_{\theta}(\pmb{\theta}(0)) = \frac{1}{N}\sum_{k=1}^{N} e^{i\theta_{k0}} \triangleq {p}_{\theta_0} = |{p}_{\theta_0}|e^{i\Psi_0},
\end{equation}
which is the resultant of all equally scaled initial velocity vectors $(1/N)e^{i\theta_{10}}, \ldots, (1/N)e^{i\theta_{N0}}$. Since $0 \leq |{p}_{\theta_0}| \leq 1$, the vector ${p}_{\theta_0} \in S_z$, where, $S_z = \{z \in \mathbb{C}~{\big |}~|z| \leq 1\}$ is the set of all the points residing in the interior and on the boundary of a unit circle in the complex plane, and is shown in Fig.~$1(b)$ for the arrangement of unit vectors in Fig.~$1(a)$.

In order to obtain the results and for the ease of analysis in this paper, we divide the initial velocity vectors $e^{i\theta_{10}}, \ldots, e^{i\theta_{N0}}$ of $N$ agents in two subgroups with respect to the the initial phase order vector ${p}_{\theta_0}$. This is done by choosing a reference axis along the initial phase order parameter ${p}_{\theta_0}$, as shown in Fig.~$1(b)$, and measuring the angle of each unit vector $e^{i\theta_{k0}}, \forall k$, with respect to it. By doing so, let there be $\overline{N} < N$ agents in subgroup~1 for which $0 < \Psi_0 - \theta_{k0} < \pi$ (that is, for unit vectors lying in the clockwise direction of ${p}_{\theta_0}$), where, $k \in \{1,\ldots, \overline{N}\}$, and $N-\overline{N}$ agents in subgroup~2 for which $-\pi < \Psi_0 - \theta_{k0} < 0$ (that is, for unit vectors lying in the anticlockwise direction of ${p}_{\theta_0}$), where, $k \in \{\overline{N}+1,\ldots,N\}$.

Based on these notations, the following mild assumption on the heterogeneous control gains, in addition to all being positive, has been taken into account in order to solve the problem addressed in this paper.

\begin{assumption}\label{assumption}
Corresponding to each agent in a particular subgroup as described above, the heterogeneous controller gains are chosen in a fashion such that they are non-decreasing as we move from the initial phase order parameter vector $p_{\theta_0}$ towards the furthermost initial velocity vector $e^{i\theta_{k0}}$ in individual subgroups. For example, the heterogeneous control gains for the scenario as shown in Fig.~$1(b)$, are chosen such that they satisfy $K_1 \geq K_2 \geq \ldots \geq K_{\overline{N}} \geq 0$ for the agents in subgroup~1 and $0 \leq K_{\overline{N}+1} \leq K_{\overline{N}+2} \leq \ldots \leq K_N$ for the agents in subgroup~2.
\end{assumption}

\begin{remark}
Suppose if $\Psi_0 - \theta_{k0} = 0$ for a particular $k^\text{th}$ agent then, the corresponding controller gain $K_k$ is independently chosen, and can assume any positive value.
\end{remark}

Based on these assumptions and notations, we now state the following lemma which depicts the behavior of final velocity vectors of the agents in balanced formation against heterogeneous control gains.

\begin{lem}\label{lem1}
Consider $N$ agents, with dynamics given by \eqref{modelNew}, under the control law \eqref{control3_1} with heterogeneous control gains $K_k, \forall k$, given according to the Assumption~\ref{assumption}. Let the initial heading angles of the agents be given by $\pmb{\theta}(0)$ such that the initial velocity vectors $e^{i\theta_{10}}, \ldots, e^{i\theta_{N0}}$ are in cyclic order with  $\theta_{10} < \theta_{20} < \ldots < \theta_{N0}$. Then, the cyclic arrangement of agents' velocity vectors, $e^{i\theta_{1f}}, \ldots, e^{i\theta_{Nf}}$, in balanced formation, is retained on the unit circle, where, $\theta_{kf} = \theta_k(t\rightarrow\infty)$, denotes the orientation of the $k^\text{th}$ agent in the steady-state.
\end{lem}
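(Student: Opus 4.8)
The plan is to put the closed loop into a mean-field form and then show that cyclically adjacent phases can never cross, so that the strict cyclic order present at $t=0$ survives for all $t$ and, by continuity of the flow, in the limit $t\to\infty$. First I would simplify the right-hand side of \eqref{control3_1} with the identity $\sum_{j=1}^{N}\sin(\theta_j-\theta_k)=N|p_\theta|\sin(\Psi-\theta_k)$, which is the same reduction that produced \eqref{critical}. This turns the dynamics into
\[
\dot{\theta}_k = K_k|p_\theta|\sin(\theta_k-\Psi),\qquad k=1,\ldots,N,
\]
in which every agent couples to the group only through the common pair $(|p_\theta|,\Psi)$. Writing $\phi_k=\theta_{k+1}-\theta_k$ for the gap between cyclically adjacent agents (indices modulo $N$), the common mean-phase drift $\dot\Psi$ cancels, leaving
\[
\dot\phi_k = |p_\theta|\bigl[K_{k+1}\sin(\theta_{k+1}-\Psi)-K_k\sin(\theta_k-\Psi)\bigr].
\]

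I would then lift the phases to $\mathbb{R}$ so that $\theta_{10}<\cdots<\theta_{N0}<\theta_{10}+2\pi$, whence every gap satisfies $\phi_k(0)>0$ and retaining the cyclic arrangement is equivalent to keeping $\phi_k(t)>0$ for all $k$. Arguing by contradiction, let $t^{*}$ be the first instant at which some gap vanishes; by continuity the vanishing pair is cyclically adjacent, say $\theta_k(t^{*})=\theta_{k+1}(t^{*})=\theta^{*}$, and since $\phi_k>0$ just before $t^{*}$ we must have $\dot\phi_k(t^{*})\le0$. Evaluating the gap equation at the coincidence, where both sine arguments collapse to $\theta^{*}-\Psi$, gives $\dot\phi_k(t^{*})=(K_{k+1}-K_k)\,|p_\theta|\sin(\theta^{*}-\Psi)$. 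The crux is to show this is nonnegative, contradicting $\dot\phi_k(t^{*})\le0$. This is exactly where Assumption~\ref{assumption} enters: for a pair lying inside subgroup~1 the gains are nonincreasing in $k$ while the phases are clockwise of $p_\theta$ (so $\sin(\theta^{*}-\Psi)\le0$), and inside subgroup~2 the gains are nondecreasing while the phases are anticlockwise (so $\sin(\theta^{*}-\Psi)\ge0$); in both cases the two factors share a sign and $\dot\phi_k(t^{*})\ge0$. For the borderline coincidence at $\theta^{*}=\Psi$ (the pair straddling the two subgroups) the first-order term vanishes, and I would instead expand $\dot\phi_k$ to leading order in the gap, where the direct, anti-synchronizing coupling between the two neighbours supplies a strictly positive contribution, again ruling out the crossing.

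The main obstacle is to keep this sign bookkeeping valid throughout the evolution rather than only at $t=0$, because the factor $\sin(\theta^{*}-\Psi)$ is governed by which side of the instantaneous mean phase $\Psi$ the coinciding agents lie, i.e.\ by their subgroup. I would exploit that the mean-field form makes $\Psi$ a repeller for each agent's self-motion---$\dot\theta_k>0$ when $\theta_k>\Psi$ and $\dot\theta_k<0$ when $\theta_k<\Psi$---so the two innermost agents $\theta_{\overline{N}}$ and $\theta_{\overline{N}+1}$ are driven apart and the agent-free arc containing $\Psi$ widens, which keeps $\Psi$ between them and the subgroup partition (hence the gain alignment) intact. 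Turning this picture into a proof is the delicate step, since $\Psi$ itself moves; the critical instant is $\theta_k=\Psi$, where $\dot\theta_k=0$, so one must bound $\dot\Psi$ there to rule out $\Psi$ sweeping past an innermost agent. I would therefore run a single ``first failure time'' bootstrap on the pair of structural properties---positivity of all gaps and subgroup membership relative to $\Psi$---and show that neither can break first. Finally, with $\phi_k(t)>0$ established for all finite $t$ and Theorem~\ref{Theorem1} guaranteeing convergence to a balanced configuration, letting $t\to\infty$ yields $\theta_{1f}\le\cdots\le\theta_{Nf}$; the same anti-synchronizing repulsion keeps the limiting phases distinct, so the cyclic arrangement is retained on the unit circle.
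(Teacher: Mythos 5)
Your mean-field reduction and the sign bookkeeping drawn from Assumption~\ref{assumption} are exactly the mechanism the paper's proof uses (its relation \eqref{theta_dot_new}); the difference is that the paper argues the orderings $\left|\Psi-\theta_1\right|>\cdots>\left|\Psi-\theta_{\overline{N}}\right|$ and $|\dot\theta_1|\ge\cdots\ge|\dot\theta_{\overline{N}}|$ (and their mirror images in subgroup~2) persist for all $t$, while you recast order preservation as a first-crossing contradiction on the adjacent gaps $\phi_k$. The recasting is reasonable, but your central step does not close. At the first coincidence you prove $\dot\phi_k(t^*)\ge 0$ and declare this contradicts $\dot\phi_k(t^*)\le 0$; it does not, since equality satisfies both. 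And equality is not an exotic case: Assumption~\ref{assumption} allows adjacent gains to be equal, in which case your first-order term $(K_{k+1}-K_k)|p_\theta|\sin(\theta^*-\Psi)$ vanishes identically (here one should instead observe that both agents then solve the same scalar non-autonomous ODE $\dot\theta=K|p_\theta(t)|\sin\left(\theta-\Psi(t)\right)$, so uniqueness of solutions forbids a crossing). Moreover, you treat only one of the two borderline pairs: besides the pair $(\overline{N},\overline{N}+1)$ straddling $\Psi$, the wrap-around pair $(N,1)$ can coincide only on the anti-axis $\Psi+\pi$, where again $\sin(\theta^*-\Psi)=0$; the relevant fact there --- used explicitly by the paper --- is that $\sin(\theta_k-\Psi)$ changes sign across the anti-axis, so neither agent can cross it, and this case needs its own argument in your scheme.

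The second gap is one you name but do not fill: preservation of subgroup membership, i.e., of the sign of $\sin(\theta_k-\Psi)$ for each agent. Since $\dot\Psi=\frac{1}{2N}\sum_{j} K_j\sin\left(2(\theta_j-\Psi)\right)$ is generically nonzero under heterogeneous gains, $\Psi$ can a priori sweep past an innermost agent precisely at an instant when that agent's own angular velocity vanishes; without a bound on $\dot\Psi$ at such instants, all of your sign assignments (and hence the entire crossing argument) are conditional on a property you have not established. Your ``first failure time'' bootstrap is a plan, not a proof, and it is the hardest part of the lemma. To be fair, the paper's own proof is informal at the same spot --- it asserts that the speed and separation orderings persist for all $t$, which tacitly assumes subgroup membership is preserved and that $|\sin(\Psi-\theta_k)|$ is ordered like $|\Psi-\theta_k|$ (false once separations exceed $\pi/2$) --- so your sketch is of comparable rigor to the published argument; but judged as a complete proof, it is missing exactly the steps you yourself flag as delicate.
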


\begin{proof}
Without loss of generality, the proof of this lemma is provided with reference to Fig.~\ref{Dynamics of agents}.

From \eqref{phase_order_parameter}, we can write
\begin{equation}
\left|p_{\theta}\right|e^{i(\Psi - \theta_k)} =  \frac{1}{N}\sum_{j=1}^{N}e^{i(\theta_j - \theta_k)},
\end{equation}
the imaginary part of which is given by
\begin{equation}
\label{phase_order_parameter_New}\left|p_{\theta}\right|\sin(\Psi - \theta_k) = \frac{1}{N}\sum_{j=1}^{N} \sin(\theta_j - \theta_k)
\end{equation}
Using \eqref{phase_order_parameter_New}, \eqref{control3_1} can be written as
\begin{equation}
\label{theta_dot_new}\dot{\theta}_k = -K_k\left|p_{\theta}\right|\sin(\Psi - \theta_k),
\end{equation}
which implies that the heading angle $\theta_k$ of the $k^\text{th}$ agent moves away from the average phase $\Psi$ of the whole ensemble. The interpretation of the dynamics \eqref{theta_dot_new}, at an instant in time $t$, is shown in Fig.~$1(c)$.

For better understanding of the dynamics \eqref{theta_dot_new}, $\forall k$, and $\forall t$, it is convenient to choose the reference axis along the phase order parameter $p_{\theta}$, as shown in Fig.~$1(c)$, and measure the angle of each unit vector with respect to it. By doing so, it is easy to see that $\left|\Psi - \theta_k\right| < \pi$ for all $k = 1, \ldots, N$. Therefore, for $K_k > 0, \forall k$, one can observe from \eqref{theta_dot_new} that, if $0 < \Psi - \theta_k < \pi$ (that is, for unit vectors lying in the clockwise direction of $p_{\theta}$), $\dot{\theta}_k < 0$, and if $-\pi < \Psi - \theta_k < 0$ (that is, for unit vectors lying in the anticlockwise direction of $p_{\theta}$), $\dot{\theta}_k > 0$. It means that the heading angle of the $k^\text{th}$ agent always moves away from the average phase $\Psi$ of the group, and hence the angular separation $\left|\Psi - \theta_k\right|, \forall k$, increases with time. Moreover, the maximum value of $\left|\Psi - \theta_k\right|, \forall k$, is $\pi$ radians since whenever $|\Psi - \theta_k| > \pi$, the sign of $\sin(\Psi - \theta_k)$ in \eqref{theta_dot_new} changes, and hence the unit vector $e^{i\theta_k}$ of the $k^\text{th}$ agent now starts moving in the opposite direction, and hence cannot cross the reference axis.

Let the agents be divided in two subgroups as discussed above, and the corresponding heterogeneous gains $K_k, \forall k$, are given according to the Assumption~\ref{assumption}. This is clear from Fig.~$1(b)$ that, at time instant $t=0$, it holds that $\left|\Psi_0 - \theta_{10}\right| > \left|\Psi_0 - \theta_{20}\right| > \ldots > \left|\Psi_0 - \theta_{\overline{N}0}\right|$ for the agents in subgroup~1 and $\left|\Psi_0 - \theta_{(\overline{N}+1)0}\right| < \left|\Psi_0 - \theta_{(\overline{N}+2)0}\right| < \ldots < \left|\Psi_0 - \theta_{N0}\right|$ for the agents in subgroup~2. Therefore, under the influence of control gains $K_k, \forall k$, given according to the Assumption~\ref{assumption}, it now follows from \eqref{theta_dot_new} that $|\dot{\theta}_1(t)| \geq |\dot{\theta}_2(t)| \geq \ldots \geq |\dot{\theta}_{\overline{N}}(t)|$ for the agents in subgroup~1, and $|\dot{\theta}_{\overline{N}+1}(t)| \leq |\dot{\theta}_{\overline{N}+2}(t)| \leq \ldots \leq |\dot{\theta}_N(t)|$ for the agents in subgroup~2 for all $t$. This turns out that $\left|\Psi(t) - \theta_{1}(t)\right| > \left|\Psi(t) - \theta_{2}(t)\right| > \ldots > \left|\Psi(t) - \theta_{\overline{N}}(t)\right|$ for the agents in subgroup~1 and $\left|\Psi(t) - \theta_{\overline{N}+1}(t)\right| < \left|\Psi(t) - \theta_{\overline{N}+2}(t)\right| < \ldots < \left|\Psi(t) - \theta_{N}(t)\right|$ for the agents in subgroup~2 for all $t$. As a result, the cyclic order of the agents' velocity vectors $e^{i\theta_{1f}}, \ldots, e^{i\theta_{Nf}}$, in the balanced formation, is maintained on the unit circle. This completes the proof.
\end{proof}

\begin{remark}
Note that the Lemma~\ref{lem1} comments only on the cyclic arrangement of the agents' velocity vectors $e^{i\theta_{1f}}, \ldots, e^{i\theta_{Nf}}$, in balanced formation, though, it doesn't tell anything about the numerical values of the orientations $\theta_{1f}, \ldots, \theta_{Nf}$, in a given coordinate frame.
\end{remark}

\begin{remark}
It is straight forward to see that whenever $\Psi_0 - \theta_{k0} = 0$ for a particular $k^\text{th}$ agent then, the corresponding controller gain $K_k$ can be independently chosen, and can assume any positive value since it coincides with the reference axis and hence, this does not affect the analysis of Lemma~\ref{lem1}.
\end{remark}

%%%%%%%%%%%%%%%%%%%%%%%%%%%%%%%%%%%%%%%%%%%%%%%%%%%%%%%%%%%%%%%%%%%%%%%%%%%%%%%%%%%%%%%%%%%%%%%%%%%%%%%%%%%%%%%%%%%%%%%%%%%%%%%%%%
\begin{figure*}
\centering
\subfigure[]{\includegraphics[scale=0.65]{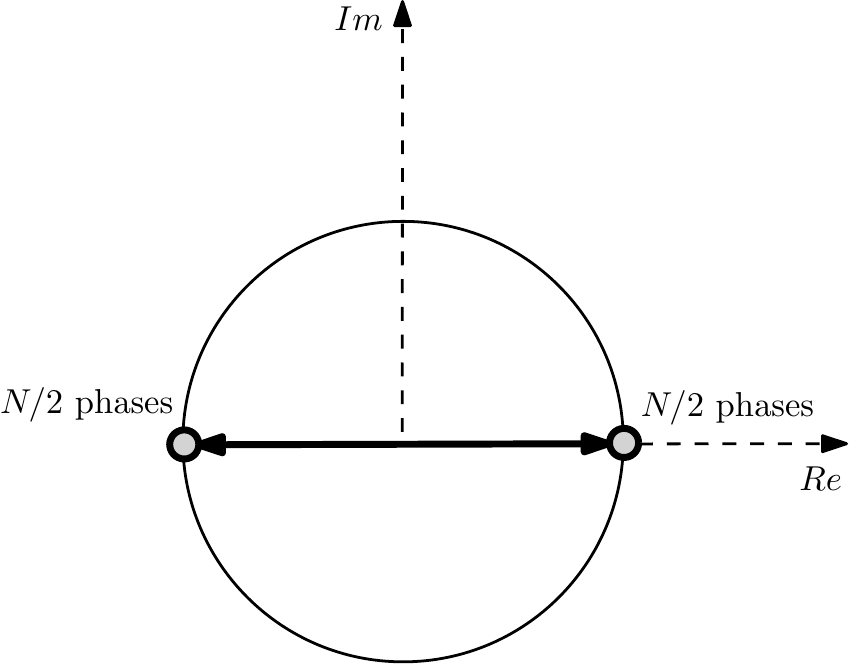}}\hspace{1cm}
\subfigure[]{\includegraphics[scale=0.65]{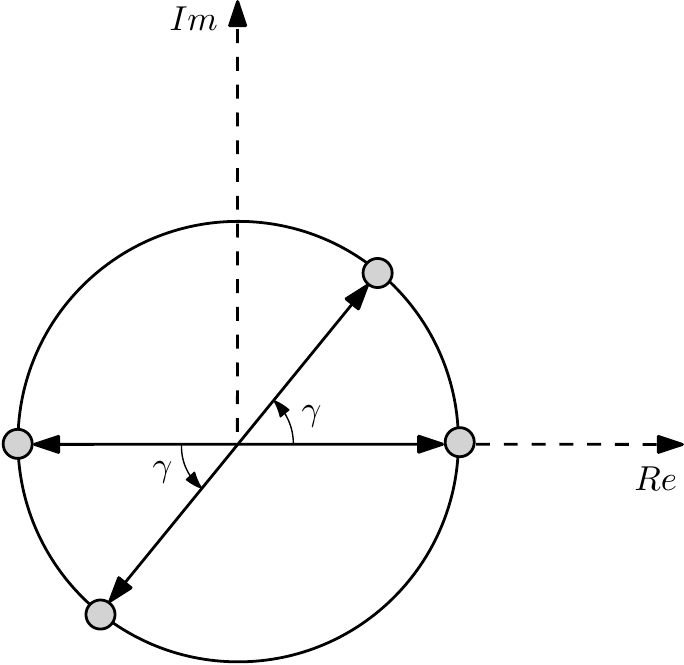}}
\caption{Various possible balanced formation for $N=4$. $(a)$ One cluster of $N/2$ agents at 0 and other at $\pi$ radians. $(b)$ Four clusters each with $N/4$ phases along with a mirror symmetry about both axes. Here, the angle $\gamma$ between the velocity vectors of two consecutive agents is not unique.}
\label{balanced formation for four agents}
\end{figure*}
%%%%%%%%%%%%%%%%%%%%%%%%%%%%%%%%%%%%%%%%%%%%%%%%%%%%%%%%%%%%%%%%%%%%%%%%%%%%%%%%%%%%%%%%%%%%%%%%%%%%%%%%%%%%%%%%%%%%%%%%%%%%%%%%%%

Next, we describe the phase balancing of two and three agents, and prove that the angular separation between the velocity vectors of any two consecutive agents in balanced formation is unique. However, for $N > 3$, the equation $p_\theta = 0$, may be satisfied for the phase arrangements $\pmb{\theta} \in \mathbb{T}^N$ such that the angular separation between the velocity vectors of any two consecutive agents in the steady state may not be unique. For example, for $N=4$, $p_\theta = 0$ if there is $(i)$ a cluster of $N/2$ phases at $0$ radian, and another cluster of $N/2$ phases at $\pi$ radians or $(ii)$ four clusters-each with $N/4$ phases along with a mirror symmetry about both axes. This situation is shown in Fig.~\ref{balanced formation for four agents}, where, the angular separation $\gamma$ can assume any value provided clusters of $N/2$ phases are individually balanced. For instance, the value $\gamma = 0$ corresponds to the case $(i)$, and $\gamma = \pi/2$ corresponds to the splay phase arrangement. Nonetheless, in splay formation, the angular separation $\gamma = 2\pi/N$, is unique for all consecutive agents' pair, unlike Lemma~\ref{lem1}, it is yet challenging to set-up the cyclic arrangement of agents' velocity vectors in the steady state (see Appendix).

\begin{lem}\label{lem2}
Consider $N \in \{2, 3\}$ agents, with dynamics given by \eqref{modelNew}, under the control law \eqref{control3_1} with heterogeneous control gains $K_k, \forall k$, given according to the Assumption~\ref{assumption}. Let the initial heading angles of the agents be given by $\pmb{\theta}(0)$ such that the initial velocity vectors $e^{i\theta_{10}}, \ldots, e^{i\theta_{N0}}$, are in cyclic order with  $\theta_{10} < \theta_{20} < \ldots < \theta_{N0}$. Then, the phase balancing with two and three agents occurs if and only if their velocity directions in the steady-state are at an equal angular separation of $\pi$, and $2\pi/3$ radians, respectively.
\end{lem}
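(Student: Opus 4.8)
The plan is to reduce the lemma to a purely geometric characterization of the zero set of the phase order parameter for $N\in\{2,3\}$, and then invoke the convergence and ordering results already established. By \eqref{R_dot}, balanced formation is exactly the condition $p_\theta = 0$, i.e. $\sum_{k=1}^N e^{i\theta_{kf}} = 0$ in the steady state, and Theorem~\ref{Theorem1} together with Corollary~\ref{cor0} guarantees that the closed-loop trajectories do converge to this set for $K_k>0$. Lemma~\ref{lem1} guarantees that the initial cyclic order $\theta_{1f}<\cdots<\theta_{Nf}$ is retained, so the labeling of the steady-state vectors around the unit circle is fixed. It therefore remains to show that, for two (resp. three) unit vectors, the vanishing of their sum is equivalent to equal angular separation of $\pi$ (resp. $2\pi/3$).

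For $N=2$ this is immediate: $e^{i\theta_{1f}}+e^{i\theta_{2f}}=0$ holds if and only if $e^{i(\theta_{1f}-\theta_{2f})}=-1$, i.e. if and only if $|\theta_{1f}-\theta_{2f}|=\pi$, which is the claimed antipodal configuration.

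For $N=3$ I would argue both directions. The \emph{sufficiency} (``if'') is a direct substitution: if the three directions are equally spaced, say $\theta_{2f}=\theta_{1f}+2\pi/3$ and $\theta_{3f}=\theta_{1f}+4\pi/3$, then $e^{i\theta_{1f}}(1+e^{i2\pi/3}+e^{i4\pi/3})=0$ since the bracketed cube roots of unity sum to zero, so $p_\theta=0$. For the \emph{necessity} (``only if''), suppose $e^{i\theta_{1f}}+e^{i\theta_{2f}}+e^{i\theta_{3f}}=0$. Rewriting as $e^{i\theta_{1f}}+e^{i\theta_{2f}}=-e^{i\theta_{3f}}$ and taking moduli gives $|e^{i\theta_{1f}}+e^{i\theta_{2f}}|=1$; expanding via the inner product of the Notations subsection,
\[
|e^{i\theta_{1f}}+e^{i\theta_{2f}}|^2 = 2 + 2\cos(\theta_{1f}-\theta_{2f}) = 1,
\]
so $\cos(\theta_{1f}-\theta_{2f})=-1/2$ and the separation between agents $1$ and $2$ is $2\pi/3$. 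Applying the same modulus argument to the pairs $(2,3)$ and $(1,3)$ shows every pairwise separation equals $2\pi/3$, which, together with the fixed cyclic order from Lemma~\ref{lem1}, forces the unique equally spaced arrangement.

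The only genuine obstacle is the necessity direction for $N=3$: one must confirm that the three pairwise constraints $\cos(\theta_{jf}-\theta_{kf})=-1/2$ are mutually consistent only with exact $2\pi/3$ spacing and admit no other solution, in contrast to $N\ge 4$ where---as the $N=4$ discussion and Fig.~\ref{balanced formation for four agents} preceding the lemma show---$p_\theta=0$ has a continuum of non-equispaced solutions. This is precisely what makes the ``if and only if'' sharp at $N=2,3$; the remaining manipulations are routine.
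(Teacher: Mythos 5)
Your proposal is correct, and for $N=3$ it takes a genuinely different route from the paper. The paper parametrizes the steady-state phases as $\theta_{1f}=\theta_f$, $\theta_{2f}=\theta_f+\psi_1$, $\theta_{3f}=\theta_f+\psi_2$, obtains $\cos(\psi_2-\psi_1)=-1/2$ from the stationarity of the centroid, and then brings in a second set of equations --- the zero-control-force (equilibrium) conditions $\sum_j\sin(\theta_{jf}-\theta_{kf})=0$ for each $k$ --- solving the combined trigonometric system (with the integer ambiguity $\psi_2-\psi_1=2n\pi\pm 2\pi/3$) under the ordering $\psi_2>\psi_1>0$ to extract $\psi_1=2\pi/3$, $\psi_2=4\pi/3$. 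You instead work purely with the geometric condition $p_\theta=0$: isolating one vector and taking moduli three times yields all pairwise constraints $\cos(\theta_{jf}-\theta_{kf})=-1/2$ at once, with no appeal to the closed-loop equilibrium equations. Your route is more elementary and self-contained, and it makes transparent why the equivalence breaks at $N\ge 4$ (eliminating one vector from a sum of four leaves a constraint on three vectors, not a single cosine); the paper's route stays closer to the control law itself. The one step you label ``routine'' genuinely is, but since you left it implicit, here is the closing argument: write $\alpha=\theta_{2f}-\theta_{1f}$ and $\beta=\theta_{3f}-\theta_{2f}$; the pairwise constraints force $\alpha,\beta,\alpha+\beta\in\{2\pi/3,\,4\pi/3\}\pmod{2\pi}$, and the cyclic order from Lemma~\ref{lem1} gives $0<\alpha<\alpha+\beta<2\pi$, so $\alpha=2\pi/3$ and $\alpha+\beta=4\pi/3$, i.e. $\alpha=\beta=2\pi/3$, which is exactly the equispaced arrangement (the mixed cases such as $\alpha=2\pi/3$, $\beta=4\pi/3$ are excluded because then $\alpha+\beta\equiv 0$ and $\cos(\alpha+\beta)=1\neq -1/2$). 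With that sentence added, your proof is complete.
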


\begin{proof}
Let the velocity directions of agents in balanced formation be given by $\pmb{\theta}_f = [\theta_{1f}, \ldots, \theta_{Nf}]^T$, where, $\theta_{kf} = \theta_k (t \rightarrow \infty)$, denotes the orientation of the $k^\text{th}$ agent in the steady-state.

First consider the case of $N=2$. Since the rate of change of the position of the centroid is zero in balanced condition, we have $\dot{R} = 0$, and hence, by using \eqref{R_dot}, we can write
\begin{eqnarray}
\label{rel_1}\cos\theta_{1f} + \cos\theta_{2f} = 0~~~\text{and}~~~\sin\theta_{1f} + \sin\theta_{2f} = 0.
\end{eqnarray}
On squaring and adding \eqref{rel_1}, we get
\begin{eqnarray}
\label{rel_2}\cos(\theta_{2f} - \theta_{1f}) = -1.
\end{eqnarray}
The solution of \eqref{rel_2} is given by
\begin{equation}
\label{rel_3}\theta_{2f}-\theta_{1f} = (2n + 1)\pi,
\end{equation}
where, $n\in\mathbb{Z}$, ($\mathbb{Z}$ is a set of integers). Thus,
\begin{equation}
(\theta_{2f}-\theta_{1f})~(\text{mod}~2\pi) = \pi,
\end{equation}
that is, the agents are at an angular separation of $\pi$ radians. This proves the necessary condition. To prove the sufficiency condition, let us substitute $\theta_{1f} = \theta_f$, and $\theta_{2f} = \theta_f + \pi$ in \eqref{R_dot}, which results in $\dot{R} = p_\theta = 0$, and hence the agents are in balanced formation.

Next, we consider the case of $N=3$. Let the velocity directions of agents in phase balancing be related as
\begin{equation}
\label{count1} \theta_{1f} = \theta_f,~~ \theta_{2f} = \theta_f + \psi_1,~~ \theta_{3f} = \theta_f + \psi_2,
\end{equation}
where, $\psi_1$ and $\psi_2$ are the angular separations between the velocity directions of agents $1$ and $2$, and agents $1$ and $3$, respectively. Since for the given conditions on the controller gains and the initial heading angles, it follows from the Lemma~\ref{lem1} that the velocity vectors of agents are in cyclic order on the unit circle, it holds that $\psi_2 > \psi_1 > 0$.

Substituting \eqref{count1} in \eqref{R_dot}, we get
\begin{subequations}\label{count2}
\begin{align}
\cos \left(\theta_f+\psi_1\right) + \cos \left(\theta_f+\psi_2\right) =& -\cos\theta_f \label{eq4}\\
\sin \left(\theta_f+\psi_1\right) + \sin \left(\theta_f+\psi_2\right) =& -\sin\theta_f \label{eq5}
\end{align}
\end{subequations}
Squaring and adding \eqref{count2}, we get
\begin{equation}
\cos\left(\psi_2-\psi_1\right) = -{1}/{2},
\end{equation}
which implies that
\begin{equation}
\psi_2-\psi_1 = 2n\pi \pm \left({2\pi}/{3}\right),
\end{equation}
where, $n \in \mathbb{Z}$. Since the control force applied to an agent is zero when they are in phase balancing (Theorem~\ref{Theorem1}), the following expressions can be obtained by substituting \eqref{count1} in \eqref{control3_1}.
\begin{subequations}\label{count3}
\begin{align}
\label{eq1}\sin\psi_1 + \sin\psi_2 &= 0\\
\label{eq2}-\sin\psi_1 + \sin\left(\psi_2-\psi_1\right) &= 0\\
\label{eq3}-\sin\psi_2 + \sin\left(\psi_1-\psi_2\right) &= 0
\end{align}
\end{subequations}
Since the equation \eqref{eq1} is achievable by adding \eqref{eq2} and \eqref{eq3}, it is sufficient to solve \eqref{eq2} and \eqref{eq3} to get a solution for $\psi_1$ and $\psi_2$. Note that, in the phase balancing condition, \eqref{count2} and \eqref{count3} should satisfy simultaneously for some $\psi_1$ and $\psi_2$ such that $\psi_2 > \psi_1 >0$ as per our consideration. Thus, by substituting $\psi_2-\psi_1 = 2n\pi \pm {2\pi}/{3}$ in \eqref{count3}, and solve for $\psi_1$ and $\psi_2$ by properly choosing integer $n$ so that $\psi_2 > \psi_1 >0$, we get $\psi_1 = 2\pi/3$, and $\psi_2 = 4\pi/3$. This proves the necessary condition. The sufficiency condition can be proved similarly to the two agents' case, and hence this proof is omitted. This completes the proof.
\end{proof}

Following Lemma~\ref{lem2}, we may assume for $N \in \{2, 3\}$ that $\theta_{kf} = \theta_f + ({2(k-1)\pi}/{N}), \forall k$, where, $\theta_{kf} = \theta_k (t\rightarrow\infty)$, denotes the orientation of the $k^\text{th}$ agent in balanced formation, and $\theta_f \in \mathbb{S}^1$, represents a reference direction (which is the orientation of agent\#$1$ in the steady-state). Once the reference direction $\theta_f$ is known, we can easily determine the orientations of agents in balanced formation.

Based on these notations, we now state the following theorem, which says that, by using heterogeneous gains, it is possible to get a desired reference direction $\theta_f$, and consequently, the velocity directions of agents in balanced formation can be determined.

\begin{thm}\label{Theorem2}
Consider $N \in \{2, 3\}$ agents, with dynamics given by \eqref{modelNew}, under the control law \eqref{control3_1} with non-zero heterogeneous control gains $K_k, \forall k$, given according to the Assumption~\ref{assumption}. Let the initial heading angles of the agents be given by $\pmb{\theta}(0)$ such that the initial velocity vectors $e^{i\theta_{10}}, \ldots, e^{i\theta_{N0}}$, are in cyclic order with  $\theta_{10} < \theta_{20} < \ldots < \theta_{N0}$. Then, the velocity directions of the agents in balanced formation, are given by $\theta_{kf} = \theta_f + ({2(k-1)\pi}/{N}), \forall k$, where, the reference direction $\theta_f$, is
\begin{equation}
\label{theta_f}\theta_f =  \left\{\sum_{k=1}^{N}\frac{1}{K_k}\left(\theta_{k0} - \frac{2(k-1)\pi}{N}\right)\right\}{\Big /}\left\{\sum_{k=1}^{N}\frac{1}{K_k}\right\}.
\end{equation}
\end{thm}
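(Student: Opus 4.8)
The plan is to exhibit a scalar quantity that is conserved along the closed-loop flow, namely the \emph{weighted phase sum} $S(\pmb{\theta}) = \sum_{k=1}^{N}\theta_k/K_k$, and then to evaluate it at $t=0$ and at $t\to\infty$. Since Lemma~\ref{lem2} (together with the parametrization adopted just before the theorem) already fixes the steady-state geometry to equal angular separations, so that $\theta_{kf}=\theta_f+2(k-1)\pi/N$, the limiting value of $S$ depends on the single unknown $\theta_f$; conservation then pins $\theta_f$ down uniquely and produces \eqref{theta_f}.

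The first concrete step is to rewrite the control law \eqref{control3_1} in the weighted form $\dot{\theta}_k/K_k = -\tfrac{1}{N}\sum_{j=1}^{N}\sin(\theta_j-\theta_k)$, which is legitimate because Assumption~\ref{assumption} guarantees $K_k\neq 0$ for every $k$. Summing over $k$ gives
\begin{equation}
\frac{d}{dt}\sum_{k=1}^{N}\frac{\theta_k}{K_k} = -\frac{1}{N}\sum_{k=1}^{N}\sum_{j=1}^{N}\sin(\theta_j-\theta_k).
\end{equation}
The double sum vanishes identically, since its summand is antisymmetric under the interchange $j\leftrightarrow k$; this is exactly the identity \eqref{relation} already used in the proof of Corollary~\ref{cor1}. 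Hence $S(\pmb{\theta})$ is an invariant of motion, $S(\pmb{\theta}(t))=S(\pmb{\theta}(0))$ for all $t$. Note that this weighted sum is special: the unweighted sum $\sum_k\theta_k$ is \emph{not} conserved in the heterogeneous case, and it is precisely the factors $1/K_k$ that make the cancellation work, which is what ultimately threads the gains into formula \eqref{theta_f}.

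To close the argument I would pass to the limit $t\to\infty$. By Theorem~\ref{Theorem1} the trajectory converges to balanced formation, and by Lemma~\ref{lem2} the limiting phases take the form $\theta_{kf}=\theta_f+2(k-1)\pi/N$. Equating the conserved quantity at the two endpoints,
\begin{equation}
\sum_{k=1}^{N}\frac{1}{K_k}\left(\theta_f+\frac{2(k-1)\pi}{N}\right) = \sum_{k=1}^{N}\frac{\theta_{k0}}{K_k},
\end{equation}
and solving this scalar linear equation for $\theta_f$ (pulling $\theta_f\sum_k 1/K_k$ out on the left) yields exactly \eqref{theta_f}.

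The one point that needs care — and which I expect to be the main obstacle — is that the $\theta_k$ must be read as real-valued continuous lifts rather than as points of $\mathbb{S}^1$ taken modulo $2\pi$, because the conserved sum is not well defined modulo $2\pi$ once the weights $1/K_k$ are unequal. This is where Lemma~\ref{lem1} does the real work: the bound $|\Psi-\theta_k|<\pi$ and the ordered spreading it establishes ensure that each $\theta_k(t)$ has a continuous lift with a genuine finite limit $\theta_{kf}$ and no spurious $2\pi$ winding, so that the representatives entering \eqref{theta_f} really are the limits of the invariant sum. A minor companion remark is that $\Psi$ is undefined only at $p_\theta=0$, i.e.\ only in the limit; for every finite $t$ one has $p_\theta\neq 0$ (Corollary~\ref{cor2}), so $S$ is constant along the whole trajectory and the endpoint identity is obtained by continuity.
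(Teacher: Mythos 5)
Your proposal is correct and follows essentially the same route as the paper: the paper likewise divides \eqref{control3_1} by $K_k$, sums over $k$ so the antisymmetric double sum $\sum_k\sum_j\sin(\theta_j-\theta_k)$ vanishes, integrates to get the conserved quantity $\sum_k\theta_k(t)/K_k=\sum_k\theta_{k0}/K_k$, and then substitutes $\theta_{kf}=\theta_f+2(k-1)\pi/N$ in the limit to solve for $\theta_f$. Your closing observation about working with continuous real-valued lifts (so the weighted sum is well defined and no spurious $2\pi$ winding enters) is a point the paper passes over silently, and it is a genuine refinement rather than a different proof.
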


\begin{proof}
Taking the summation on both the sides of \eqref{control3_1} over all $k=1, \ldots, N$, we get
\begin{equation}
\label{angle_relation}\sum_{k=1}^{N} \frac{\dot{\theta}_k(t)}{K_k} = -\frac{1}{N}\sum_{k=1}^{N}\sum_{j=1}^{N} \sin(\theta_j - \theta_k) = 0.
\end{equation}
Integration of \eqref{angle_relation} yields
\begin{equation}
\label{angle_relation1}\sum_{k=1}^{N} \frac{{\theta}_k(t)}{K_k} = \sum_{k=1}^{N} \frac{\theta_{k0}}{K_k},~\forall t.
\end{equation}
As $t\rightarrow\infty$, \eqref{angle_relation1} becomes
\begin{equation}
\label{angle_relation2}\sum_{k=1}^{N} \frac{{\theta}_k(t \rightarrow \infty)}{K_k} = \sum_{k=1}^{N} \frac{\theta_{k0}}{K_k}.
\end{equation}
On substituting  $\theta_k (t \rightarrow\infty) = \theta_{kf} = \theta_f + ({2(k-1)\pi}/{N}), \forall k$, in \eqref{angle_relation2}, we get \eqref{theta_f}. This completes the proof.
\end{proof}

\begin{remark}
Note that the relation \eqref{angle_relation1} does not tell anything about the individual angular separation between agents in balanced formation. As discussed above, since for $N > 3$, the angular separation between any two consecutive agent's orientations in phase balancing may not be unique, a result, similar to Theorem~\ref{Theorem2}, may not be achievable for $N > 3$. However, the cyclic order of $N$ agents in balanced formation can be assured by using Lemma~\ref{lem1}.
\end{remark}

For the sake of convenience, let us denote by
\begin{equation}
\tilde{\theta}_{k0} = \left(\theta_{k0} - \frac{2(k-1)\pi}{N}\right),
\end{equation}
by using which \eqref{theta_f} can now be compactly written as
\begin{equation}
\label{theta_f_new}\theta_f =  \left(\sum_{k=1}^{N}\frac{\tilde{\theta}_{k0}}{K_k}\right){\Big /}\left(\sum_{k=1}^{N}\frac{1}{K_k}\right).
\end{equation}

Based on this representation, we now state the following corollaries to the Theorem~\ref{Theorem2}.

\begin{cor}\label{cor3}
For the conditions given in Theorem~\ref{Theorem2}, the reference direction $\theta_{f}$, given by \eqref{theta_f_new}, is a convex combination of $\tilde{\theta}_{k0}, \forall k$.
\end{cor}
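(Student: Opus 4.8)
The plan is to exhibit explicit nonnegative weights summing to one. First I would read off from \eqref{theta_f_new} the natural candidate weights
\[
\lambda_k = \frac{1/K_k}{\sum_{j=1}^{N} 1/K_j}, \qquad k = 1, \ldots, N,
\]
so that \eqref{theta_f_new} takes, by construction, the form $\theta_f = \sum_{k=1}^{N} \lambda_k \tilde{\theta}_{k0}$. Recognizing $\theta_f$ as this weighted average is essentially the entire content of the statement; the remaining task is only to confirm that the $\lambda_k$ constitute a valid set of convex weights.

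Next I would verify the two defining properties of a convex combination. For nonnegativity, Theorem~\ref{Theorem2} imposes Assumption~\ref{assumption}, under which every gain is strictly positive, $K_k > 0$; hence each reciprocal $1/K_k > 0$, and the denominator $\sum_{j=1}^{N} 1/K_j$ is a finite sum of strictly positive terms and is therefore strictly positive (in particular nonzero, so that $\theta_f$ and each $\lambda_k$ are well defined). Consequently $\lambda_k > 0 \geq 0$ for every $k$. For the normalization, summing the weights gives
\[
\sum_{k=1}^{N}\lambda_k = \frac{\sum_{k=1}^{N} 1/K_k}{\sum_{j=1}^{N} 1/K_j} = 1,
\]
which is immediate since the numerator and denominator are the same sum.

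With nonnegativity and $\sum_{k=1}^{N} \lambda_k = 1$ both established, $\theta_f = \sum_{k=1}^{N} \lambda_k \tilde{\theta}_{k0}$ is by definition a convex combination of the $\tilde{\theta}_{k0}$, which completes the argument. I do not expect any genuine obstacle: the corollary is little more than a structural reinterpretation of the algebraic expression \eqref{theta_f_new}, and the only point that warrants attention is that the positivity of the weights is inherited directly from the all-positive gains mandated by Assumption~\ref{assumption}. Were some $K_k$ permitted to be negative — as allowed by the broader condition $\sum_{k=1}^{N} K_k T_k(\pmb{\theta}) > 0$ of Theorem~\ref{Theorem1} — the corresponding weight $\lambda_k$ could change sign and the convexity claim would no longer hold, so it is precisely the restriction to $K_k > 0$ that makes the statement true.
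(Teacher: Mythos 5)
Your proof is correct and follows essentially the same route as the paper's: both define the weights $\lambda_k = (1/K_k)\big/\sum_{j=1}^{N}(1/K_j)$, verify positivity from $K_k > 0$ under Assumption~\ref{assumption} and normalization $\sum_{k=1}^{N}\lambda_k = 1$, and conclude that \eqref{theta_f_new} is a convex combination. Your closing observation that negative gains would break the claim is a sensible sanity check, though not needed for the proof itself.
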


\begin{proof}
Equation \eqref{theta_f_new} can also be rewritten as
\begin{equation}
\label{theta_f_1}\theta_f =  \sum_{k=1}^{N}\left\{\left(\frac{1}{K_k}\right){\Big /}\left({\sum_{j=1}^{N}\frac{1}{K_j}}\right)\right\}\tilde{\theta}_{k0}.
\end{equation}
Assume that for all $k = 1, \ldots N$,
\begin{equation}
\label{lambda}\lambda_k = \left(\frac{1}{K_k}\right){\Big /}\left(\sum_{j=1}^{N} \frac{1}{K_j}\right).
\end{equation}
Since $K_k > 0$ for all $k = 1, \ldots, N$, $\lambda_k > 0, \forall k$, and satisfies $\sum_{k=1}^{N} \lambda_k = 1$. Substituting \eqref{lambda} in \eqref{theta_f_1}, we get
\begin{equation}
\label{theta_f_2}\theta_f = \sum_{k=1}^{N}\lambda_k\tilde{\theta}_{k0},
\end{equation}
which shows that $\theta_f$ is a convex combination of $\tilde{\theta}_{k0}, \forall k$.
\end{proof}

\begin{cor}\label{cor4}
For the conditions given in Theorem~\ref{Theorem2}, let $\tilde{\theta}_{m0} = \min_{k} \left\{\tilde{\theta}_{k0}\right\}$ and $\tilde{\theta}_{M0} = \max_{k} \left\{\tilde{\theta}_{k0}\right\}$ be the minimum and the maximum angles, respectively. These angles are not reachable as the reference direction in balanced formation of this system of two and three agents.
\end{cor}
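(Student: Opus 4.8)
The plan is to build directly on Corollary~\ref{cor3}, which already rewrites the reference direction as the convex combination $\theta_f = \sum_{k=1}^{N}\lambda_k\tilde{\theta}_{k0}$, where the weights $\lambda_k = (1/K_k)/\left(\sum_{j=1}^{N} 1/K_j\right)$ are \emph{strictly} positive and sum to one (strict positivity being guaranteed by the finite, nonzero gains $K_k > 0$ of Theorem~\ref{Theorem2}). The entire geometric content I want to extract is the elementary fact that a convex combination with all weights strictly positive cannot land on an extreme value of the data; thus I would aim to prove the sharpened statement $\tilde{\theta}_{m0} < \theta_f < \tilde{\theta}_{M0}$ and read off the corollary as the boundary cases being excluded.

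First I would bound $\theta_f$ from below. Writing $\theta_f - \tilde{\theta}_{m0} = \sum_{k=1}^{N}\lambda_k(\tilde{\theta}_{k0} - \tilde{\theta}_{m0})$, every summand is nonnegative since $\tilde{\theta}_{k0} \geq \tilde{\theta}_{m0}$ and $\lambda_k > 0$; in particular the term indexed by the maximizer, $\lambda_M(\tilde{\theta}_{M0} - \tilde{\theta}_{m0}) > 0$, is strictly positive whenever the angles are not all equal. This forces $\theta_f > \tilde{\theta}_{m0}$, so the minimum is unreachable. A fully symmetric computation using $\tilde{\theta}_{M0} - \theta_f = \sum_{k=1}^{N}\lambda_k(\tilde{\theta}_{M0} - \tilde{\theta}_{k0}) > 0$ gives $\theta_f < \tilde{\theta}_{M0}$ and rules out the maximum. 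Equivalently, one can locate the obstruction at the level of the gains: attaining $\theta_f = \tilde{\theta}_{m0}$ would require $\lambda_m = 1$ and $\lambda_k = 0$ (i.e.\ $1/K_k = 0$) for every index with $\tilde{\theta}_{k0} > \tilde{\theta}_{m0}$, which is impossible for the finite nonzero gains admitted here.

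The only point requiring care — and the step I would flag as the main obstacle — is the degenerate configuration in which all $\tilde{\theta}_{k0}$ coincide. This occurs precisely when $\theta_{k0} = \theta_{10} + 2(k-1)\pi/N$, that is, when the agents already start in the balanced arrangement dictated by Lemma~\ref{lem2}, so that $\tilde{\theta}_{m0} = \tilde{\theta}_{M0} = \theta_f$ and the distinction between minimum and maximum collapses. I would dispose of this case by noting that it is either vacuous or excluded, since the statement of the corollary presupposes a genuine (distinct) minimum and maximum, equivalently at least two distinct values among the $\tilde{\theta}_{k0}$; under that hypothesis the strict inequalities above hold and the claim follows.
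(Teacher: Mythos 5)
Your proposal is correct and takes essentially the same route as the paper: the paper argues by contradiction, supposing $\theta_f = \tilde{\theta}_{m0}$ and deriving $\sum_{k \neq m}\bigl(\tilde{\theta}_{k0} - \tilde{\theta}_{m0}\bigr)/K_k = 0$, which is exactly the positivity fact you exploit directly through the convex-combination form of Corollary~\ref{cor3}. The differences are presentational rather than substantive (direct strict inequalities $\tilde{\theta}_{m0} < \theta_f < \tilde{\theta}_{M0}$ versus contradiction), and if anything your version is more careful: the paper asserts $\tilde{\theta}_{k0} - \tilde{\theta}_{m0} > 0$ for \emph{every} $k \neq m$, which fails when the extremum is attained by more than one agent, and it never addresses the degenerate configuration $\tilde{\theta}_{10} = \cdots = \tilde{\theta}_{N0}$ (agents starting already balanced), both of which you dispose of explicitly.
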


\begin{proof}
This can be proved by contradiction. Let us assume that $\tilde{\theta}_{m0}$ is reachable. It means that $\exists~K_k > 0, \forall k$, given according to the Assumption~\ref{assumption}, such that \eqref{theta_f_new} is satisfied. Hence, from \eqref{theta_f_new}, we can write
\begin{equation}
\label{theta_m}\tilde{\theta}_{m0} = \left(\sum_{k=1}^{N}\frac{\tilde{\theta}_{k0}}{K_k}\right){\Big /}\left(\sum_{k=1}^{N}\frac{1}{K_k}\right),
\end{equation}
from which
\begin{equation}
\label{theta_m_1}\displaystyle\sum_{\substack{
   k = 1, \\
   k \neq m
  }}^{N}
\left(\dfrac{\tilde{\theta}_{k0} - \tilde{\theta}_{m0}}{K_k}\right) = 0.
\end{equation}
However, since $\tilde{\theta}_{m0} = \min_{k} \{\tilde{\theta}_{k0}\}$, $ \tilde{\theta}_{k0} -\tilde{\theta}_{m0} > 0 $, for all $k = 1, \ldots, m-1, m+1, \ldots, N$. Thus,
\begin{equation}
\displaystyle\sum_{\substack{
   k = 1, \\
   k \neq m
  }}^{N}
\left(\dfrac{\tilde{\theta}_{k0} - \tilde{\theta}_{m0}}{K_k}\right) > 0
\end{equation}
as $K_k > 0, \forall k$, which contradicts \eqref{theta_m_1}, and hence $\tilde{\theta}_{m0}$ is not reachable as the reference direction. Similarly, we can show that $\tilde{\theta}_{M0}$ is not reachable. This completes the proof.
\end{proof}

Now, we describe the following theorem which gives the reachability of $\theta_f$, defined in \eqref{theta_f_new}, against heterogeneous control gains $K_k, \forall k$, given according to the Assumption~\ref{assumption}.

\begin{thm}\label{Theorem3}
Consider $N \in \{2, 3\}$ agents, with dynamics given by \eqref{modelNew}, under the control law \eqref{control3_1} with non-zero heterogeneous control gains $K_k, \forall k$, given according to the Assumption~\ref{assumption}. Let the initial heading angles of the agents be given by $\pmb{\theta}(0)$ such that the initial velocity vectors $e^{i\theta_{10}}, \ldots, e^{i\theta_{N0}}$, are in cyclic order with  $\theta_{10} < \theta_{20} < \ldots < \theta_{N0}$. Let the orientations of agents in balanced formation be given by $\theta_{kf} = \theta_f + ({2(k-1)\pi}/{N}), \forall k$. Then, the reference direction $\theta_f$, given by \eqref{theta_f_new}, is reachable if and only if
\begin{equation}
\theta_f \in (\tilde{\theta}_{m0}, \tilde{\theta}_{M0}),
\end{equation}
where, $\tilde{\theta}_{m0}$, and $\tilde{\theta}_{M0}$, are defined in Corollary~\ref{cor4}.
\end{thm}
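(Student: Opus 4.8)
The plan is to read off everything from Corollaries~\ref{cor3} and~\ref{cor4}, which already do the heavy lifting: \eqref{theta_f_2} writes the reference direction as the convex combination $\theta_f = \sum_{k=1}^{N}\lambda_k\tilde{\theta}_{k0}$ with strictly positive weights $\lambda_k = (1/K_k)\big/\sum_{j}(1/K_j)$ summing to one, as defined in \eqref{lambda}. Since the statement is an ``if and only if,'' I would treat the two implications separately, establishing necessity directly from these corollaries and sufficiency by inverting the weight-to-gain map. Throughout I assume the nondegenerate case $\tilde{\theta}_{m0} < \tilde{\theta}_{M0}$ (otherwise the interval is empty and there is nothing to prove).

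\emph{Necessity} ($\theta_f$ reachable $\Rightarrow \theta_f \in (\tilde{\theta}_{m0},\tilde{\theta}_{M0})$) is immediate. By Corollary~\ref{cor3}, $\theta_f$ is a convex combination of the $\tilde{\theta}_{k0}$, so $\theta_f$ necessarily lies in the closed hull $[\tilde{\theta}_{m0},\tilde{\theta}_{M0}]$. By Corollary~\ref{cor4}, neither endpoint is attainable as a reference direction. Combining the two forces $\theta_f \in (\tilde{\theta}_{m0},\tilde{\theta}_{M0})$, which is one direction of the claim.

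\emph{Sufficiency} is the substantive part. Given a target $\theta^\star \in (\tilde{\theta}_{m0},\tilde{\theta}_{M0})$, I would first produce admissible weights, then recover gains. The key observation is that the affine functional $\lambda \mapsto \sum_k \lambda_k \tilde{\theta}_{k0}$ maps the relative interior of the probability simplex (all $\lambda_k>0$, $\sum_k\lambda_k=1$) \emph{onto} the open interval $(\tilde{\theta}_{m0},\tilde{\theta}_{M0})$: driving $\lambda_m\to 1$ pushes the combination toward $\tilde{\theta}_{m0}$ and $\lambda_M\to 1$ pushes it toward $\tilde{\theta}_{M0}$, so by continuity of this functional and connectedness of the interior simplex, every intermediate value is attained by some interior $\lambda$. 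For $N\in\{2,3\}$ this is a one- or two-parameter verification and can be made fully explicit. Having fixed such a $\lambda$ with $\sum_k\lambda_k\tilde{\theta}_{k0}=\theta^\star$, I would invert \eqref{lambda} by setting $K_k = 1/(c\,\lambda_k)$ for any scale $c>0$; these are well defined and strictly positive, and by construction they reproduce the chosen weights, so \eqref{theta_f_new} yields $\theta_f=\theta^\star$, i.e.\ $\theta^\star$ is reachable.

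The step I expect to be the main obstacle is reconciling the constructed gains with the monotonicity stipulated in Assumption~\ref{assumption}: the inversion $K_k=1/(c\lambda_k)$ guarantees positivity but not the required ordering of gains within each subgroup. Note that Corollaries~\ref{cor3} and~\ref{cor4} invoke only $K_k>0$, so the construction above certifies reachability over all positive gains; the delicate point is to show that restricting $\lambda$ to the subset of the interior simplex compatible with the subgroup orderings does not shrink the image below the full open interval. I would therefore refine step one by searching for $\theta^\star$-realizing weights \emph{inside} that monotone subset and checking its image is still $(\tilde{\theta}_{m0},\tilde{\theta}_{M0})$. This is precisely where the small-agent hypothesis $N\in\{2,3\}$ is essential, since the number of ordering constraints is small enough that the remaining one- or two-dimensional freedom in $\lambda$ can be exhausted by direct computation, and it is the place where the argument must be carried out with care to remain faithful to the hypotheses as stated.
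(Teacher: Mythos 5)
Your proposal follows essentially the same route as the paper's proof: necessity is read off from Corollaries~\ref{cor3} and~\ref{cor4}, and sufficiency is established by choosing strictly positive simplex weights realizing the target direction and inverting them into gains via $K_k = c/\sigma_k$, exactly as in \eqref{gain}. The one step you flag as delicate---compatibility of the constructed gains with the ordering required by Assumption~\ref{assumption}---is precisely the step the paper disposes of with a one-sentence assertion that the $\sigma_k$ ``can be chosen appropriately,'' so your treatment is, if anything, more careful on that point while remaining the same argument.
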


\begin{proof}
This directly follows from Corollary~\ref{cor3} and Corollary~\ref{cor4} that $\theta_f \in (\tilde{\theta}_{m0}, \tilde{\theta}_{M0})$, depending upon the heterogeneous controller gains $K_k, \forall k$, given according to the Assumption~\ref{assumption}. The sufficiency condition is proved as follows.

Let $\theta_f \in (\tilde{\theta}_{m0}, \tilde{\theta}_{M0})$. Then, we can find $\sigma_k$ for all $k = 1, \ldots, N$, such that
\begin{equation}
\label{rel_4}\sum_{k=1}^{N}\sigma_k \hat{\theta}_{k0} = \theta_f,
\end{equation}
where, $\sum_{k=1}^{N} \sigma_k = 1$, with $\sigma_k > 0, \forall k$. Let us define
\begin{equation}
\label{gain}K_k = c/\sigma_k,
\end{equation}
for all $k$, where, $c > 0$, is a constant. Thus, $K_k > 0, \forall k$, and satisfies $\sum_{k=1}^{N} (1/K_k) = 1/c$. Moreover, for the given value of $c$, the parameter $\sigma_k, \forall k$, can be chosen appropriately so that the heterogeneous gains $K_k, \forall k$, defined in \eqref{gain}, satisfy Assumption~\ref{assumption}. Replacing $\sigma_k = c/K_k$ in \eqref{rel_4}, we get
\begin{equation}
\nonumber {\theta}_f = \sum_{k=1}^{N} \left\{\dfrac{\dfrac{1}{K_k}}{\dfrac{1}{c}}\right\}\tilde{\theta}_{k0} =\sum_{k=1}^{N} \left\{\left(\dfrac{\dfrac{1}{K_k}}{\displaystyle\sum_{j=1}^{N}\dfrac{1}{K_j}}\right)\tilde{\theta}_{k0}\right\} = \dfrac{\displaystyle\sum_{k=1}^{N} \dfrac{\tilde{\theta}_{k0}}{K_k}}{\displaystyle\sum_{k=1}^{N} \dfrac{1}{K_k}},
\end{equation}
which is the same as \eqref{theta_f_new}. This completes the proof.
\end{proof}

\begin{remark}
If we choose homogeneous controller gains, as in \cite{Sepulchre2007}, that is, $K_k = K > 0,~\forall k$, then the reference direction ${\theta}_f$, by using \eqref{theta_f_new}, is given by
\begin{equation}
\label{theta_f_avg}\overline{\theta}_f = \frac{1}{N}\sum_{k=1}^{N}\tilde{\theta}_{k0},
\end{equation}
which is the average of all $\tilde{\theta}_{k0}, \forall k$. Thus, for the given initial heading angles, a unique reference direction $\overline{\theta}_f$, given by \eqref{theta_f_avg}, is possible by using homogeneous controller gains, and hence a unique arrangement of agents' velocity vectors is possible in phase balancing. However, by using heterogeneous controller gains, we are able to expand the reachable set of the reference direction $\theta_f$. In fact, the agents can be made to converge to any desired reference direction $\theta_f \in (\tilde{\theta}_{m0}, \tilde{\theta}_{M0})$ by suitably selecting the heterogeneous gains $K_k > 0, \forall k$, given according to the Assumption~\ref{assumption}. The heterogeneous gains can be selected according to \eqref{gain}. We can see that these gains are not unique since none of $\sigma_k$, $c$ need to be unique. We also observe that \eqref{theta_f_new} is independent of the initial locations of the agents. Therefore, different groups of the agents, with arbitrary initial locations, but with same individual initial velocity directions, can be made to converge in balanced formation with the same desired reference direction $\theta_f$.
\end{remark}

Since it is physically impossible to get the same gains for all the agents, the idea of heterogeneous controller gains was introduced. Suppose the homogeneous gains $K$ of the agent vary within certain limits while obeying all the conditions of convergence, then we have the following theorem, which tells about the deviation of the reference direction $\theta_f$ from its mean value $\overline{\theta}_f$, given by \eqref{theta_f_avg}, and comments on its reachability.

For the sake of clarity, we state the theorem for some restricted set of initial heading angles, which can easily be extended for the general setting of initial heading angles as discussed in the remark below the theorem.

\begin{thm}\label{Theorem4}
Consider $N \in \{2, 3\}$ agents, with dynamics given by \eqref{modelNew}, under the control law \eqref{control3_1} with homogeneous control gains $K_k = K > 0, \forall k$.  Let there be an error of $\epsilon_k = \sigma_k K$, where $0 \leq \sigma_k < 1$, in the gain $K$ of the $k^\text{th}$ agent, such that the erroneous gains $K \pm \epsilon_k, \forall k$, obey the Assumption~\ref{assumption}. Let $\sigma = \max_k\{\sigma_k\}$ be the maximum error, and the initial heading angles of the agents be given by $\pmb{\theta}(0)$ with  $\theta_{10} < \theta_{20} < \ldots < \theta_{N0}$ such that $\tilde{\theta}_{k0}, \forall k$, are non-positive. Then, in balanced formation of this system of two and three agents, the perturbed reference direction, $\theta^p_f$, is contained in
\begin{equation}
\label{error_angle}{\theta}^p_f \in \left(\tilde{\theta}_{m0}, \tilde{\theta}_{M0}\right) \bigcap \left[\overline{{\theta}}_f - \Delta{\check{\theta}}^p_f, \overline{{\theta}}_f + \Delta{\hat{\theta}}^p_f\right],
\end{equation}
where,
\begin{equation}
\Delta{\check{\theta}}^p_f =  -\left(\frac{2\sigma}{1 - \sigma}\right)\overline{{\theta}}_f,~~\text{and}~~\Delta{\hat{\theta}}^p_f =  -\left(\frac{2\sigma}{1 + \sigma}\right)\overline{{\theta}}_f,
\end{equation}
are, respectively, the maximum values of the lower and upper deviations of the reachable velocity direction from its mean value $\overline{{\theta}}_f$, given by \eqref{theta_f_avg}.
\end{thm}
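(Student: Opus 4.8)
The plan is to substitute the erroneous gains directly into the closed-form reference direction \eqref{theta_f_new} obtained in Theorem~\ref{Theorem2}, and then to bound the resulting weighted average against the simple average $\overline{\theta}_f$. First I would observe that each perturbed gain has the form $K_k=K(1\pm\sigma_k)$ with $0\le\sigma_k\le\sigma<1$, so that $K_k>0$ and, by hypothesis, Assumption~\ref{assumption} is obeyed. Theorem~\ref{Theorem3} then applies verbatim and yields the reachability half of the claim, $\theta^p_f\in(\tilde{\theta}_{m0},\tilde{\theta}_{M0})$. This accounts for the first factor in the intersection \eqref{error_angle}, and the remaining task is to establish the two-sided deviation bound about $\overline{\theta}_f$.

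For that bound, I would plug $K_k=K(1\pm\sigma_k)$ into \eqref{theta_f_new}; the common factor $K$ cancels between numerator and denominator, leaving $\theta^p_f=\bigl(\sum_k a_k\tilde{\theta}_{k0}\bigr)/\bigl(\sum_k a_k\bigr)$ with weights $a_k=1/(1\pm\sigma_k)$. Since $0\le\sigma_k\le\sigma<1$, every weight lies in the range $a_k\in[\,1/(1+\sigma),\,1/(1-\sigma)\,]$. Because the hypothesis forces $\tilde{\theta}_{k0}\le0$ for all $k$ (and hence $\overline{\theta}_f\le0$), I would rewrite the ratio using $|\tilde{\theta}_{k0}|=-\tilde{\theta}_{k0}\ge0$, so that $\theta^p_f/\overline{\theta}_f$ becomes the quotient of a weighted average of the $|\tilde{\theta}_{k0}|$ (with weights $a_k$) by their simple average; this quotient is manifestly nonnegative.

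The core estimate is to bound the numerator and the denominator separately by the extreme admissible weights. Using $a_k\le 1/(1-\sigma)$ in the numerator together with $a_k\ge 1/(1+\sigma)$ in the denominator gives $\theta^p_f/\overline{\theta}_f\le(1+\sigma)/(1-\sigma)$, while reversing the two roles gives $\theta^p_f/\overline{\theta}_f\ge(1-\sigma)/(1+\sigma)$. Multiplying through by $\overline{\theta}_f\le0$, which reverses both inequalities, converts these into $\overline{\theta}_f\,\tfrac{1+\sigma}{1-\sigma}\le\theta^p_f\le\overline{\theta}_f\,\tfrac{1-\sigma}{1+\sigma}$. A direct rearrangement then identifies the right-hand side with $\overline{\theta}_f+\Delta\hat{\theta}^p_f$ and the left-hand side with $\overline{\theta}_f-\Delta\check{\theta}^p_f$ for the stated $\Delta\hat{\theta}^p_f=-\tfrac{2\sigma}{1+\sigma}\overline{\theta}_f$ and $\Delta\check{\theta}^p_f=-\tfrac{2\sigma}{1-\sigma}\overline{\theta}_f$. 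Intersecting this interval with the reachability interval from the first step produces \eqref{error_angle}.

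The main obstacle I anticipate is justifying the separate bounding of numerator and denominator: although the same weights $a_k$ occur in both and their extremal values are not simultaneously attainable, the resulting (non-tight) bound remains valid and, conveniently, reproduces exactly the claimed constants $2\sigma/(1\mp\sigma)$. The remaining delicate points are purely bookkeeping — carefully tracking the sign reversals forced by $\tilde{\theta}_{k0}\le0$ and $\overline{\theta}_f\le0$, and disposing of the degenerate case $\overline{\theta}_f=0$, in which both deviations vanish and the statement is trivial.
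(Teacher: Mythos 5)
Your proposal is correct and follows essentially the same route as the paper's own proof: substitute the perturbed gains $K(1\pm\sigma_k)$ into the convex-combination formula \eqref{theta_f_new}, bound the numerator and denominator weights separately by their extremes $1/(1\pm\sigma)$, use the non-positivity of $\tilde{\theta}_{k0}$ (hence of $\overline{\theta}_f$) to track the sign reversals, and intersect the resulting interval with the reachability interval $(\tilde{\theta}_{m0},\tilde{\theta}_{M0})$ from Theorem~\ref{Theorem3}. The only cosmetic difference is that you phrase the estimate as a two-sided bound on the ratio $\theta^p_f/\overline{\theta}_f$, whereas the paper bounds the extremal expressions $\check{\theta}^p_f$ and $\hat{\theta}^p_f$ directly; both give the same endpoints $\left(\frac{1+\sigma}{1-\sigma}\right)\overline{\theta}_f$ and $\left(\frac{1-\sigma}{1+\sigma}\right)\overline{\theta}_f$.
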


\begin{proof}
Since the erroneous controller gain of the $k^\text{th}$ agent is $K \pm \epsilon_k$, by using \eqref{theta_f_1}, we can write
\begin{equation}
\label{error1}{\theta}^p_f =  \sum_{k=1}^{N}\left\{\left(\frac{1}{K \pm \epsilon_k}\right){\Big /}\left({\sum_{j=1}^{N}\frac{1}{K \pm \epsilon_j}}\right)\right\}\tilde{\theta}_{k0}.
\end{equation}
Since $\tilde{\theta}_{k0}, \forall k$, are non-positive as per our consideration, the lower bound of $\tilde{\theta}^p_f$, denoted by $\check{\theta}^p_f$, is given by
\begin{equation}
\label{error2}\check{\theta}^p_f =  \sum_{k=1}^{N}\left\{\left(\frac{1}{K - \epsilon_k}\right){\Big /}\left({\sum_{j=1}^{N}\frac{1}{K + \epsilon_j}}\right)\right\}\tilde{\theta}_{k0}.
\end{equation}
Substituting $\epsilon_k = \sigma_k K$, in \eqref{error2}, we get
\begin{eqnarray}
\check{\theta}^p_f  &=& \sum_{k=1}^{N}\left\{\left(\frac{1}{1 - \sigma_k}\right){\Big /}\left({\sum_{j=1}^{N}\frac{1}{1 + \sigma_j}}\right)\right\}\tilde{\theta}_{k0}\\
& \geq & \sum_{k=1}^{N}\left\{\left(\frac{1}{1 - \sigma}\right){\Big /}\left(\frac{N}{1 + \sigma}\right)\right\}\tilde{\theta}_{k0}=  \left(\frac{1 + \sigma}{1 - \sigma}\right)\overline{{\theta}}_f.
\end{eqnarray}
Similarly, the upper bound of ${\theta}^p_f$ is given by
\begin{eqnarray}
\hat{\theta}^p_f  &=& \sum_{k=1}^{N}\left\{\left(\frac{1}{1 + \sigma_k}\right){\Big /}\left({\sum_{j=1}^{N}\frac{1}{1 - \sigma_j}}\right)\right\}\tilde{\theta}_{k0}\\
& \geq & \sum_{k=1}^{N}\left\{\left(\frac{1}{1 + \sigma}\right){\Big /}\left(\frac{N}{1 - \sigma}\right)\right\}\tilde{\theta}_{k0}=  \left(\frac{1 - \sigma}{1 + \sigma}\right)\overline{{\theta}}_f.
\end{eqnarray}
Thus, the maximum values of the lower and upper deviations of ${\theta}^p_f$ from its mean value $\overline{{\theta}}_f$ are, respectively,
\begin{eqnarray}
\Delta{\check{\theta}}^p_f &=&  \overline{{\theta}}_f - \left(\frac{1 + \sigma}{1 - \sigma}\right)\overline{{\theta}}_f = -\left(\frac{2\sigma}{1 - \sigma}\right)\overline{{\theta}}_f\\
\Delta{\hat{\theta}}^u_f &=&  \left(\frac{1 - \sigma}{1 + \sigma}\right)\overline{{\theta}}_f - \overline{{\theta}}_f = -\left(\frac{2\sigma}{1 + \sigma}\right)\overline{{\theta}}_f.
\end{eqnarray}
It follows from the above discussion that
\begin{equation}
{\theta}^p_f \in \left[\overline{{\theta}}_f - \Delta{\check{\theta}}^p_f, \overline{{\theta}}_f + \Delta{\hat{\theta}}^p_f\right].
\end{equation}
However, since Theorem~\ref{Theorem3} ensures that ${\theta}^p_f \in (\tilde{\theta}_{m0}, \tilde{\theta}_{M0})$ when there is heterogeneity in the controller gains, the actual set of angles reachable by ${\theta}^p_f$ is \eqref{error_angle}. This completes the proof.
\end{proof}

\begin{remark}
The lower and upper bounds of the perturbed reference direction ${\theta}^p_f$, when $\tilde{\theta}_{k0}$, associated to the $k^\text{th}$ agent, is either positive or negative, can be obtained in the same manner as in Theorem~\ref{Theorem4} by appropriately minimizing or maximizing the coefficients of $\tilde\theta_{k0}, \forall k$, depending upon its sign. Thus, these cases are not presented here to avoid repetition.
\end{remark}

\subsection{Case~2: $\omega_0 \neq 0$}
In this case, the motion of each agent is governed by \eqref{control3}. As a result, the agents move around their individual circular orbits at an angular frequency $\omega_0$ in balanced formation. For ease of analysis in this framework, it is convenient to use a frame of reference that rotates at the same frequency $\omega_0$. Thus, by replacing $\theta_k \rightarrow \theta_k + \omega_0 t$ in \eqref{control3}, which corresponds to a rotating frame at frequency $\omega_0$, we get the turn rate of the $k^\text{th}$ agent as
\begin{equation}
\dot{\theta}_k = -\frac{K_k}{N}\sum_{j=1}^{N}\sin(\theta_j - \theta_k),
\end{equation}
which is the same as \eqref{control3_1}. Therefore, all the analysis remains unchanged in a rotating frame of reference, and hence omitted.

The usefulness of heterogeneous gains for this case lies in the following aspect. Note that the center of the circular orbit traversed by the $k^\text{th}$ agent is given by
\begin{equation}
c_k = r_k + i\omega^{-1}_0e^{i\theta_k}.
\end{equation}
As discussed above, since the heading angles $\theta_k$ of the agents depend on the heterogeneous gains, the center $c_k$ of the individual circular orbit depends on the heterogeneous control gains. Thus, various circular orbits of the motion of agents in balanced formation can be obtained by using heterogeneous control gains, and hence, the area of interest can be explored more effectively.

%%%%%%%%%%%%%%%%%%%%%%%%%%%%%%%%%%%%%%%%%%%%%%%%%%%%%%%%%%%%%%%%%%%%%%%%%%%%%%%%%%%%%%%%%%%%%%%%%%%%%%%%%%%%%%%%%%%%%%%%%%%%%%%%%%
\begin{figure*}
\centering
\subfigure[]{\includegraphics[scale=0.4]{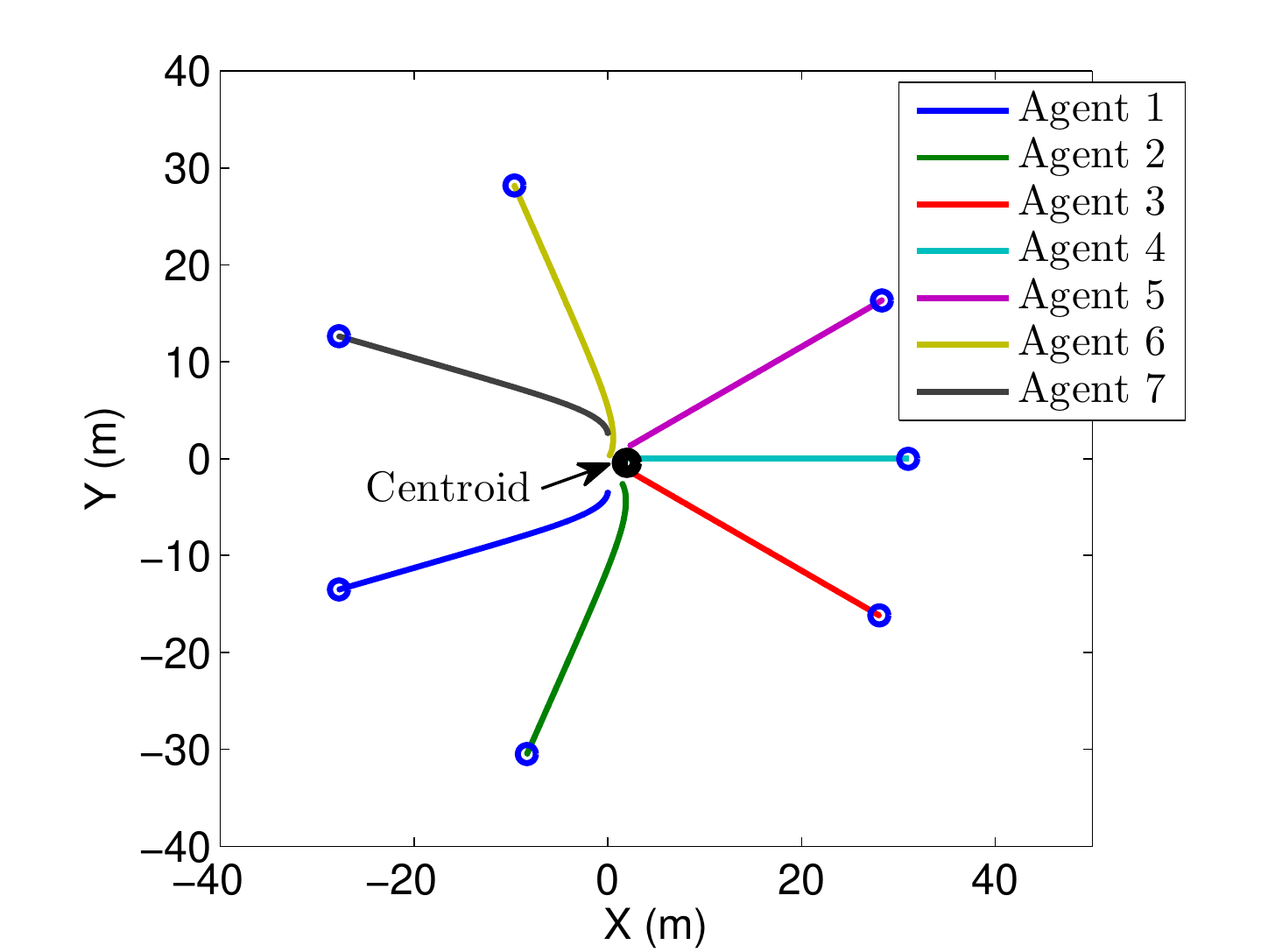}}
\subfigure[]{\includegraphics[scale=0.4]{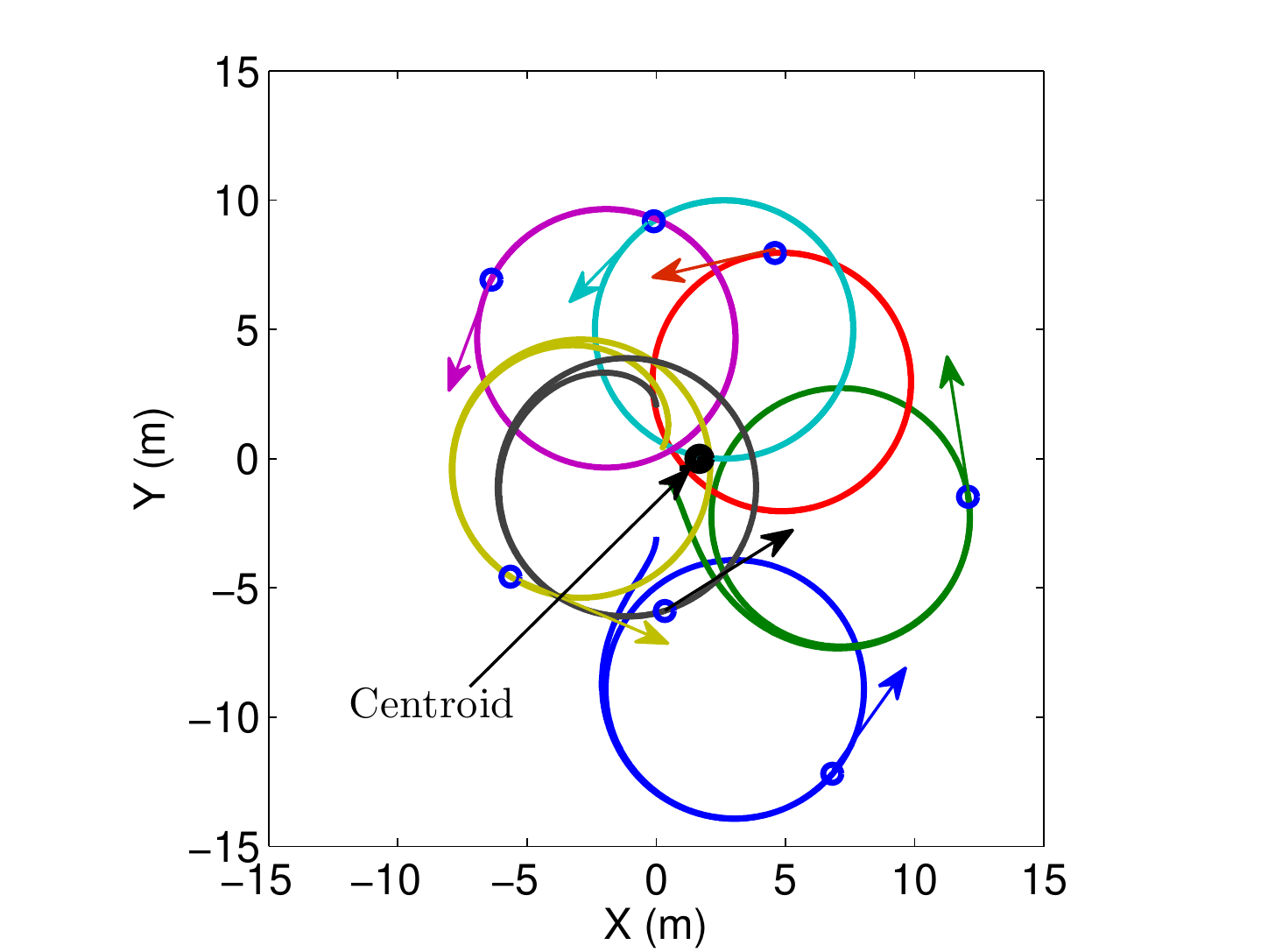}}
\subfigure[]{\includegraphics[scale=0.4]{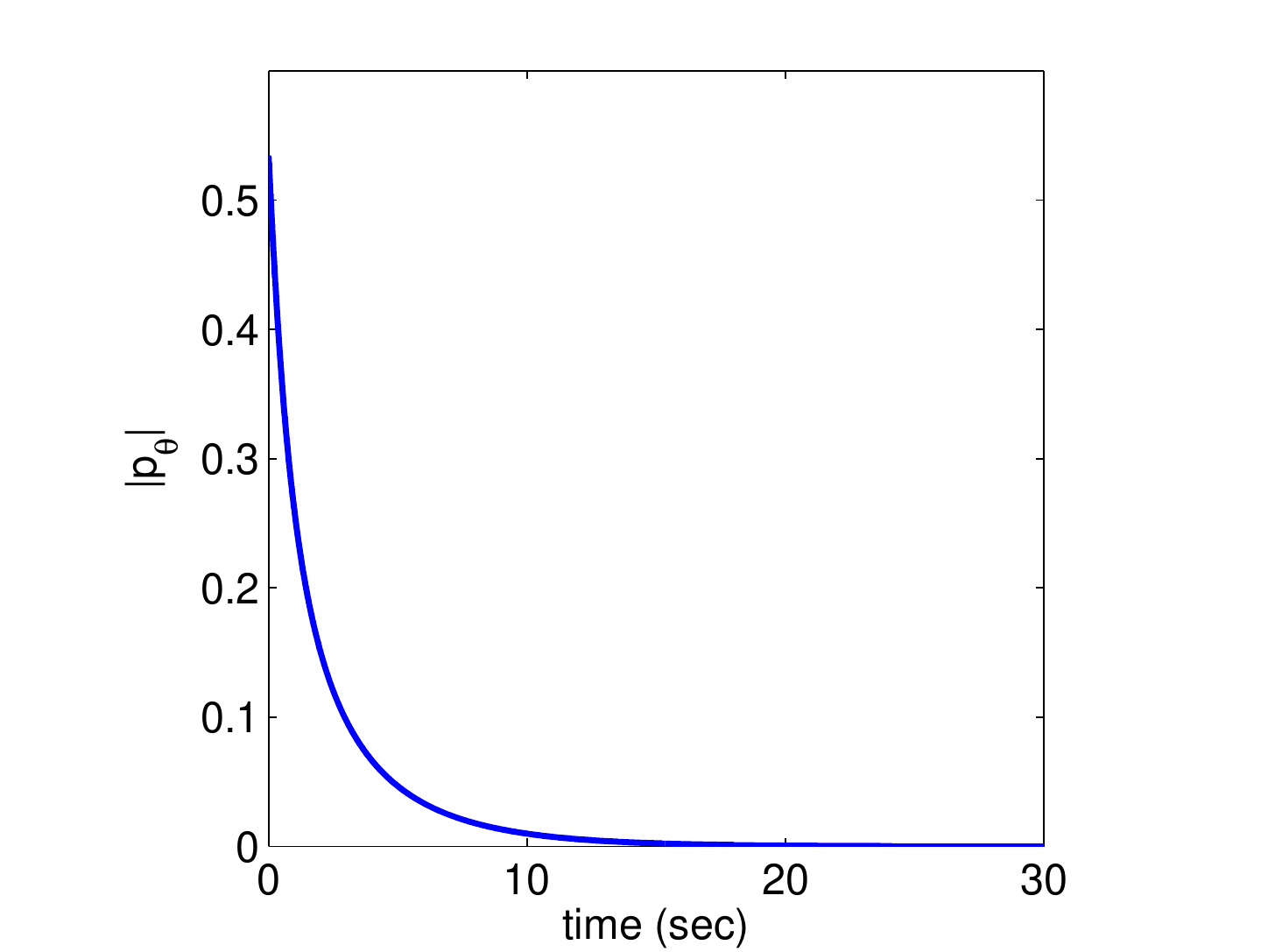}}
\caption{Balanced formation of seven agents under the control law \eqref{control3} for heterogeneous gains $K_{set1} = \{2, 1, 0, 0, 0, 1, 2\}$. Trajectories of the agents for $(a)$ $\omega_0 = 0$. $(b)$ $\omega_0 = 0.2~\text{rad/sec}$. $(c)$ Convergence of the phase order parameter $p_\theta$ to zero with time for $\omega_0 = 0$.}
\label{balanced formation seven agents}
\end{figure*}
%%%%%%%%%%%%%%%%%%%%%%%%%%%%%%%%%%%%%%%%%%%%%%%%%%%%%%%%%%%%%%%%%%%%%%%%%%%%%%%%%%%%%%%%%%%%%%%%%%%%%%%%%%%%%%%%%%%%%%%%%%%%%%%%%%

%%%%%%%%%%%%%%%%%%%%%%%%%%%%%%%%%%%%%%%%%%%%%%%%%%%%%%%%%%%%%%%%%%%%%%%%%%%%%%%%%%%%%%%%%%%%%%%%%%%%%%%%%%%%%%%%%%%%%%%%%%%%%%%%%%
\begin{figure*}
\centering
\subfigure[]{\includegraphics[scale=0.4]{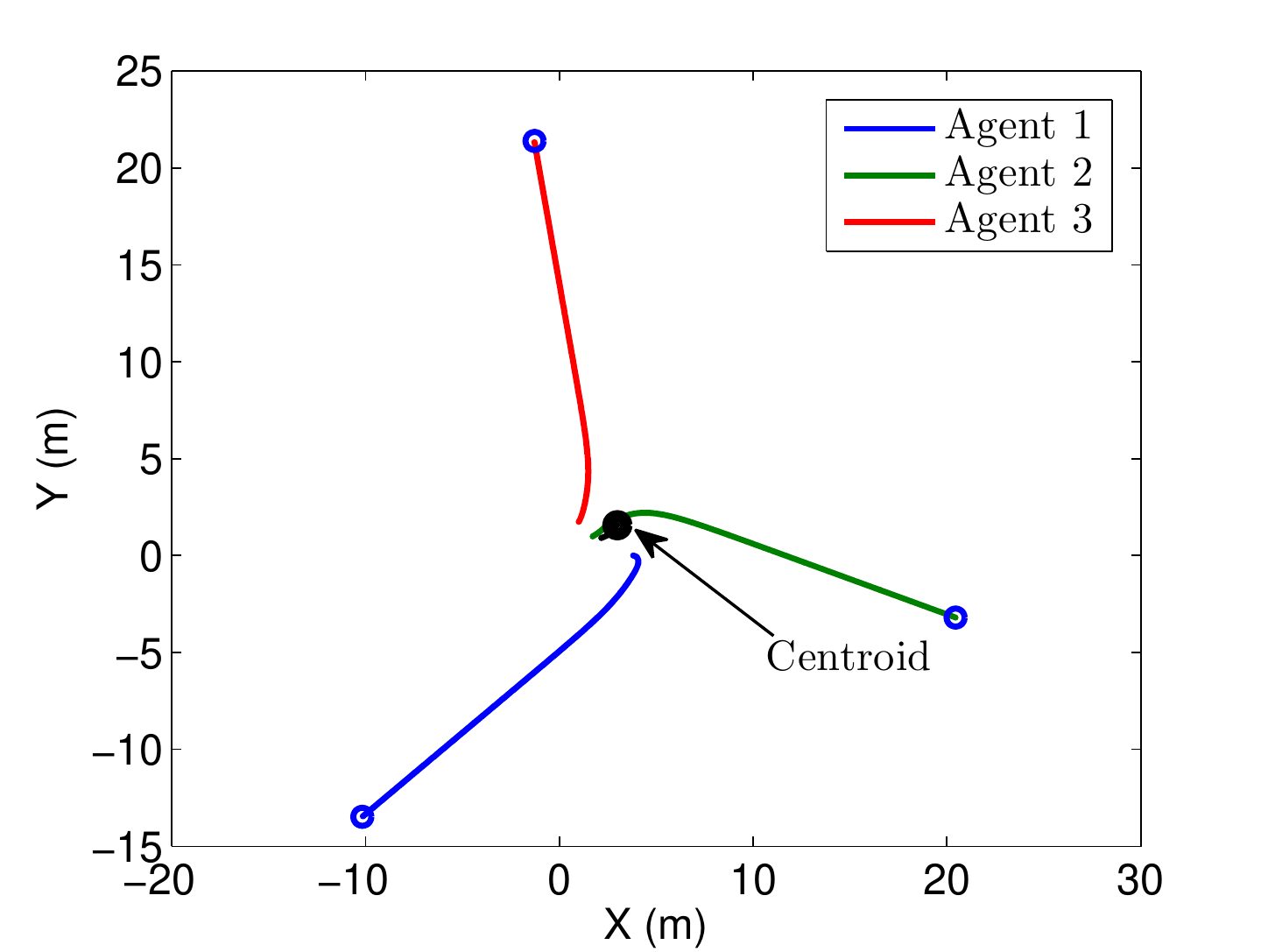}}\hspace{1cm}
\subfigure[]{\includegraphics[scale=0.4]{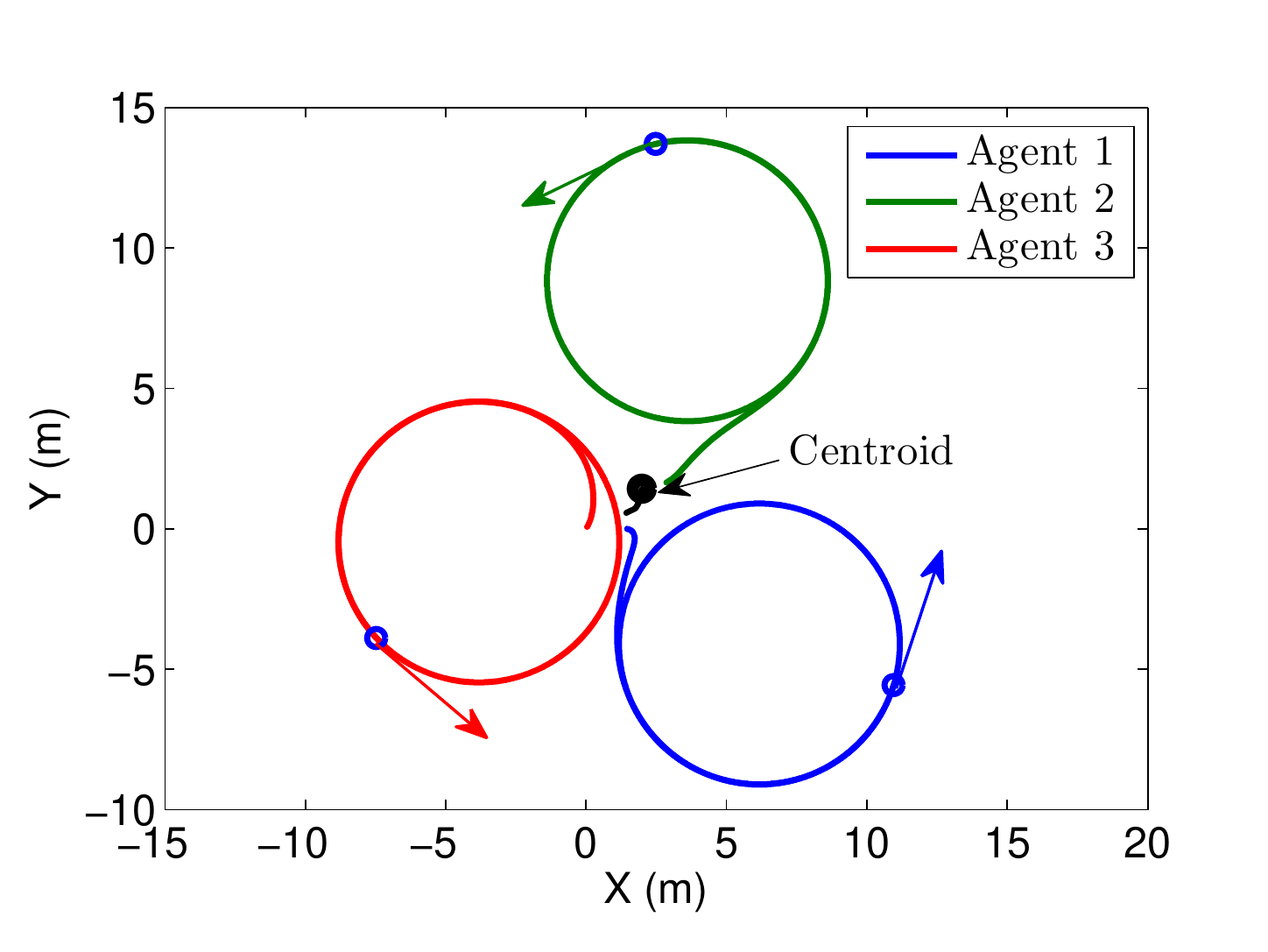}}\hspace{1cm}
\subfigure[]{\includegraphics[scale=0.4]{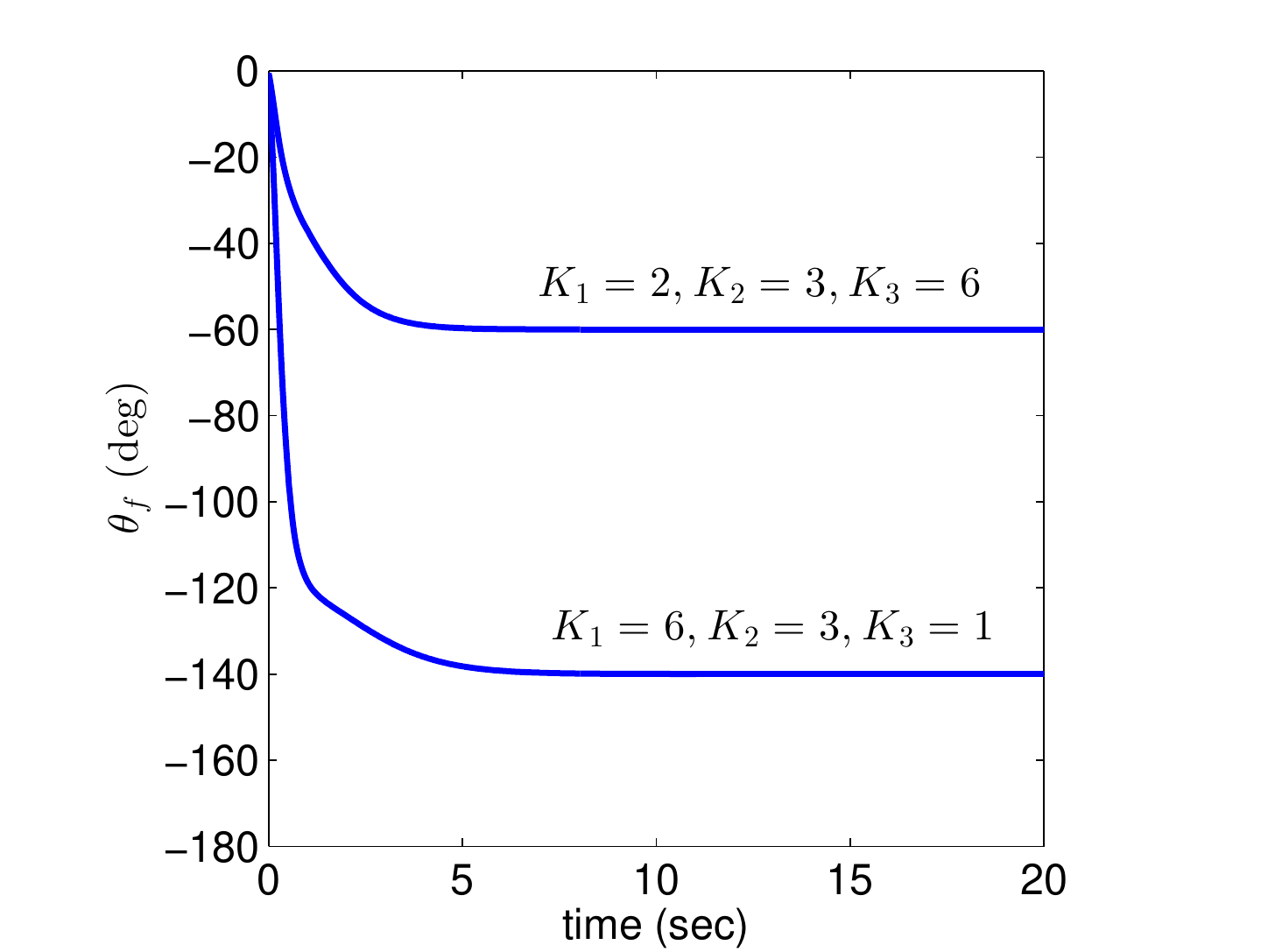}}\hspace{1cm}
\subfigure[]{\includegraphics[scale=0.4]{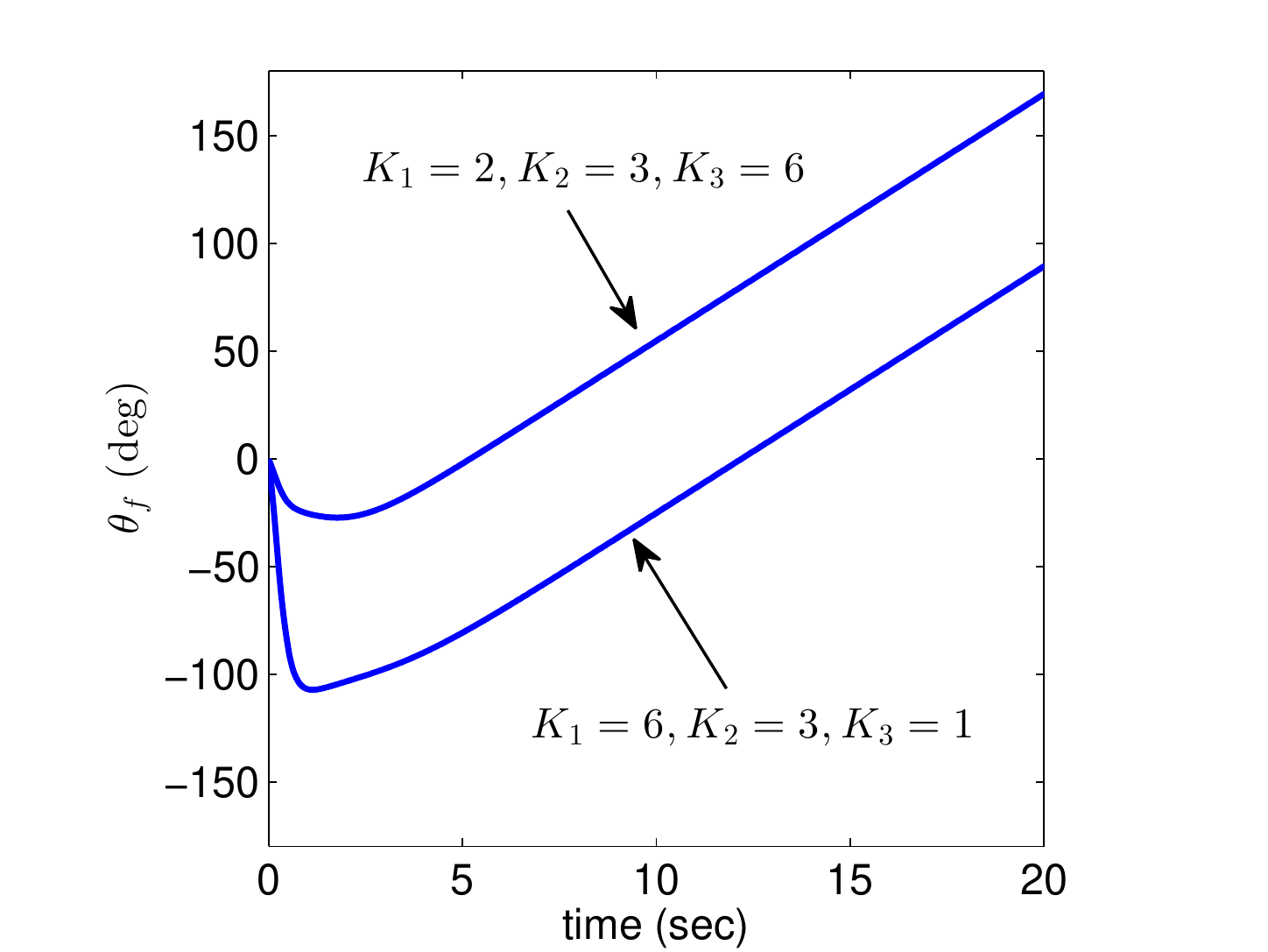}}
\caption{Balanced formation of three agents under the control law \eqref{control3} with the two set of gains $K_{set2} = \{2, 3, 6\}$ and $K_{set3} = \{6, 3, 1\}$. Trajectories of the agents when $(a)$ $\omega_0 = 0$. $(b)$ $\omega_0 = 0.2~\text{rad/sec}$. Variations in the reference direction $\theta_f$ with time for the two sets of gains when $(c)$ $\omega_0 = 0$. $(d)$ $\omega_0 = 0.2~\text{rad/sec}$. Note that the reference direction $\theta_f$ is actually the velocity direction of agent\#1 as per our convention.}
\label{balanced formation three agents}
\end{figure*}
%%%%%%%%%%%%%%%%%%%%%%%%%%%%%%%%%%%%%%%%%%%%%%%%%%%%%%%%%%%%%%%%%%%%%%%%%%%%%%%%%%%%%%%%%%%%%%%%%%%%%%%%%%%%%%%%%%%%%%%%%%%%%%%%%%

\begin{example}\label{ex1}
Consider $N=7$ agents starting from arbitrary initial positions with initial heading angles $\pmb{\theta}(0) = [-90^\circ, -60^\circ, -30^\circ, 0^\circ, 30^\circ, 60^\circ, 90^\circ]^T$. One can observe that the initial phase order parameter vector $p_{\theta_0}$ of these agents lies along the real axis. According to Lemma~\ref{lem1}, consider the heterogeneous gains $K_{set1} = \{K_k, k = 1, \ldots N\} = \{2, 1, 0, 0, 0, 1, 2\}$. In Fig.~\ref{balanced formation seven agents}, balanced formation of the agents for $K_{set1}$ under control \eqref{control3}, is shown for both $\omega_0 = 0$ and $\omega_0 = 0.2~\text{rad/sec}$. The trajectories of the agents are shown in Figs.~$3(a)$ and $3(b)$, while the variation of phase order parameter $p_\theta$ with time is shown in Fig.~$3(c)$ for $\omega_0 = 0$, and is similar for $\omega_0 = 0.2~\text{rad/sec}$, hence, this is not shown. Note that phase balancing is achievable if the heterogeneous gains are zero for $\lfloor N/2 \rfloor$ agents.
\end{example}

\begin{example}\label{ex2}
In this example, we consider three agents starting from arbitrary initial positions with initial heading angles $\pmb{\theta}(0) = [0^\circ, 30^\circ, 60^\circ]^T$. In Fig.~\ref{balanced formation three agents}, balanced formation for the two set of gains $K_{set2} = \{2, 3, 6\}$ and $K_{set3} = \{6, 3, 1\}$ is shown under the control \eqref{control3} for both $\omega_0 = 0$ and $\omega_0 = 0.2~\text{rad/sec}$. In Figs.~$4(a)$ and $4(b)$, the trajectories of the agents are shown only for $K_{set3}$, and are similar for $K_{set2}$ while the corresponding variation in the reference direction $\theta_f$ in time is shown in Fig.~$4(c)$ for the two sets of gains. Here, since $\tilde{\theta}_{m0} = -180^\circ$ and $\tilde{\theta}_{M0} = 0^\circ$, it follows from Theorem~\ref{Theorem3} that the reference direction $\theta_f \in (-180^\circ, 0)$ when $\omega_0 = 0$. Note that, in the case when $\omega_0 \neq 0$, since the agents continue to rotate around individual circles in balanced formation, the reference direction $\theta_f$ (which is the velocity direction of the agent\#1) keeps increasing with time.
\end{example}

%%%%%%%%%%%%%%%%%%%%%%%%%%%%%%%%%%%%%%%%%%%%%%%%%%%%%%%%%%%%%%%%%%%%%%%%%%%%%%%%%%%%%%%%%%%%%%%%%%%%%%%%%%%%%%%%%%%%%%%%%%%%%%%%%%%%%%%%%%%%%%%%%%%%%%%%%%%%%%%%%%%%

\section{Two Agents: A Few Interesting Results}
In this section, we address the special case of two agents and show that, unlike $K_k > 0, \forall k$, their exists a less restrictive condition on the heterogeneous gains $K_k$, which results in further expansion of the reachable set of the reference direction of the agents in balanced formation. We present the results only for $\omega_0 = 0$ since the analysis is unchanged for $\omega_0 \neq 0$ in a rotating frame of reference by redefining $\theta_k \rightarrow \theta_k + \omega_0 t$ for the $k^\text{th}$ agent.

\subsection{Analysis of Heterogeneous Controller Gains}
For $N=2$, the time derivative of the potential function ${U}(\pmb{\theta})$ from \eqref{U_dot} is given by
\begin{equation}
\label{U_dot_two_agents}\dot{U}(\pmb{\theta}) \big{|}_{N=2} = -\frac{1}{2^2}(K_1 + K_2) \sin^2(\theta_2-\theta_1),
\end{equation}
which implies that the potential ${U}(\pmb{\theta})$ is decreasing if $K_1 + K_2 > 0$ since $\sin^2(\theta_2-\theta_1) > 0$. Moreover, it is easy to verify that $\sin^2(\theta_2-\theta_1) = 0$, only for the trivial cases when both the agents are already synchronized or balanced. Using Theorem~\ref{Theorem1}, it follows from \eqref{U_dot_two_agents} that if $K_1 + K_2 > 0$, agents asymptotically stabilize to a balanced formation. Hence, phase balancing of the agents is achievable for both positive and negative values of gains $K_1$ and $K_2$ provided that $K_1 + K_2 > 0$.

%%%%%%%%%%%%%%%%%%%%%%%%%%%%%%%%%%%%%%%%%%%%%%%%%%%%%%%%%%%%%%%%%%%%%%%%%%%%%%%%%%%%%%%%%%%%%%%%%%%%%%%%%%%%%%%%%%%%%%%%%%%%%%%%%%%%%%%%%%%%%%%

\begin{figure}[!t]
\centering
\includegraphics[scale=0.4]{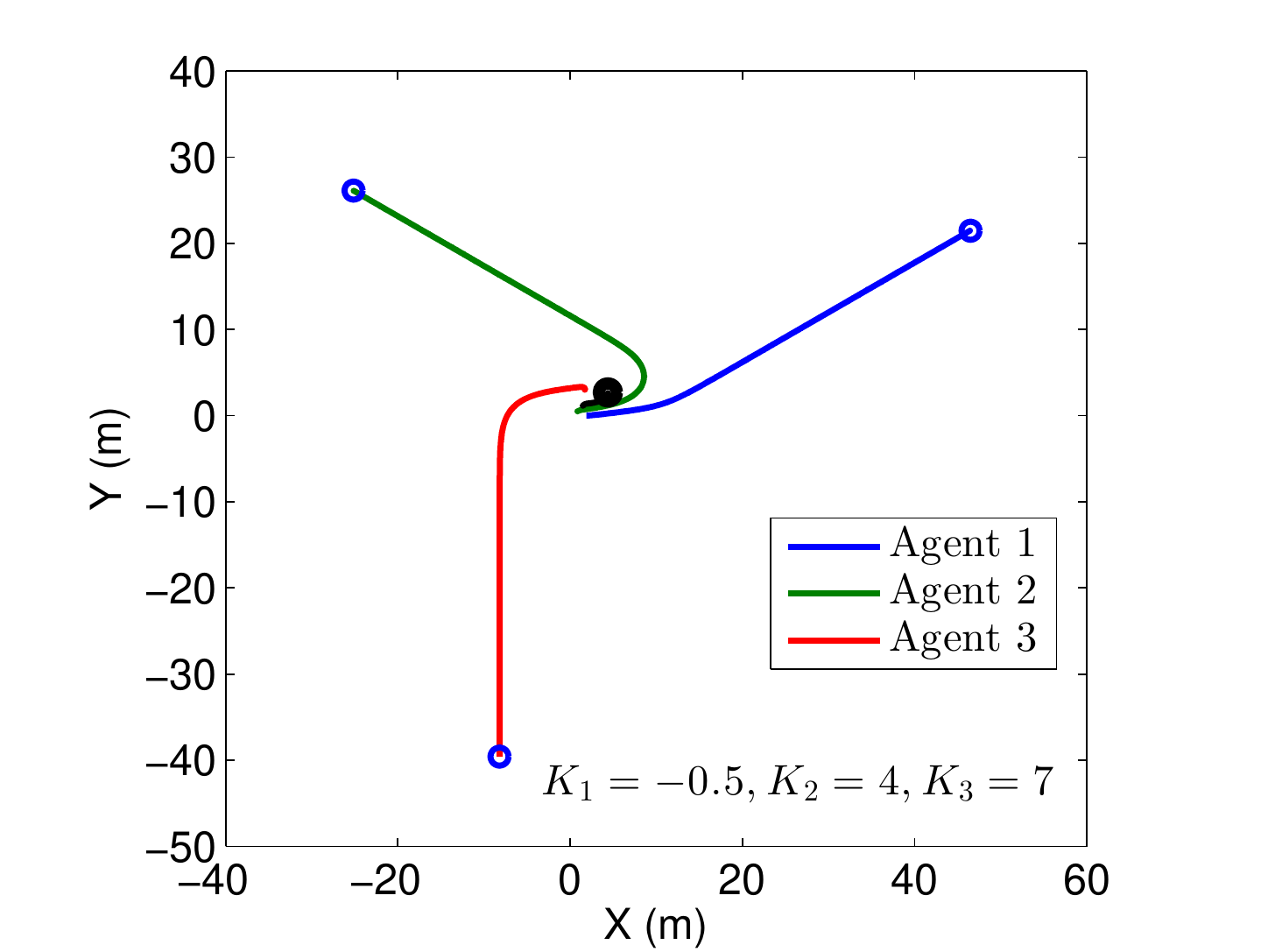}
\caption{Balanced formation of three agents under the control \eqref{control3_1} with heterogeneous gains ${K}_\text{set4} = \{-0.5, 4, 7\}$. The steady-state orientation of the agent\#1 is $\theta_f = 30^\circ$, which is outside the interval $(\tilde{\theta}_{m0}, \tilde{\theta}_{M0}) = (-180^\circ, 0)$.}
\label{balanced formation three agents with negative gain}
\end{figure}

%%%%%%%%%%%%%%%%%%%%%%%%%%%%%%%%%%%%%%%%%%%%%%%%%%%%%%%%%%%%%%%%%%%%%%%%%%%%%%%%%%%%%%%%%%%%%%%%%%%%%%%%%%%%%%%%%%%%%%%%%%%%%%%%%%%%%%%%%%%%%%%%%%%

\begin{remark}
For $N>2$, we did not come up with a simplified expression for the sufficient condition on the controller gains $K_k$, however, simulation results show that their exists a combination of both positive and negative values of the controller gains $K_k$ that gives rise to a balanced formation with an extended set of the reference direction ${\theta}_f$. For example, the reference direction (that is, the velocity direction of agent\#1) in balanced formation of the three-agent system considered in Example~\ref{ex2} lies outside the interval $(-180^\circ, 0)$ as shown in Fig.~\ref{balanced formation three agents with negative gain} for the set of gains ${K}_\text{set4} = \{-0.5, 4, 7\}$.
\end{remark}

\subsection{Reachable Velocity Directions}
In this subsections, we describe a theorem, which says that the reachable set of the reference direction $\theta_f$, given by \eqref{theta_f_new}, further expands when both positive and negative values of gains $K_1$ and $K_2$, satisfying $K_1 + K_2 > 0$, are selected.

Without loss of generality and for the sake of clarity, we consider that the agents start with initial headings $\theta_{10} (= 0) < \theta_{20} < \pi$, which can be ensured by via a rotation of the original coordinate system by an angle $\theta_R \in (-\pi, \pi)$, which is chosen such that the real axis of this new coordinate system lies along that initial unit vector $e^{i\theta_{k0}}, k =1, 2$, which ensures that both the initial heading angles $\theta_{10}, \theta_{20}$, in this new coordinate system, are non-negative (measured anti-clockwise from the new real-axis). Based on this, the following theorem is now stated.

\begin{thm}\label{Theorem5}
Consider two agents, with dynamics given by \eqref{modelNew}, under the control law \eqref{control3}. Let the initial heading angles of the agents be given by $\theta_{10} (= 0) < \theta_{20} < \pi$. Then, any ${\theta}_f \in [-\pi, \pi]$, which is the reference direction of this system of two agents in balanced formation, is reachable if and only if there exist controller gains $K_1$ and $K_2$ such that $K_1 + K_2 > 0$.
\end{thm}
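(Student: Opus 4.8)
The plan is to reduce the two-agent problem to a scalar phase-difference flow and then read off the reference direction from a conservation law that is insensitive to the signs of the gains. First I would set $\phi = \theta_2 - \theta_1$ and subtract the two copies of \eqref{control3_1}, obtaining $\dot{\phi} = \tfrac{1}{2}(K_1+K_2)\sin\phi$, so that only the sum $K_1+K_2$ enters. With the standing assumption $\theta_{10}=0 < \theta_{20} < \pi$ we have $\phi(0)=\theta_{20}\in(0,\pi)$; when $K_1+K_2>0$ the right-hand side is positive on $(0,\pi)$, hence $\phi$ increases monotonically and converges to the stable equilibrium $\pi$ without ever wrapping, giving $\theta_{2f}-\theta_{1f}=\pi$ exactly. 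This is precisely the balanced configuration of Lemma~\ref{lem2}, and it also settles the necessity direction: if $K_1+K_2=0$ then $\phi$ is frozen at $\theta_{20}\neq\pi$, and if $K_1+K_2<0$ the flow drives $\phi\to 0$ (synchronization), so in both cases no balanced formation is attained. Reaching any balanced reference direction therefore forces $K_1+K_2>0$, consistent with \eqref{U_dot_two_agents}.

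For the converse I would observe that the identity underlying \eqref{angle_relation1}, namely $\dot{\theta}_1/K_1+\dot{\theta}_2/K_2=0$, follows solely from the antisymmetry of $\sin(\theta_j-\theta_k)$ and is valid for any nonzero $K_1,K_2$ regardless of sign. Integrating, letting $t\to\infty$, and inserting $\theta_{1f}=\theta_f$, $\theta_{2f}=\theta_f+\pi$, $\theta_{10}=0$ gives the closed form
\begin{equation*}
\theta_f=\frac{(\theta_{20}-\pi)\,K_1}{K_1+K_2}=\alpha\lambda,\qquad \alpha:=\theta_{20}-\pi\in(-\pi,0),\quad \lambda:=\frac{K_1}{K_1+K_2},
\end{equation*}
which is exactly the $N=2$ instance of \eqref{theta_f_new}, now certified for negative gains as well.

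It then remains to compute the image of $\theta_f$ as $(K_1,K_2)$ ranges over $\{K_1+K_2>0,\ K_1\neq0,\ K_2\neq0\}$, i.e.\ as $\lambda$ sweeps $\mathbb{R}\setminus\{0,1\}$. I would split into three regimes: both gains positive yields $\lambda\in(0,1)$ and $\theta_f\in(\alpha,0)$, recovering the restricted interval of Theorem~\ref{Theorem3}; choosing $K_2<0<K_1$ yields $\lambda>1$ and drives $\theta_f$ below $\alpha$ down to $-\pi$; and choosing $K_1<0<K_2$ yields $\lambda<0$ and drives $\theta_f$ above $0$ up to $\pi$. Read as directions on $\mathbb{S}^1$, these regimes jointly cover all of $[-\pi,\pi]$; the two real outputs $0=\tilde{\theta}_{M0}$ and $\alpha=\tilde{\theta}_{m0}$ that are literally excluded (in agreement with Corollary~\ref{cor4}) are still attained modulo $2\pi$ by taking $\lambda$ in the corresponding unbounded range. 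The step I expect to be most delicate is this final bookkeeping: one must verify that the closed-form derivation remains legitimate throughout the negative-gain regimes---the monotone, non-wrapping convergence $\phi\to\pi$ is exactly what guarantees $\theta_{2f}=\theta_f+\pi$ with no spurious multiple of $2\pi$---and one must state the surjectivity onto $[-\pi,\pi]$ carefully, distinguishing literal real values from directions on the circle.
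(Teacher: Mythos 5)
Your proposal is correct, and its backbone is the same as the paper's: the conservation identity $\dot{\theta}_1/K_1+\dot{\theta}_2/K_2=0$, the resulting affine formula $\theta_f=\lambda(\theta_{20}-\pi)$ with $\lambda=K_1/(K_1+K_2)$ (the paper's \eqref{theta_f_two_agents}), and a case analysis over the signs of the gains for both directions of the equivalence. Where you genuinely depart is in how you certify that this formula may be used at all once a gain is negative. The paper imports convergence from the Lyapunov computation \eqref{U_dot_two_agents} and then substitutes $\theta_{2f}=\theta_f+\pi$ into the integrated identity \eqref{angle_relation1}; that substitution silently reuses the conclusion of Theorem~\ref{Theorem2}, which was proved under Assumption~\ref{assumption} with positive gains, and it never rules out a steady-state separation of $-\pi$ or $\pi+2k\pi$ in the mixed-sign regime. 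Your reduction to the scalar flow $\dot{\phi}=\tfrac{1}{2}(K_1+K_2)\sin\phi$ closes exactly this gap: monotone, non-wrapping convergence $\phi\to\pi$ from $\phi(0)=\theta_{20}\in(0,\pi)$ pins the separation to $+\pi$, and the same equation delivers the necessity direction ($K_1+K_2=0$ freezes $\phi$, $K_1+K_2<0$ synchronizes), which the paper instead gets from \eqref{U_dot_two_agents}. (The paper does obtain the explicit solution $\theta_2-\theta_1=2\tan^{-1}(\phi_0 e^{\kappa t})$, but only in the later section on explicit expressions, and does not invoke it in this proof.) The remaining difference is the endpoint bookkeeping, and here the two arguments trade off: over $\{K_1+K_2>0,\ K_1\neq 0,\ K_2\neq 0\}$ the literal image of $\theta_f$ is $\mathbb{R}\setminus\{0,\theta_{20}-\pi\}$, and your mod-$2\pi$ identification is a legitimate way to claim those two excluded directions, consistent with Corollary~\ref{cor4}; the paper reaches them instead by allowing $\alpha=0$ or $\beta=0$ in its sufficiency construction, which forces one gain to vanish and thus sits uneasily with the requirement $K_k\neq 0$ under which its own formula was derived. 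Your treatment is the more self-contained of the two; just state explicitly (as the paper does at the start of its two-agent sections) that the $\omega_0\neq 0$ case is absorbed by the rotating-frame substitution $\theta_k\to\theta_k+\omega_0 t$, since the conservation identity itself fails in the fixed frame when $\omega_0\neq 0$.
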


\begin{proof}
For $N=2$, the reference direction ${\theta}_f$, by using \eqref{theta_f_new}, is given by
\begin{equation}
\label{theta_f_two_agents}{\theta}_f =  \left(\dfrac{K_2}{K_1 + K_2}\right) \tilde{\theta}_{10} + \left(\dfrac{K_1}{K_1 + K_2}\right) \tilde{\theta}_{20}
\end{equation}
Substituting
\begin{equation}
\label{lamda1}\lambda_1 = \left(\dfrac{K_2}{K_1 + K_2}\right)~~ \text{and}~~\lambda_2 = \left(\dfrac{K_1}{K_1 + K_2}\right)
\end{equation}
in \eqref{theta_f_two_agents}, we get
\begin{equation}
\label{theta_f_two_agents_1}{\theta}_f = \lambda_1\tilde{\theta}_{10} + \lambda_2\tilde{\theta}_{20}.
\end{equation}
Note that the parameters $\lambda_1$ and $\lambda_2$ satisfy $\lambda_1 + \lambda_2 = 1$. As per our consideration of initial velocity directions of the agents, here, it holds that $\tilde{\theta}_{m0} = \tilde{\theta}_{20}$ and $\tilde{\theta}_{M0} = \tilde{\theta}_{10}$, where $\tilde{\theta}_{m0}$, and $\tilde{\theta}_{M0}$, are defined in Corollory~\ref{cor4}. Now, depending upon the various choices of gains $K_1$ and $K_2$ satisfying $K_1 + K_2 > 0$, we consider the following three cases.

{\it Case~$1$}:
Let us assume that the gains $K_1 > 0$ and $K_2 > 0$. It implies that $\lambda_1 > 0$ and $\lambda_2 > 0$. In this situation, the proof directly follows from Theorem~\ref{Theorem3}, which ensures that ${\theta}_f$ is reachable iff
\begin{equation}
{\theta}_f \in (\tilde{\theta}_{m0}, \tilde{\theta}_{M0}).
\end{equation}
Substituting for $\tilde{\theta}_{m0}$, and $\tilde{\theta}_{M0}$, we have
\begin{equation}
{\theta}_f \in (\tilde{\theta}_{20}, \tilde{\theta}_{10}).
\end{equation}

{\it Case~$2$}:
Assume that the gains $K_1 > 0$, $K_2 \leq 0$ and satisfy $K_1 + K_2 > 0$. It implies that $\lambda_1 \leq 0$ and $\lambda_2 > 0$. Thus, by using relation $\lambda_2 = 1 - \lambda_1$, \eqref{theta_f_two_agents_1} can be written as
\begin{equation}
\label{relation3_1}{\theta}_f - \tilde{\theta}_{20} = \lambda_1(\tilde{\theta}_{10} - \tilde{\theta}_{20}).
\end{equation}
RHS (right-hand side) of \eqref{relation3_1} is non-positive, that is, $\lambda_1(\tilde{\theta}_{10} - \tilde{\theta}_{20}) \leq 0$ since $\lambda_1 \leq 0$ and $\tilde{\theta}_{10} > \tilde{\theta}_{20}$. Therefore, LHS (left-hand side) of \eqref{relation3_1} should also be non-positive, that is,
\begin{equation}
-\pi \leq {\theta}_f \leq \tilde{\theta}_{20}.
\end{equation}

{\it Case~$3$}:
Now, let us assume that the gains $K_1 \leq 0$, $K_2 > 0$ and satisfy $K_1 + K_2 > 0$. It implies that $\lambda_1 > 0$ and $\lambda_2 \leq 0$. Thus, by using relation $\lambda_1 = 1 - \lambda_2$, \eqref{theta_f_two_agents_1} can be written as
\begin{equation}
\label{relation4}{\theta}_f - \tilde{\theta}_{10} = -\lambda_2(\tilde{\theta}_{10} - \tilde{\theta}_{20}).
\end{equation}
RHS of \eqref{relation4} is non-negative, that is, $-\lambda_2(\tilde{\theta}_{10} - \tilde{\theta}_{20}) \geq 0$ since $\lambda_2 \leq 0$ and $\tilde{\theta}_{10} > \tilde{\theta}_{20}$. Therefore, LHS of \eqref{relation4} should also be non-negative, that is,
\begin{equation}
\tilde{\theta}_{10} \leq {\theta}_f \leq \pi.
\end{equation}

All the above cases lead to the conclusion that ${\theta}_f \in [-\pi, \pi]$. This proves the necessary condition. To prove sufficiency condition for these two cases, we again consider the following cases.

{\it Case~$1$}:
Let $-\pi \leq {\theta}_f \leq \tilde{\theta}_{20}$ is reachable. Then according to \eqref{relation3_1}, the angular difference ${\theta}_f - \tilde{\theta}_{20}$ can be expressed as
\begin{equation}
\label{relation5}{\theta}_f - \tilde{\theta}_{20} = -\alpha(\tilde{\theta}_{10} - \tilde{\theta}_{20})
\end{equation}
where, $ \alpha \geq 0$. Let us define $K_1 = (1 +\alpha)/c$ and $K_2 = -\alpha/c$, where $c > 0$ is a constant. Thus, $K_1 > 0$ and $K_2 \leq 0$ and satisfy $K_1 + K_2 = ({1}/{c})$.

Replacing $(1+\alpha)$ and $\alpha$ by $cK_1$ and $-cK_2$, respectively, in \eqref{relation5}, we get
\begin{equation}
{\theta}_f = \left(\dfrac{K_2}{K_1 + K_2}\right)\tilde{\theta}_{10} + \left(\dfrac{K_1}{K_1 + K_2}\right)\tilde{\theta}_{20},
\end{equation}
which is the same as \eqref{theta_f_two_agents}.

{\it Case~$2$}: Let $ \tilde{\theta}_{10} \leq {\theta}_f \leq \pi$ is reachable. Then, according to \eqref{relation4}, the angular difference ${\theta}_f - \tilde{\theta}_{10}$ can be expressed as
\begin{equation}
\label{relation6}{\theta}_f - \tilde{\theta}_{10} = \beta(\tilde{\theta}_{10} - \tilde{\theta}_{20})
\end{equation}
where, $ \beta \geq 0$. Let us define $K_1 = -\beta/c$ and $K_2 = (1 + \beta)/c$, where $c > 0$ is a constant. Thus, $K_1 \leq 0$ and $K_2 > 0$ and again satisfy $K_1 + K_2 = ({1}/{c})$.

Replacing $\beta$ and $(1+\beta)$ by ${-c}{K_1}$ and ${c}{K_2}$, respectively in \eqref{relation6}, we again get \eqref{theta_f_two_agents}. These results imply that the phase balancing of the agents can be achieved at any desired reference direction $\theta_f \in [-\pi, \pi]$ for the suitable choices of controller gains $K_1$ and $K_2$ provided $K_1 + K_2 > 0$. This completes the proof.
\end{proof}

Pictorially, Theorem~\ref{Theorem5} is summarized in Fig.~\ref{Pictorial representation of Theorem}.

%%%%%%%%%%%%%%%%%%%%%%%%%%%%%%%%%%%%%%%%%%%%%%%%%%%%%%%%%%%%%%%%%%%%%%%%%%%%%%%%%%%%%%%%%%%%%%%%%%%%%%%%%%%%%%%%%%%%%%%%%%%%%%%%%%%%%%%%%%%%%%%

\begin{figure}[!t]
\centering
\includegraphics[scale=0.65]{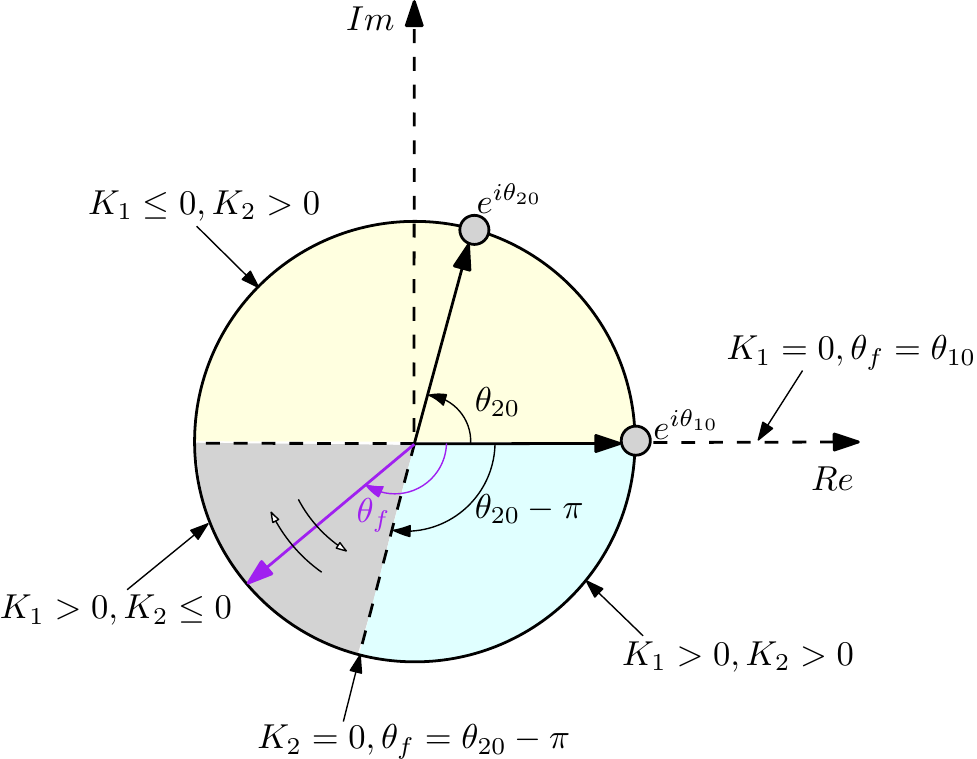}
\caption{Pictorial representation of Theorem~\ref{Theorem5}.}
\label{Pictorial representation of Theorem}
\end{figure}

%%%%%%%%%%%%%%%%%%%%%%%%%%%%%%%%%%%%%%%%%%%%%%%%%%%%%%%%%%%%%%%%%%%%%%%%%%%%%%%%%%%%%%%%%%%%%%%%%%%%%%%%%%%%%%%%%%%%%%%%%%%%%%%%%%%%%%%%%%%%%%%%%%%

%%%%%%%%%%%%%%%%%%%%%%%%%%%%%%%%%%%%%%%%%%%%%%%%%%%%%%%%%%%%%%%%%%%%%%%%%%%%%%%%%%%%%%%%%%%%%%%%%%%%%%%%%%%%%%%%%%%%%%%%%%%%%%%%%%%%%%%%%%%%%%%

\begin{figure}[!t]
\centering
\includegraphics[scale=0.4]{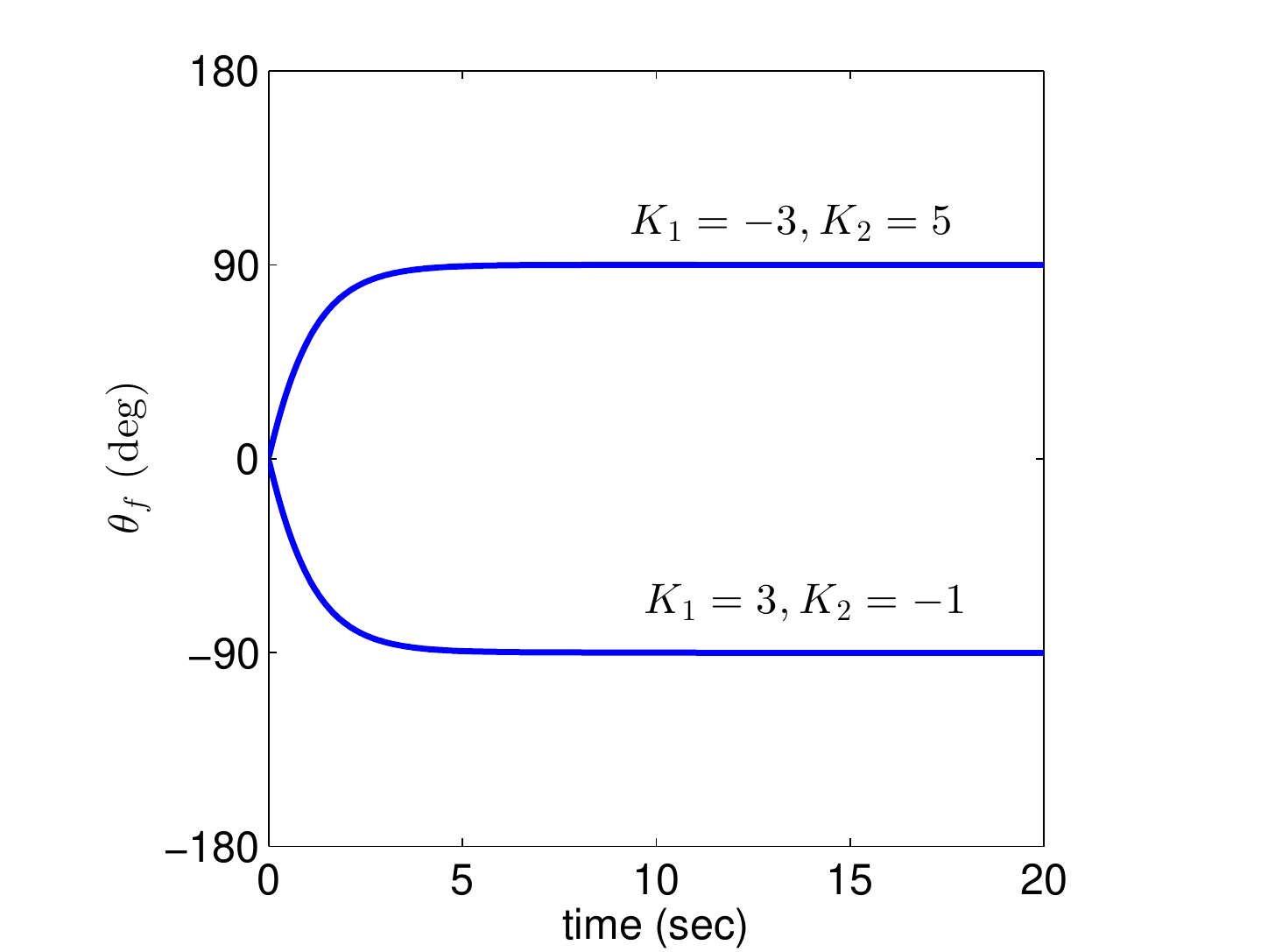}
\caption{Variation in the reference direction $\theta_f$ of the two agents with time for the heterogeneous gains $K_{set5} = \{3, -1\}$, and $K_{set6} = \{-3, 5\}$ under the control \eqref{control3_1}.}
\label{balanced formation two agents}
\end{figure}

%%%%%%%%%%%%%%%%%%%%%%%%%%%%%%%%%%%%%%%%%%%%%%%%%%%%%%%%%%%%%%%%%%%%%%%%%%%%%%%%%%%%%%%%%%%%%%%%%%%%%%%%%%%%%%%%%%%%%%%%%%%%%%%%%%%%%%%%%%%%%%%%%%%

\begin{example}\label{ex3}
Consider two agents starting from initial positions $\pmb{r}(0) = [(-1, -2), (5, -2)]^T$ with initial heading angles $\theta_{10} = 0^\circ$ and $\theta_{20} = 120^\circ$. For this setting, the convergence of reference direction $\theta_f$ at $-90^\circ$ and $90^\circ$ is shown in Fig.~\ref{balanced formation two agents} for the two sets of gains $K_{set5} = \{3, -1\}$, and $K_{set6} = \{-3, 5\}$, respectively. Here, since $\tilde{\theta}_{m0} = -60^\circ$ and $\tilde{\theta}_{M0} = 0^\circ$, only $\theta_f \in (-60^\circ, 0^\circ)$ would have been achievable for non-negative heterogeneous gains. However, by using a combination of both positive and negative heterogeneous gains $K_1$ and $K_2$ provided that $K_1 + K_2 > 0$, any $\theta_f \in [-180^\circ, 180^\circ]$ is reachable.
\end{example}

\section{Explicit Expressions of Velocity Directions And Convergence Point For Two Agents}
In this section, we try to obtain the explicit expressions of the velocity directions of the agents and their convergence point as a function of time in balanced formation. At first, we derive the explicit expressions of the velocity directions and then, by using these we obtain an explicit expression of the convergence point. We present the results only for $\omega_0 = 0$ since the analysis is unchanged for $\omega_0 \neq 0$ in a rotating frame of reference by redefining $\theta_k \rightarrow \theta_k + \omega_0 t$ for the $k^\text{th}$ agent.

\subsection{Velocity Directions}
For two agents, the explicit expressions of the velocity directions as a function of time are obtained as follows.

From \eqref{control3} and \eqref{angle_relation1}, one can form a differential equation for $N=2$ in terms of the heading angle $\theta_1(t)$ as
\begin{equation}
\label{explicit velocity direction}\dot{\theta}_1(t) + \frac{K_1}{2} \sin\left(K_2c_2 - \frac{1}{\lambda_2} \theta_1(t)\right)  = 0,
\end{equation}
where, $c_2 = \left(\theta_{10}/{K_1}\right) + \left(\theta_{20}/{K_2}\right)$, and $\lambda_2$ is already defined in \eqref{lamda1}.

Let
\begin{equation}
\label{relation12}K_2c_2 - \frac{1}{\lambda_2}\theta_1(t) = \delta(t),
\end{equation}
which, after differentiation with respect to time $t$, gives
\begin{equation}
\label{relation13} \dot{\theta_1}(t) = - \lambda_2\dot{\delta}(t).
\end{equation}
On substituting  $\dot{\theta}_1(t)$ from \eqref{relation13} and by using \eqref{relation12} in \eqref{explicit velocity direction}, we get a differential equation in terms of $\delta(t)$ as
\begin{equation}
\label{relation14}\dot{\delta}(t) - \kappa\sin\delta(t) = 0,
\end{equation}
where, $\kappa = \left(K_1 + K_2\right)/2 > 0$. Integrating both sides of \eqref{relation14} as following
\begin{equation}
\int_{\delta_0}^{\delta} \frac{d\delta}{\sin\delta} = \int_{0}^{t}\kappa dt,
\end{equation}
we get
\begin{equation}
\delta(t) = 2\tan^{-1}(\phi_0 e^{\kappa t}),
\end{equation}
where, $\delta_0 = \delta(0) = \theta_{20} - \theta_{10}$ (using \eqref{relation12}), and $\phi_0 = \tan\left(\delta_0/2\right)$.

Now, substituting for $\delta(t)$ in \eqref{relation12}, we get
\begin{equation}
\label{theta_1}\theta_1(t) =  \lambda_2\left\{K_2c_2 - 2\tan^{-1}(\phi_0 e^{\kappa t})\right\}.
\end{equation}
Also, substituting $\theta_1(t)$ in \eqref{angle_relation1} for $N=2$, we get
\begin{equation}
\label{theta_2}\theta_2(t) =  \lambda_1\left\{K_1c_2 + 2\tan^{-1}(\phi_0 e^{\kappa t})\right\},
\end{equation}
where, $\lambda_1$ is defined in \eqref{lamda1}.
These results show how heterogeneous controller gains affect the agents' velocity directions. Subtracting \eqref{theta_1} from \eqref{theta_2}, we get
\begin{equation}
\label{theta_2-theta_1}\theta_2(t) - \theta_1(t) =  2\tan^{-1}(\phi_0 e^{\kappa t}).
\end{equation}

In steady-state, that is, as $t \rightarrow \infty$, \eqref{theta_2-theta_1} simplified to
\begin{equation}
\theta_{2f} - \theta_{1f} =  \text{sgn}(\phi_0)\pi
\end{equation}
where, $\theta_{kf} = \theta_k (t \rightarrow \infty), k=1, 2$, and $\text{sgn}(\phi_0)$ is the signum function of $\phi_0$. Thus, the difference between the velocity directions of agents in phase balancing is, $|\theta_{2f} - \theta_{1f}| = \pi$ radians, as desired.

\subsection{Convergence Point}
The centroid of a group of agents is stabilized to a fixed point when they form a balanced formation. This fixed point is called the convergence point of the system. Thus, the convergence point is the centroid of the group as $t\rightarrow\infty$. It will be shown in this section that a desired convergence point can be achieved by suitably selecting the heterogeneous controller gains $K_1$ and $K_2$ of the two agents.

Let $x_c(t)$ and $y_c(t)$ are the abscissa and the ordinate of the centroid of the group at any time instant $t$. Then, the rate of change of centroid's position in \eqref{R_dot} can be written as
\begin{equation}
\label{R_dot_New}\dot{R} = \dot{x}_c+i\dot{y}_c = \frac{1}{N}\sum_{k=1}^{N} e^{i\theta_k}.
\end{equation}
For $N=2$, \eqref{R_dot_New} gives
\begin{equation}
\label{xc_dot}\dot{x}_c = \frac{1}{2}\left[\cos\theta_1+\cos\theta_2\right];~~~~\dot{y}_c = \frac{1}{2}\left[\sin\theta_1+\sin\theta_2\right].
\end{equation}
Integrating \eqref{xc_dot}, we get
\begin{eqnarray}
\label{xc_yc_after_integral}x_c (t)-x_{c0} = \frac{1}{2}\int_{0}^{t}\left\{\cos\theta_1+\cos\theta_2\right\}dt\\
\label{xc_yc_after_integral_1}y_c (t)-y_{c0} = \frac{1}{2}\int_{0}^{t}\left\{\sin\theta_1+\sin\theta_2\right\}dt,
\end{eqnarray}
where, $(x_{c0}, y_{c0}) = (x_c(0), y_c(0))$, denotes the coordinates of the initial location of the centroid.

We can compute the above integrals by using the following trigonometric relations:
\begin{eqnarray}
\label{trig_relation1}\cos\theta_1+\cos\theta_2 &=& 2 \cos\left(\frac{\theta_1+\theta_2}{2}\right)\cos\left(\frac{\theta_1-\theta_2}{2}\right)\\
\label{trig_relation2}\sin\theta_1+\sin\theta_2 &=& 2 \sin\left(\frac{\theta_1+\theta_2}{2}\right)\cos\left(\frac{\theta_1-\theta_2}{2}\right)
\end{eqnarray}
From \eqref{theta_1} and \eqref{theta_2}, we have
\begin{equation}
\label{relation15}\left(\theta_1+\theta_2\right)/{2} = \lambda_1\theta_{10} + \lambda_2\theta_{20} + (\lambda_1 - \lambda_2)\tan^{-1}(\phi_0 e^{\kappa t})
\end{equation}
and
\begin{equation}
\left(\theta_1-\theta_2\right)/2 =  \tan^{-1}(-\phi_0 e^{\kappa t}).
\end{equation}
Thus,
\begin{equation}
\label{relation16}\cos\left(\frac{\theta_1-\theta_2}{2}\right) = \cos\left(\tan^{-1}(-\phi_0 e^{\kappa t})\right) = \dfrac{1}{\sqrt{1 + \phi^2_0 e^{2\kappa t}}}.
\end{equation}
Using relations \eqref{trig_relation1} and \eqref{trig_relation2} along with \eqref{relation15} and \eqref{relation16} in \eqref{xc_yc_after_integral} and \eqref{xc_yc_after_integral_1}, we get
\begin{equation}
\label{xc_t_final}x_c (t)-x_{c0} = \int_{0}^{t} f(t)dt;~~~y_c (t)-y_{c0} = \int_{0}^{t} g(t)dt.
\end{equation}
where,
\begin{eqnarray}
\label{ft}{f}(t)=  \dfrac{\cos\left({\lambda_1\theta_{10} + \lambda_2\theta_{20} + (\lambda_1 - \lambda_2)\tan^{-1}(\phi_0 e^{\kappa t})}\right)}{\sqrt{1 + \phi^2_0 e^{2\kappa t}}} ,\\
\label{gt}{g}(t) =  \dfrac{\sin\left({\lambda_1\theta_{10} + \lambda_2\theta_{20} + (\lambda_1 - \lambda_2)\tan^{-1}(\phi_0 e^{\kappa t})}\right)}{\sqrt{1 + \phi^2_0 e^{2\kappa t}}} .
\end{eqnarray}

Above expressions provide the position of the centroid at any instant of time provided we are able to integrate these. Here, we are mainly interested to find out the steady-state position of the centroid, that is, the convergence point of the system. Thus, as $t \rightarrow \infty$, the co-ordinates of the centroid's position from \eqref{xc_t_final} are given by
\begin{eqnarray}
\label{equation_xc and yc}x_c (t\rightarrow\infty)-x_{c0} = I_1; ~~~~ y_c (t\rightarrow\infty)-y_{c0} = I_2
\end{eqnarray}
where,
\begin{equation}
\label{integral}I_1 = \int_{0}^{\infty} f(t)dt,~~~\text{and}~~~I_2 = \int_{0}^{\infty} g(t)dt,
\end{equation}
are improper integrals. It is difficult to integrate $I_1$ and $I_2$ by using usual integrating methods. But, we can prove convergence of $I_1$ and $I_2$ to ensure that the steady-state location of the centroid exists. To prove the convergence, we will utilize the following results from \cite{Apostol1991} and \cite{Shilov1996}.

\begin{thm}\label{Theorem6}
(Comparison test): Suppose $0 \leq f(t) \leq g(t)$ for all $t>a$. If $\int_{a}^{\infty} g(t) dt$ converges, then $\int_{a}^{\infty} f(t) dt$ converges.
\end{thm}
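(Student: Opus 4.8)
The plan is to reduce the convergence of the improper integral to the classical fact that a monotone function which is bounded above possesses a finite limit. First I would introduce the partial-integral functions $F(x) = \int_a^x f(t)\,dt$ and $G(x) = \int_a^x g(t)\,dt$ for $x > a$, so that the claim ``$\int_a^\infty f(t)\,dt$ converges'' becomes the assertion that $\lim_{x\to\infty} F(x)$ exists and is finite. This reframing is the whole point: it turns a statement about an improper integral into a statement about the limiting behaviour of an ordinary (proper) integral as a function of its upper endpoint.

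Next I would exploit the sign hypothesis. Because $f(t) \ge 0$ and $g(t) \ge 0$ throughout $(a,\infty)$, both $F$ and $G$ are non-decreasing in $x$: enlarging the upper limit of integration can only append non-negative area. Moreover, integrating the pointwise inequality $0 \le f(t) \le g(t)$ over $[a,x]$, and using monotonicity of the integral, yields $0 \le F(x) \le G(x)$ for every $x > a$.

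I would then invoke the hypothesis that $\int_a^\infty g(t)\,dt$ converges. Writing $L = \lim_{x\to\infty} G(x)$, which exists and is finite by assumption, and using that $G$ is non-decreasing and tends to $L$, we obtain $G(x) \le L$ for all $x$. Combining this with the inequality from the previous step gives $F(x) \le L$ for all $x > a$, so $F$ is bounded above by $L$.

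Finally, I would close the argument with the monotone-convergence principle for real functions: a non-decreasing function that is bounded above has a finite limit as $x \to \infty$, namely the supremum of its values. Applying this to $F$ shows that $\lim_{x\to\infty} F(x)$ exists and is finite, which is precisely the statement that $\int_a^\infty f(t)\,dt$ converges. Since this is a standard real-analysis result (and here merely quoted from the cited references), there is no genuine difficulty; the only point demanding care is the \emph{justification of monotonicity} of $F$ and $G$ from the non-negativity of $f$ and $g$, together with a correct appeal to the supremum-as-limit step, after which the conclusion is immediate.
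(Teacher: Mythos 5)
Your proof is correct, and it is the standard argument: the paper itself offers no proof of this statement at all, since it is quoted verbatim as a known comparison test from the cited references (Apostol; Shilov), and your monotone-boundedness argument (partial integrals $F(x)\leq G(x)\leq L$, non-decreasing, hence convergent to their supremum) is precisely the textbook proof those references give. Nothing is missing beyond the implicit standing assumption, already built into the statement, that $f$ is integrable on each finite interval $[a,x]$.
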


\begin{thm}\label{Theorem7}
If an improper integral $\int_{a}^{\infty} |f(t)| dt$ converges then $\int_{a}^{\infty} f(t) dt$ converges.
\end{thm}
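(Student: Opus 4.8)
The plan is to reduce this statement to the comparison test already established in Theorem~\ref{Theorem6}, which applies only to non-negative integrands. The main point requiring care is that $f$ itself may change sign, so I would first construct an auxiliary non-negative function to which the comparison test can be applied verbatim.

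Specifically, I would define $h(t) = f(t) + |f(t)|$ for all $t > a$. This function satisfies $h(t) \geq 0$ everywhere, and moreover $0 \leq h(t) \leq 2|f(t)|$, since $f(t) \leq |f(t)|$ supplies the upper bound. By hypothesis $\int_{a}^{\infty} |f(t)|\, dt$ converges, and hence so does $\int_{a}^{\infty} 2|f(t)|\, dt$. Applying Theorem~\ref{Theorem6} to the pair of functions $h(t)$ and $2|f(t)|$ then shows that $\int_{a}^{\infty} h(t)\, dt$ converges.

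To finish, I would write $f(t) = h(t) - |f(t)|$ and split the integral as $\int_{a}^{\infty} f(t)\, dt = \int_{a}^{\infty} h(t)\, dt - \int_{a}^{\infty} |f(t)|\, dt$. Both integrals on the right-hand side converge, the first by the previous step and the second by hypothesis, so their difference converges. This establishes the convergence of $\int_{a}^{\infty} f(t)\, dt$, which is the desired conclusion.

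I do not anticipate any genuine obstacle here, since this is the standard assertion that absolute convergence implies convergence of an improper integral; the only subtlety is to route the entire argument through the non-negative functions $h$ and $2|f|$ so that Theorem~\ref{Theorem6} applies as stated. An alternative approach would invoke the Cauchy criterion for improper integrals, using the bound $\left|\int_{x}^{y} f\, dt\right| \leq \int_{x}^{y} |f|\, dt$ for $y > x > a$, but the decomposition above is more direct given the comparison test already available in the excerpt.
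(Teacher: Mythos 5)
Your proof is correct. Note, however, that the paper does not actually prove this statement: Theorems~\ref{Theorem6} and \ref{Theorem7} are both quoted as known results from the cited textbooks (\cite{Apostol1991}, \cite{Shilov1996}) and are used as black boxes in the proof of Lemma~\ref{lem3}, so there is no in-paper argument to compare yours against. What you give is the standard decomposition proof of ``absolute convergence implies convergence'': setting $h(t) = f(t) + |f(t)|$ yields $0 \leq h(t) \leq 2|f(t)|$, the comparison test (Theorem~\ref{Theorem6}) gives convergence of $\int_a^\infty h(t)\,dt$, and then linearity of the limit $\lim_{T\to\infty}\int_a^T f(t)\,dt = \lim_{T\to\infty}\int_a^T h(t)\,dt - \lim_{T\to\infty}\int_a^T |f(t)|\,dt$ finishes the argument. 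This is complete and has the added merit of deriving Theorem~\ref{Theorem7} from Theorem~\ref{Theorem6}, so the paper's two auxiliary results would become self-contained rather than both being external citations. Two cosmetic remarks: your argument implicitly assumes $f$ is integrable on every finite interval $[a,T]$ (as any statement about improper integrals must), and you should pick a symbol other than $h$ if this text were inserted near Lemma~\ref{lem3}, where the paper already uses $h(t)$ for a different function.
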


Now, we prove the convergence of $I_1$ and $I_2$ in the following Lemma.

\begin{lem}\label{lem3}
For the functions $f(t)$ and $g(t)$, given by \eqref{ft}, and \eqref{gt}, respectively, the integrals $I_1$ and $I_2$, defined in \eqref{integral}, converge.
\end{lem}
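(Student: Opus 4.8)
The plan is to prove that both integrals converge \emph{absolutely} by dominating the integrands with a decaying exponential, and then to invoke Theorem~\ref{Theorem7} to conclude convergence of $I_1$ and $I_2$ themselves. The key structural feature of $f(t)$ and $g(t)$ in \eqref{ft} and \eqref{gt} is that their numerators are a cosine and a sine, hence bounded in modulus by $1$, while their common denominator $\sqrt{1 + \phi_0^2 e^{2\kappa t}}$ grows exponentially because $\kappa = (K_1+K_2)/2 > 0$ and $\phi_0 = \tan(\delta_0/2) \neq 0$; the latter holds since $\delta_0 = \theta_{20} - \theta_{10} > 0$ under the standing cyclic-ordering hypothesis $\theta_{10} < \theta_{20}$, the synchronized case $\delta_0 = 0$ being excluded.

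First I would bound the integrands pointwise. Since $|\cos(\cdot)| \leq 1$ and $|\sin(\cdot)| \leq 1$, we have
\[
|f(t)| \leq \frac{1}{\sqrt{1 + \phi_0^2 e^{2\kappa t}}}, \qquad |g(t)| \leq \frac{1}{\sqrt{1 + \phi_0^2 e^{2\kappa t}}}
\]
for all $t \geq 0$. Next I would simplify the majorant: dropping the nonnegative $1$ under the root gives $\sqrt{1 + \phi_0^2 e^{2\kappa t}} \geq |\phi_0|\, e^{\kappa t}$, so that $|f(t)|$ and $|g(t)|$ are each bounded above by $(1/|\phi_0|)\,e^{-\kappa t}$. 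Because $\kappa > 0$, the integral $\int_0^\infty (1/|\phi_0|)\,e^{-\kappa t}\,dt = 1/(\kappa|\phi_0|)$ is finite, so the Comparison test (Theorem~\ref{Theorem6}) yields convergence of $\int_0^\infty |f(t)|\,dt$ and $\int_0^\infty |g(t)|\,dt$.

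Finally, since $|f|$ and $|g|$ are integrable on $[0,\infty)$, Theorem~\ref{Theorem7} gives convergence of $I_1 = \int_0^\infty f(t)\,dt$ and $I_2 = \int_0^\infty g(t)\,dt$, as claimed. I do not anticipate a genuine obstacle in this argument; the only point that warrants care is confirming $\phi_0 \neq 0$, so that the exponential term truly dominates the constant under the square root. This is ensured by the ordering hypothesis $\theta_{10} < \theta_{20}$, which excludes the degenerate synchronized configuration (where the two agents would never separate and the balancing analysis is vacuous).
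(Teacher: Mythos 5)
Your proposal is correct and takes essentially the same route as the paper: both dominate $|f(t)|$ and $|g(t)|$ by the same majorant $h(t)=1/\sqrt{1+\phi_0^2 e^{2\kappa t}}$, apply the comparison test (Theorem~\ref{Theorem6}) and then the absolute-convergence result (Theorem~\ref{Theorem7}), with the case $\phi_0=0$ excluded as degenerate. The only cosmetic difference is that the paper integrates $h(t)$ in closed form, obtaining $\frac{1}{2\kappa}\ln\bigl(\frac{\sqrt{1+\phi_0^2}+1}{\sqrt{1+\phi_0^2}-1}\bigr)$, whereas you further bound $h(t)\leq e^{-\kappa t}/|\phi_0|$ and integrate the exponential, which is marginally simpler but yields the same conclusion.
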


\begin{proof}
Let us define a function
\begin{equation}
{h}(t) = \dfrac{1}{\sqrt{1 + \phi^2_0 e^{2\kappa t}}} > 0,~\forall t.
\end{equation}
Note that
\begin{eqnarray}
\label{ft_gt_ht}\left|{f}(t)\right| \leq {h}(t);~~~~\left|{g}(t)\right| \leq {h}(t).
\end{eqnarray}
Now, we define integral $I$ as
\begin{equation}
\label{integral_new}I = \lim_{t \to \infty} \int_{0}^{t} {h}(t) dt = \lim_{t \to \infty} \int_{0}^{t} \dfrac{dt}{\sqrt{1 + \phi^2_0 e^{2\kappa t}}}.
\end{equation}
Integrating \eqref{integral_new}, we get
\begin{equation}
\label{value}I = \frac{1}{2\kappa}\ln \left(\frac{\sqrt{1+\phi^2_0} + 1}{\sqrt{1+\phi^2_0} - 1}\right),
\end{equation}
which is finite except at $\phi_0 = 0$. Since, as defined above, $\phi_0 = \tan(\delta_0/2)$, where, $ \delta_0 = \theta_{20} - \theta_{10}$,  $\phi_0 = 0$, is a trivial case, as in this situation $\theta_{20} - \theta_{10} = 2n\pi, n \in \mathbb{Z}$, which says that the agents are initially in synchronized or in balanced formation.

For $\phi_0 \neq 0$, since the integral $\int_{0}^{\infty} {h}(t) dt$ converges to a finite value given by \eqref{value}, the integrals $\int_{0}^{\infty}|f(t)|dt$ and $\int_{0}^{\infty} |g(t)| dt$ converges as \eqref{ft_gt_ht} holds (Theorem~\ref{Theorem6}). Now, by using Theorem~\ref{Theorem7}, we conclude that $I_1$ and $I_2$ converges, and hence  $x_c(t\rightarrow\infty)$ and $y_c(t\rightarrow\infty)$ exist, that is, the centroid of group stabilizes to a fixed point (convergence point). This completes the proof.
\end{proof}

\subsection{Locus of Convergence Points}
In this subsection, we will find the locus of convergence points by varying the controller gains $K_1$ and $K_2$ in a way that the ratio ${K_1}/{K_2}$ is fixed. Since the analysis is quite involved, therefore, the assumption of fixing the ratio ${K_1}/{K_2}$ is made to carried out a few interesting results mentioned in the next theorem.

Let us assume
\begin{equation}
K_1 = \eta;~~~K_2 = \eta/\rho,
\end{equation}
where, $\rho$ is assumed to be constant. Thus the ratio $K_1/K_2 = \rho$ is fixed.

Since the phase balancing of two agents is achieved when the gains $K_1$ and $K_2$ satisfy $K_1 + K_2 > 0$ (Theorem~\ref{Theorem5}), it implies that $\eta\rho(\rho+1) > 0$ should hold here to ensure the same. The following conditions on $\eta$ and $\rho$ should fulfill to satisfy this inequality:
\begin{eqnarray}
\label{inequality} \eta\rho(\rho+1) > 0~~\Rightarrow~~
\begin{cases}
\eta < 0;~~-1 < \rho < 0\\
\eta > 0;~~\rho \in (-\infty, -1) \bigcup (0, \infty).
\end{cases}
\end{eqnarray}
Thus, for a given $\rho$, $\eta$ should be varied in such a way so that \eqref{inequality} is satisfied. Based on these notations, the following theorem is now stated.

\begin{thm}\label{Theorem8}
Consider two agents, with dynamics given by \eqref{modelNew}, under the control law \eqref{control3} with controller gains $K_1 = \eta$ and $K_2 = \eta/\rho$, where, $\eta$ and $\rho$ satisfy \eqref{inequality}. Let the initial heading angles of the agents be given by $[\theta_{10}, \theta_{20}]^T \in (-\pi, \pi)^2$. Then, in balanced formation of this system of two agents, the locus of the convergence point with different $\eta$ but fixed $\rho$, is a straight line approaching to the initial centroid $(x_{c0}, y_{c0})$ as $\eta \rightarrow \infty$.
\end{thm}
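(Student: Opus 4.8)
The plan is to exploit the explicit centroid formulas already derived in this section, and to observe that fixing the ratio $\rho = K_1/K_2$ freezes every quantity appearing in the integrands $f(t)$ and $g(t)$ of \eqref{ft}--\eqref{gt} except for the exponential rate $\kappa$, which scales linearly with $\eta$. A single change of variable then pulls this rate entirely outside the integrals, exhibiting the convergence point as $(x_{c0}, y_{c0})$ plus a fixed direction vector scaled by $1/\kappa$.

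First I would substitute $K_1 = \eta$ and $K_2 = \eta/\rho$ into the defining quantities. From \eqref{lamda1} this yields $\lambda_1 = 1/(\rho+1)$ and $\lambda_2 = \rho/(\rho+1)$, so that $\lambda_1$, $\lambda_2$, the combination $\lambda_1\theta_{10} + \lambda_2\theta_{20}$, and the difference $\lambda_1 - \lambda_2 = (1-\rho)/(1+\rho)$ all depend on $\rho$ alone and are therefore constant as $\eta$ varies. Since $\phi_0 = \tan((\theta_{20}-\theta_{10})/2)$ is independent of the gains, the integrands in \eqref{ft} and \eqref{gt} depend on $\eta$ only through $\kappa = (K_1+K_2)/2 = \eta(\rho+1)/(2\rho)$.

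Next I would change variables in the improper integrals \eqref{integral} via $s = \kappa t$ (note $\kappa > 0$ by \eqref{inequality}), which converts $f(t)\,dt$ and $g(t)\,dt$ into $(1/\kappa)$ times integrands in $s$ that no longer contain $\kappa$. Writing $A = \lambda_1\theta_{10} + \lambda_2\theta_{20}$ and $B = \lambda_1 - \lambda_2$, I obtain $I_1 = C_1(\rho)/\kappa$ and $I_2 = C_2(\rho)/\kappa$, where
\[
C_1(\rho) = \int_0^\infty \frac{\cos\bigl(A + B\tan^{-1}(\phi_0 e^{s})\bigr)}{\sqrt{1+\phi_0^2 e^{2s}}}\,ds
\]
and $C_2(\rho)$ is the analogous integral with the sine. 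By Lemma~\ref{lem3} these integrals converge to finite constants determined by $\rho$ and the initial headings. Hence from \eqref{equation_xc and yc},
\[
\bigl(x_c(\infty),\,y_c(\infty)\bigr) = (x_{c0},\,y_{c0}) + \frac{1}{\kappa}\bigl(C_1(\rho),\,C_2(\rho)\bigr),
\]
which is a parametric straight line through $(x_{c0}, y_{c0})$ with fixed direction $(C_1(\rho), C_2(\rho))$, parametrized by $1/\kappa$. As $\eta \to \infty$ with $\rho$ fixed, $\kappa \to \infty$, so $1/\kappa \to 0$ and the convergence point tends to the initial centroid, establishing the claim.

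The main obstacle is the bookkeeping needed to confirm that the additive constant $c_2$ in the heading expressions \eqref{theta_1}--\eqref{theta_2} disappears from the half-sum $(\theta_1+\theta_2)/2$ of \eqref{relation15}, leaving a phase $A$ that depends on $\rho$ alone; once this cancellation and the scale-invariance of $\lambda_1, \lambda_2$ are in hand, the change of variable is routine. A secondary point worth a brief remark is that, strictly, the locus is a ray (since $1/\kappa$ ranges over $(0,\infty)$) lying on the asserted straight line, with $(x_{c0}, y_{c0})$ as its limiting endpoint rather than an attained value.
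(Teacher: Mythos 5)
Your proposal is correct and follows essentially the same route as the paper's proof: both substitute $K_1=\eta$, $K_2=\eta/\rho$ so that $\lambda_1,\lambda_2$ (and hence the phase $A$ and coefficient $B$) depend on $\rho$ alone, pull the gain dependence out of $I_1,I_2$ as a common factor $1/\kappa \propto 1/\eta$ via a change of variable (you use $s=\kappa t$; the paper uses $\xi=\tan^{-1}(\phi_0 e^{\kappa t})$, yielding its constants $h_1,h_2$), and conclude that the convergence point is the initial centroid plus a fixed $\rho$-dependent vector scaled by $1/\eta$, which gives the straight-line locus and the limit $(x_{c0},y_{c0})$ as $\eta\rightarrow\infty$. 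Your closing observation that the locus is strictly a ray with the initial centroid as a limiting endpoint is a small precision the paper glosses over, but the underlying argument is the same.
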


%%%%%%%%%%%%%%%%%%%%%%%%%%%%%%%%%%%%%%%%%%%%%%%%%%%%%%%%%%%%%%%%%%%%%%%%%%%%%%%%%%%%%%%%%%%%%%%%%%%%%%%%%%%%
\begin{figure}
\centering
\includegraphics[scale=1]{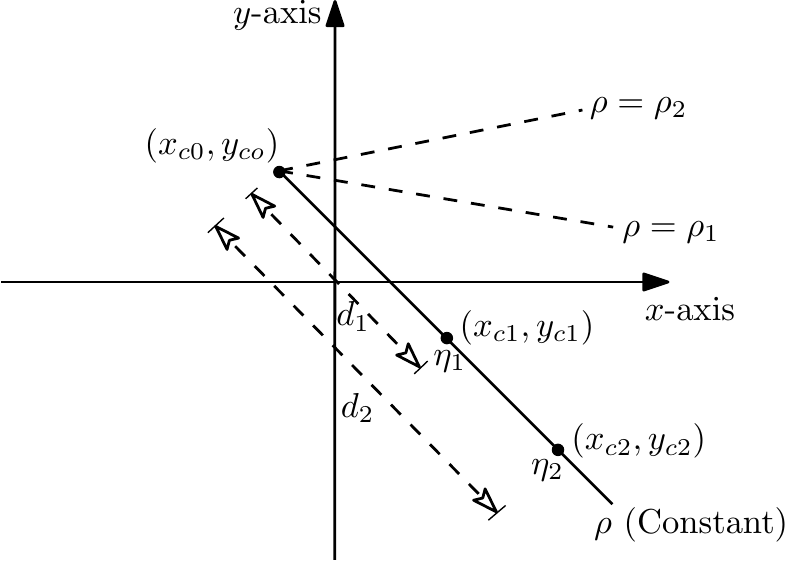}
\caption{Locus of the convergence point for fixed $\rho$. Corresponding to controller gains $K_1 = \eta_1$ and $K_2 = \eta_1/\rho$, the convergence point is $(x_{c1}, y_{c1})$, while for $K_1 = \eta_2$ and $K_2 = \eta_2/\rho$, the convergence point is $(x_{c2}, y_{c2})$. The point $(x_{c0}, y_{c0})$ is the initial centroid.}
\label{Locus of the convergence point for fixed}
\end{figure}
%%%%%%%%%%%%%%%%%%%%%%%%%%%%%%%%%%%%%%%%%%%%%%%%%%%%%%%%%%%%%%%%%%%%%%%%%%%%%%%%%%%%%%%%%%%%%%%%%%%%%%%%%%%%%

\begin{proof}
From \eqref{equation_xc and yc}, rewrite the coordinates of the convergence point as
\begin{eqnarray}
\label{xc_final}x_c (\infty)-x_{c0} = \int_{0}^{\infty}  f(t)dt\\
\label{y_c final}y_c (\infty)-y_{c0} = \int_{0}^{\infty} g(t)dt,
\end{eqnarray}
where, we denote $x_c (t\rightarrow\infty) = x_c (\infty)$ and $y_c (t\rightarrow\infty) = x_c (\infty)$, and the functions $f(t)$ and $g(t)$ are given by \eqref{ft} and \eqref{gt}, respectively. We can further simplify \eqref{xc_final} and \eqref{y_c final} as follows.
Let
\begin{eqnarray}
\label{relation17}\tan^{-1}(\phi_0 e^{\kappa t}) = \xi~~~\Rightarrow~~~dt = \frac{d\xi}{\kappa\sin\xi\cos\xi}
\end{eqnarray}
Substituting \eqref{relation17} in \eqref{xc_final} and \eqref{y_c final}, and accordingly changing the limits of the integrations, the coordinates of the convergence point are given by
\begin{eqnarray}
\label{xc_final_1}x_c (\infty)-x_{c0} = \frac{1}{\kappa}\int_{\frac{\delta_0}{2}}^{\frac{\pi}{2}} \dfrac{\cos\left({\lambda_1\theta_{10} + \lambda_2\theta_{20} + (\lambda_1 - \lambda_2)\xi}\right)}{\sin\xi}d\xi\\
\label{yc_final_1}y_c (\infty)-y_{c0} = \frac{1}{\kappa}\int_{\frac{\delta_0}{2}}^{\frac{\pi}{2}} \dfrac{\sin\left({\lambda_1\theta_{10} + \lambda_2\theta_{20} + (\lambda_1 - \lambda_2)\xi}\right)}{\sin\xi}d\xi
\end{eqnarray}

Now, substituting $K_1 = \eta$ and $K_2 = \eta/\rho$ in \eqref{xc_final_1} and \eqref{yc_final_1}, we get
\begin{eqnarray}
\label{xc_final_2}x_c (\infty)-x_{c0} = \frac{2\rho}{\eta(1+\rho)}\int_{\frac{\delta_0}{2}}^{\frac{\pi}{2}} \dfrac{\cos(f(\xi))}{\sin\xi}d\xi\\
\label{yc_final_2}y_c (\infty)-y_{c0} = \frac{2\rho}{\eta(1+\rho)}\int_{\frac{\delta_0}{2}}^{\frac{\pi}{2}} \dfrac{\sin(f(\xi))}{\sin\xi}d\xi.
\end{eqnarray}
where,
\begin{equation}
f(\xi) = \left[{\left(\frac{1}{1+\rho}\right)\theta_{10} + \left(\frac{\rho}{1+\rho}\right)\theta_{20} + \left(\frac{1-\rho}{1+\rho}\right)\xi}\right].
\end{equation}

Since the integrals $I_1$ and $I_2$ converge (Lemma~\ref{lem3}), the integrals in \eqref{xc_final_2} and \eqref{yc_final_2} also converge as these are obtained by change of variables in the original integrals $I_1$ and $I_2$. Moreover, since the above integrals are depended only on the constants $\rho$, and on given initial heading angles $\theta_{10}$ and $\theta_{20}$ (as $\delta_0 = \theta_{20} - \theta_{10}$), it may be assumed that these converge to constants say $h_1(\rho, \theta_{10}, \theta_{20})$ and $h_2(\rho, \theta_{10}, \theta_{20})$, respectively. Thus, we can write
\begin{eqnarray}
\label{xc_final_3}x_c (\infty)-x_{c0} &=& \frac{2\rho}{\eta(1+\rho)} h_1(\rho, \theta_{10}, \theta_{20})\\
\label{yc_final_3}y_c (\infty)-y_{c0} &=& \frac{2\rho}{\eta(1+\rho)} h_2(\rho, \theta_{10}, \theta_{20}).
\end{eqnarray}
Dividing \eqref{yc_final_3} by \eqref{xc_final_3}, we get
\begin{equation}
\frac{y_c (\infty)-y_{c0}}{x_c (\infty)-x_{c0}} = \frac{h_2(\rho, \theta_{10}, \theta_{20})}{h_1(\rho, \theta_{10}, \theta_{20})} = \overline{h}(\rho, \theta_{10}, \theta_{20})~\text{(say)},
\end{equation}
which is a constant (for fixed $\rho$, and given initial headings $\theta_{10}$, and $\theta_{20}$), and is independent of variable $\eta$. It implies that the locus of the convergence point for different values of $\eta$ is a straight line with slope $\overline{h}$ provided the ratio ${K_1}/{K_2} = \rho$ is fixed.

Also, as $\eta \rightarrow \infty$, we can get the coordinates of the convergence point from \eqref{xc_final_3} and \eqref{yc_final_3} as
\begin{eqnarray}
\label{xc_final_4}\lim_{\eta \rightarrow \infty} \left[x_c (\infty)-x_{c0}\right] &=& \lim_{\eta \rightarrow \infty}\frac{1}{\eta} \left(\frac{2\rho}{1+\rho}\right) h_1 = 0\\
\label{yc_final_4}\lim_{\eta \rightarrow \infty} \left[y_c (\infty)-y_{c0}\right] &=& \lim_{\eta \rightarrow \infty} \frac{1}{\eta}\left(\frac{2\rho}{1+\rho}\right) h_2 = 0,
\end{eqnarray}
which implies that
\begin{equation}
x_c(\infty) \rightarrow x_{c0};~~~y_c(\infty) \rightarrow y_{c0}~~~\text{as}~~\eta \rightarrow \infty.
\end{equation}
It means that the convergence point approaches the initial centroid for large value of $\eta$. This completes the proof.
\end{proof}

Pictorially, Theorem~\ref{Theorem8} is summarized in Fig.~\ref{Locus of the convergence point for fixed}. Now, we state the following corollaries to Theorem~7.

\begin{cor}
In Fig.~\ref{Locus of the convergence point for fixed}, let $(x_{c1}, y_{c1})$ and $(x_{c2}, y_{c2})$ be the locations of the convergence point for the gain pairs $(K_1, K_2) = \left(\eta_1, {\eta_1}/{\rho}\right)$ and $(\hat{K}_1, \hat{K}_2) = \left(\eta_2, {\eta_2}/{\rho}\right)$, respectively. Then, the relation
\begin{equation}
d_1\eta_1 = d_2\eta_2
\end{equation}
holds under the conditions given in Theorem~\ref{Theorem8}, where, $d_1$ and $d_2$ are the respective distances of the points $(x_{c1}, y_{c1})$ and $(x_{c2}, y_{c2})$ from the initial centroid $(x_{c0}, y_{c0})$.
\end{cor}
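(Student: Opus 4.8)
The plan is to read the coordinates of the two convergence points straight off Theorem~\ref{Theorem8}, compute each distance to the initial centroid as a Euclidean norm, and observe that the claimed relation is just the statement that $d\,|\eta|$ is a quantity independent of $\eta$ once $\rho$ is fixed.

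First I would recall equations \eqref{xc_final_3} and \eqref{yc_final_3}, which give, for controller gains $K_1 = \eta$ and $K_2 = \eta/\rho$,
\begin{equation}
x_c(\infty) - x_{c0} = \frac{2\rho}{\eta(1+\rho)}\,h_1, \qquad y_c(\infty) - y_{c0} = \frac{2\rho}{\eta(1+\rho)}\,h_2,
\end{equation}
where $h_1 = h_1(\rho, \theta_{10}, \theta_{20})$ and $h_2 = h_2(\rho, \theta_{10}, \theta_{20})$ depend only on $\rho$ and the (fixed) initial headings. Squaring, adding, and taking the square root, the distance of the convergence point from the initial centroid factors cleanly as
\begin{equation}
d = \sqrt{(x_c(\infty)-x_{c0})^2 + (y_c(\infty)-y_{c0})^2} = \frac{2|\rho|}{|\eta|\,|1+\rho|}\sqrt{h_1^2 + h_2^2}.
\end{equation}

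Next I would evaluate this at $\eta = \eta_1$ and $\eta = \eta_2$. Because $\rho$ is held fixed, the factor $\tfrac{2|\rho|}{|1+\rho|}\sqrt{h_1^2+h_2^2}$ is common to both $d_1$ and $d_2$ and cancels, leaving $d_1\,|\eta_1| = d_2\,|\eta_2|$.

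The only delicate point, and hence the main obstacle, is converting $|\eta_i|$ into $\eta_i$ so as to match the stated relation $d_1\eta_1 = d_2\eta_2$. Here I would invoke the admissibility constraint \eqref{inequality}: for a fixed $\rho$, the sign of every admissible $\eta$ is forced ($\eta<0$ when $-1<\rho<0$, and $\eta>0$ in the remaining range), so $\eta_1$ and $\eta_2$ necessarily share a common sign $s\in\{+1,-1\}$. Writing $|\eta_i| = s\,\eta_i$ and substituting into $d_1\,|\eta_1| = d_2\,|\eta_2|$, the factor $s$ cancels and yields $d_1\eta_1 = d_2\eta_2$, as claimed. The argument hinges entirely on the convergence of $h_1,h_2$ already guaranteed by Lemma~\ref{lem3}, so no further analytic work is needed.
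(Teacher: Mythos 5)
Your proof is correct and follows essentially the same route as the paper's: substitute the expressions \eqref{xc_final_3} and \eqref{yc_final_3} from Theorem~\ref{Theorem8} into the Euclidean distance and cancel the common $\eta$-independent factor. In fact you are slightly more careful than the paper, which writes $d_k = \frac{1}{\eta_k}\left(\frac{2\rho}{1+\rho}\right)\sqrt{h_1^2+h_2^2}$ without absolute values; your explicit use of the sign constraint \eqref{inequality} (which forces $\frac{2\rho}{\eta(1+\rho)}>0$, so $\eta_1$ and $\eta_2$ share a sign) is exactly what justifies dropping them.
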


\begin{proof}
With reference to Fig.\ref{Locus of the convergence point for fixed}, we can write
\begin{eqnarray}
\label{d}d_{k} = \sqrt{(x_{ck}-x_{c0})^2+(y_{ck}-y_{c0})^2};~~~~k=1,2.
\end{eqnarray}
By using \eqref{xc_final_3} and \eqref{yc_final_3}, \eqref{d} can be written as
\begin{eqnarray}
\label{d_1}d_{k}= \frac{1}{\eta_{k}} \left(\frac{2\rho}{1+\rho}\right) \sqrt{h_1^2+h_2^2};~~~~k=1,2.
\end{eqnarray}
From \eqref{d_1}, we can conclude that $d_1\eta_1 = d_2\eta_2$. This result implies that we can select gain parameter $\eta_2$ to reach the new destination $(x_{c2}, y_{c2})$ on the same locus line with fixed $\rho$ if we have information about current gain $\eta_1$ and location $(x_{c1}, y_{c1})$. This completes the proof.
\end{proof}

\begin{cor}
For the conditions given in Theorem~\ref{Theorem8}, if $K_1 = K_2 = K>0$, then, the trajectories of both the agents, in balanced formation, are normal to the locus of convergence points.
\end{cor}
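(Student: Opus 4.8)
The plan is to exploit the fact that $K_1 = K_2 = K$ forces the ratio $\rho = K_1/K_2 = 1$ and hence $\lambda_1 = \lambda_2 = 1/2$ in \eqref{lamda1}, so that $\lambda_1 - \lambda_2 = 0$. First I would substitute this into the convergence-point integrals \eqref{xc_final_2} and \eqref{yc_final_2} established in the proof of Theorem~\ref{Theorem8}. The crucial observation is that the argument $f(\xi)$ of the trigonometric terms then loses its $\xi$-dependence and collapses to the constant $\mu := (\theta_{10}+\theta_{20})/2$. Consequently $\cos f(\xi) \equiv \cos\mu$ and $\sin f(\xi) \equiv \sin\mu$ can be pulled outside the integrals, leaving the same remaining factor $C := \int_{\delta_0/2}^{\pi/2}(\sin\xi)^{-1}\,d\xi$ in both coordinates.

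The second step is to read off the slope of the locus. From the factorisation I obtain $h_1 = C\cos\mu$ and $h_2 = C\sin\mu$, so the slope $\overline{h}$ computed in Theorem~\ref{Theorem8} is
\begin{equation}
\overline{h} = \frac{h_2}{h_1} = \frac{\sin\mu}{\cos\mu} = \tan\mu .
\end{equation}
Thus the locus line through the initial centroid $(x_{c0},y_{c0})$ is oriented at angle $\mu$ with the real axis; note that the common length factor $C$ never needs to be evaluated explicitly.

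The third step is to compare this orientation with the steady-state heading directions. For homogeneous gains the reference direction is the mean $\overline{\theta}_f$ of \eqref{theta_f_avg}; with $\tilde{\theta}_{10} = \theta_{10}$ and $\tilde{\theta}_{20} = \theta_{20}-\pi$ this yields $\overline{\theta}_f = \tfrac{1}{2}(\theta_{10}+\theta_{20}) - \tfrac{\pi}{2} = \mu - \pi/2$. Since $\theta_{kf} = \theta_f + (k-1)\pi$ for $N=2$, the two agents head in directions $\theta_{1f} = \mu - \pi/2$ and $\theta_{2f} = \mu + \pi/2$. In balanced formation the control \eqref{control3_1} vanishes, so by the argument used in Theorem~\ref{Theorem1} each agent ultimately travels in a straight line along its steady heading; both headings differ from the locus direction $\mu$ by exactly $\pm\pi/2$, which establishes that the trajectories are normal to the locus. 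I expect the only delicate point to be justifying that the identical factor $C$ cancels cleanly in the slope, i.e.\ that the $\xi$-integrals for $x_c$ and $y_c$ share the same kernel once $\lambda_1 = \lambda_2$; after that the perpendicularity follows immediately from the elementary identity $e^{i(\mu\pm\pi/2)} = \pm\, i\, e^{i\mu}$.
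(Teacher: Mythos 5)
Your proof is correct and follows the same overall strategy as the paper's own proof: compute the slope of the locus of convergence points, compute the direction of the agents' straight-line trajectories in balanced formation, and verify perpendicularity. The differences lie only in which intermediate results you invoke, and they are worth noting. For the locus slope, the paper substitutes $\lambda_1=\lambda_2=1/2$ and $\kappa=K$ directly into the time-domain integrals \eqref{xc_final} and \eqref{y_c final} and evaluates them in closed form (obtaining an explicit logarithmic expression), whereas you work with the $\xi$-substituted integrals \eqref{xc_final_2}--\eqref{yc_final_2} and observe that once $\lambda_1-\lambda_2=0$ the trigonometric argument $f(\xi)$ collapses to the constant $\mu=(\theta_{10}+\theta_{20})/2$, so the common factor $C=\int_{\delta_0/2}^{\pi/2}(\sin\xi)^{-1}d\xi$ cancels in the ratio; this is slightly cleaner since no integral ever needs to be evaluated, and it makes transparent that only the direction, not the length, of the displacement matters. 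For the trajectory direction, the paper reads off $\tan\left(\theta_k(t\rightarrow\infty)\right)=-\cot\left((\theta_{10}+\theta_{20})/2\right)$ from the explicit time solutions \eqref{theta_1}--\eqref{theta_2}, whereas you use the homogeneous-gain average \eqref{theta_f_avg} together with the $\pi$-separation of Lemma~\ref{lem2}, obtaining headings $\mu\pm\pi/2$; both give the same conclusion. One small caveat on your route: \eqref{theta_f_avg} is derived under the cyclic-order hypothesis $\theta_{10}<\theta_{20}$ of Theorem~\ref{Theorem2}, while Theorem~\ref{Theorem8} only assumes $[\theta_{10},\theta_{20}]^T\in(-\pi,\pi)^2$; if instead $\theta_{20}<\theta_{10}$ the two steady headings simply swap roles ($\theta_{1f}=\mu+\pi/2$, $\theta_{2f}=\mu-\pi/2$), so the perpendicularity conclusion is unaffected, but the paper's route via \eqref{theta_1}--\eqref{theta_2} (whose derivation does not need the ordering) sidesteps this labeling issue automatically.
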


\begin{proof}
If $K_1 = K_2 = K > 0$, then $\lambda_1 = \lambda_2 = {1}/{2}$ and $\kappa = K$. Substituting these values in \eqref{xc_final} and \eqref{y_c final}, the coordinates of the convergence point are given by
\begin{eqnarray}
\label{xc_same_gain}x_c (\infty)-x_{c0} =  \cos\left(\frac{\theta_{10} + \theta_{20}}{2}\right) \int_{0}^{\infty} \dfrac{dt}{\sqrt{1 + \phi^2_0 e^{2Kt}}}\\
\label{yc_same_gain}y_c (\infty)-y_{c0} =  \sin\left(\frac{\theta_{10} + \theta_{20}}{2}\right) \int_{0}^{\infty} \dfrac{dt}{\sqrt{1 + \phi^2_0 e^{2Kt}}}.
\end{eqnarray}
Integrating \eqref{xc_same_gain} and \eqref{yc_same_gain}, we get
\begin{eqnarray}
x_c (\infty)-x_{c0} = \frac{1}{2K} \cos\left(\frac{\theta_{10} + \theta_{20}}{2}\right) \ln \left(\frac{\sqrt{1+\phi^2_0} + 1}{\sqrt{1+\phi^2_0} - 1}\right)\\
y_c (\infty)-y_{c0} = \frac{1}{2K} \sin\left(\frac{\theta_{10} + \theta_{20}}{2}\right) \ln \left(\frac{\sqrt{1+\phi^2_0} + 1}{\sqrt{1+\phi^2_0} - 1}\right),
\end{eqnarray}
which implies that
\begin{eqnarray}
\frac{y_c (\infty)-y_{c0}}{x_c (\infty)-x_{c0}} = \tan\left(\frac{\theta_{10} + \theta_{20}}{2}\right).
\end{eqnarray}
Therefore, for $K_1 = K_2 = K$, the locus of convergence point is a straight line of slope $m_1 = \tan\left(({\theta_{10} + \theta_{20}})/{2}\right)$.

Also, from \eqref{theta_1} and \eqref{theta_2}, it can be observed that the slopes of the straight line trajectories of both the agents in balanced formation, are same and is given by
\begin{equation}
m_2 = \tan\left(\theta_1\left(t\rightarrow\infty\right)\right) = \tan\left(\theta_2\left(t\rightarrow\infty\right)\right) = -\cot\left(\dfrac{\theta_{10}+\theta_{20}}{2}\right).
\end{equation}
Thus, $m_1m_2 = -1$, which is a condition when two straight lines of slopes $m_1$ and $m_2$ are perpendicular. Hence, this can be concluded that the trajectories of the agents are normal to the locus of the convergence point.
\end{proof}

\begin{remark}
Note that, for controller gains $K_1 = K_2 = K > 0$, the ratio $K_1/K_2 (=\rho)$ is fixed and unity. In this situation, the locus of the convergence is a straight line of slope $\tan((\theta_{10} + \theta_{20})/2)$. However, by using heterogeneous gains $K_1$ and $K_2$, we can get any desired convergence point in a two dimensional plane corresponding to different values of $\rho$ (see Fig.~\ref{Locus of the convergence point for fixed}). Therefore, by using heterogeneous controller gains, in balanced formation, we can regulate the velocity directions as well as convergence point of the agents in balanced formation.
\end{remark}

%%%%%%%%%%%%%%%%%%%%%%%%%%%%%%%%%%%%%%%%%%%%%%%%%%%%%%%%%%%%%%%%%%%%%%%%%%%%%%%%%%%%%%%%%%%%%%%%%%%%%%%%%%%%%%%%%%%%%%%%%%%%%%%%%%%%%%%%%%%%%%%

\begin{figure}[!t]
\centering
\includegraphics[scale=0.4]{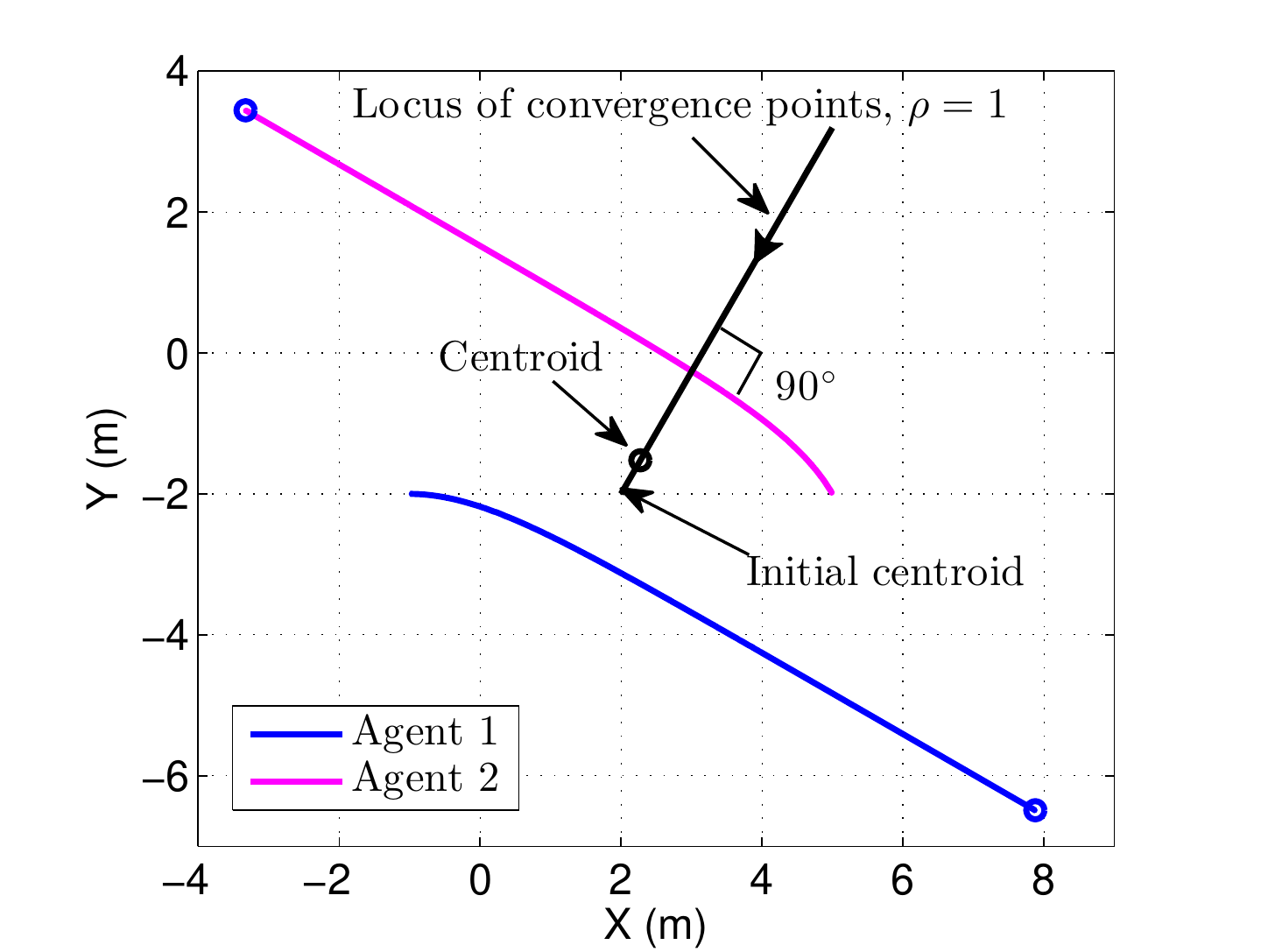}
\caption{The locus of the convergence points for $\rho = 1$, and the trajectories of the agents under the control \eqref{control3_1} with $K_1 = K_2 =1$. The trajectories of the agents are normal to the locus of the convergence points.}
\label{locus of convergence points}
\end{figure}

%%%%%%%%%%%%%%%%%%%%%%%%%%%%%%%%%%%%%%%%%%%%%%%%%%%%%%%%%%%%%%%%%%%%%%%%%%%%%%%%%%%%%%%%%%%%%%%%%%%%%%%%%%%%%%%%%%%%%%%%%%%%%%%%%%%%%%%%%%%%%%%%%%%

\begin{example}
Two agents are considered as in Example~\ref{ex3}. Note that the initial centroid is located at $(x_{c0}, y_{c0}) = (2, -2)$. The locus of the convergence points for various values of the heterogeneous gains $K_1$ and $K_2$ provided $\rho = 1$, is also shown in Fig.~\ref{locus of convergence points}, which is straight line of slope $\sqrt{3}$, and passes through the the point $(x_{c0}, y_{c0})$. The trajectories of the agents in balanced formation are also shown in the same figure for the heterogeneous gains $K_1 = K_2 = 1$. Clearly, trajectories of the agents are perpendicular to the locus of the convergence points.
\end{example}

\begin{remark}
In practical, autonomous vehicle can withstand a limited control force due to physical constraints. In such a case, the concept of heterogeneous controller gains can be used to ensure that the applied control force does not violates the maximum allowable limit. This situation is already addressed in \cite{Jain2016} in achieving synchronization, and can be equivalently stated in phase balancing.
\end{remark}

\section{Conclusions}
In this paper, we have investigated the phenomenon of phase balancing in a group of heterogeneously coupled agents. It has been shown that a desired reference direction, and hence, the desired orientations of the agents in balanced formation, can be achieved by appropriately selecting the heterogeneous controller gains $K_k, \forall k$, given according to the Assumption~\ref{assumption}. Moreover, it has been illustrated through simulation that the reachable set of the reference direction further expands when both positive and negative values of the heterogeneous gains are incorporated in the control scheme. In particular, it has been proved analytically for $N=2$ that there exists a condition on the heterogeneous controller gains which allows them to assume both positive and negative values, and hence, results in, further expansion of the reachable set of the reference direction. By obtaining the closed form expressions of the velocity directions for $N=2$, we have further shown that the locus of the convergence point, for various values of the heterogenous gains provided their ratio is fixed, is a straight line passing through the initial centroid. Furthermore, it has been pointed out for realistic systems that an upper bound on the control force, applied to each agent, can be obtained by bounding the heterogeneous control gains.

Simulation results show the effectiveness of using heterogenous control gains in regulating the velocity directions of $N$ agents in balanced formation. It would be interesting as a future research to find out an analytical expression relating angular separation between agents' velocity vectors in balanced formation with the heterogeneous control gains for the general case of $N$ agents. The consideration of issue of collision avoidance among agents is also an interesting future problem.

\section{Appendix}
The splay phase (a special case of phase balancing) is an arrangement in which the agents are at equal angular separation, that is, their phases are separated by multiples of $2\pi/N$. The $m^\text{th}$ harmonic of the phase order parameter $p_\theta$, which plays an important role in stabilizing the splay phase arrangement, is given by \cite{Paley2005}
\begin{equation}
\label{m_phase_order_parameter}p_{m\theta} = \frac{1}{mN}\sum_{k=1}^{N}e^{im\theta_k} = |p_{m\theta}|e^{i\Psi_m},
\end{equation}
where, $m \in \mathbb{N} \triangleq \left\{1, 2, 3, \ldots\right\}$, and $0 \leq \left|p_{m\theta}\right| \leq 1/m$. The splay phase arrangement occurs when the condition
\begin{equation}
\label{condition}p_{1\theta} = p_{2\theta} = \ldots = p_{\lfloor N/2 \rfloor \theta} = 0
\end{equation}
holds \cite{Sepulchre2007}. Condition \eqref{condition} indicates that the splay phase arrangement corresponds to the phase balancing of the first $\lfloor N/2 \rfloor$ harmonics of $p_\theta$. Therefore, in order to stabilize the splay phase arrangements, we use the potential function given as,
\begin{equation}
\label{potential function_splay}W(\pmb{\theta}) = \frac{N}{2}\sum_{m=1}^{\lfloor N/2 \rfloor}|p_{m\theta}|^2
\end{equation}
which is minimized in the splay formation. Also, $|p_{1\theta}| = |p_\theta| = 0$ corresponds to the general phase balancing as discussed above. For the sake of convenience, let us define
\begin{equation}
\label{m_potential function_new}U_m(\pmb{\theta}) = \frac{N}{2}|p_{m\theta}|^2,
\end{equation}
by using which \eqref{potential function_splay} can be rewritten as
\begin{equation}
\label{potential function_new}W(\pmb{\theta}) = \sum_{m=1}^{\lfloor N/2 \rfloor} U_m(\pmb{\theta}).
\end{equation}
A Lyapunov-based control framework exists to stabilize splay formation as discussed in the following theorem.

\begin{thm}\label{Theorem1_splay}
Consider the system dynamics \eqref{modelNew} with control law
\begin{equation}
\label{control1_splay}u_k = -K_k\left(\frac{\partial W}{\partial \theta_k}\right);~~~K_k \neq 0,
\end{equation}
and define a term
\begin{equation}
\label{term_splay}\overline{T}_k(\pmb{\theta}) = \left(\frac{\partial W}{\partial \theta_k}\right)^2
\end{equation}
for all $k = 1, \ldots, N$, where $m = 1, \ldots, {\lfloor N/2 \rfloor}$. If $\sum_{k=1}^{N} K_k T_k(\pmb{\theta}) > 0$, all the agents asymptotically stabilize to a splay formation. Moreover, $K_k > 0, \forall k$, is a restricted sufficient condition in stabilizing splay formation.
\end{thm}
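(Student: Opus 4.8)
The plan is to replicate the proof of Theorem~\ref{Theorem1}, using the multi-harmonic potential $W(\pmb{\theta})$ from \eqref{potential function_new} in place of $U(\pmb{\theta})$. First I would confirm that $W$ is an admissible Lyapunov candidate: because $0 \leq |p_{m\theta}| \leq 1/m$, every $U_m \geq 0$ and hence $W \geq 0$ on the compact torus $\mathbb{T}^N$, with $W(\pmb{\theta}) = 0$ if and only if $p_{1\theta} = \cdots = p_{\lfloor N/2 \rfloor\theta} = 0$. By \eqref{condition} this is exactly the splay arrangement, so $W$ attains its minimum precisely on the splay set.

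Next I would differentiate $W$ along \eqref{modelNew}. A short computation gives $\partial W/\partial\theta_k = \sum_{m=1}^{\lfloor N/2 \rfloor}\langle p_{m\theta}, ie^{im\theta_k}\rangle$, and substituting the control law \eqref{control1_splay} into the chain rule, exactly as in \eqref{Udot}--\eqref{U_dot}, yields $\dot{W}(\pmb{\theta}) = \sum_k (\partial W/\partial\theta_k)u_k = -\sum_k K_k\overline{T}_k(\pmb{\theta})$. Hence $\dot{W} < 0$ whenever $\sum_k K_k\overline{T}_k(\pmb{\theta}) > 0$, and the Lyapunov stability theorem drives every trajectory to the minimizing (splay) set, which settles the first assertion.

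For the restricted sufficiency, $K_k > 0$ together with $\overline{T}_k \geq 0$ forces $\dot{W} \leq 0$, so I would apply LaSalle's invariance principle: trajectories converge to the largest invariant subset of $\{\dot{W} = 0\} = \{\partial W/\partial\theta_k = 0,~\forall k\}$, i.e.\ the critical set of $W$. On this set $u_k = -K_k(\partial W/\partial\theta_k) = 0$, so every critical configuration is an equilibrium and the set is trivially invariant, mirroring the $\tfrac{d}{dt}\langle p_\theta, ie^{i\theta_k}\rangle = 0$ argument of Theorem~\ref{Theorem1}. It then remains to classify these critical points.

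The main obstacle is precisely this classification for the multi-harmonic potential. The Hessian is now $H(\pmb{\theta}) = \sum_{m=1}^{\lfloor N/2 \rfloor} H_m(\pmb{\theta})$, where $H_m$ has diagonal entries $(1/N) - m|p_{m\theta}|\cos(\Psi_m - m\theta_k)$ and off-diagonal entries $(1/N)\cos(m(\theta_j - \theta_k))$. Unlike the single-harmonic case, the critical configurations of $W$ are no longer confined to two antipodal clusters, so $H$ does not collapse to the rank-one-plus-diagonal form $\frac{1}{N}\pmb{w}\pmb{w}^T + |p_\theta|\text{diag}(\pmb{w})$ that made the indefiniteness check easy for $U$. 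The strategy would be to show that the splay states are the global minima (where $W = 0$) and that at every other critical point (where $W \neq 0$) one can exhibit a direction $\pmb{q}$ with $\pmb{q}^T H(\pmb{\theta})\pmb{q} < 0$, rendering $H$ indefinite; here I would lean on the harmonic decomposition of $W$ and the critical-point characterization of the splay potential established in \cite{Paley2005} and \cite{Sepulchre2007}, concluding that the splay formations are the only stable equilibria and are therefore asymptotically stabilized when $K_k > 0$ for all $k$.
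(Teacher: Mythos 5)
Your proposal is correct and follows essentially the same route as the paper: the Lyapunov/LaSalle argument with $W(\pmb{\theta})$, invariance of the critical set (the paper verifies this by differentiating $\left<p_{m\theta}, ie^{im\theta_k}\right>$, which reduces to your observation that $u_k = 0$ there), and deferral of the instability of the non-splay critical points to the characterization in \cite{Paley2005}, the paper's key remark being that this is legitimate because the critical points themselves are independent of the heterogeneous gains $K_k$.
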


\begin{proof}
The proof proceeds on the same steps as used to prove Theorem~\ref{Theorem1}. We just need to analyze the invariant set where $\dot{W}(\pmb{\theta}) = 0$, and the critical points of $W(\pmb{\theta})$.

Since $\pmb{\theta} \in \mathbb{T}^N$ is compact, it follows from LaSalle's invariance theorem \cite{Khalil2000} that all the solutions of \eqref{modelNew} under control \eqref{control1_splay} converge to the largest invariant set contained in $\{\dot{W}(\pmb{\theta}) = 0\}$, that is, the set
\begin{equation}
\Omega = \left\{\pmb{\theta}~|~ ({\partial W}/{\partial \theta_k}) = \sum_{m=1}^{\lfloor N/2 \rfloor}\left<p_{m\theta}, ie^{im\theta_k}\right> = 0,~\forall k\right\},
\end{equation}
which is also the critical set of $W(\pmb{\theta})$. In this set, dynamics \eqref{modelNew2} reduces to $\dot{\theta}_k = 0, \forall k$, which implies that all the agents move in a straight line. The set $\Omega$ is itself invariant since
\begin{align}
\nonumber\sum_{m=1}^{\lfloor N/2 \rfloor}\frac{d}{dt}\left<p_{m\theta}, ie^{im\theta_k}\right> &= -\sum_{m=1}^{\lfloor N/2 \rfloor}m\left<p_{m\theta}, e^{im\theta_k}\dot{\theta}_k\right> \\
\nonumber &+ \frac{1}{N}\sum_{m=1}^{\lfloor N/2 \rfloor}\left<\sum_{k=1}^{N} ie^{im\theta_k} \dot{\theta}_k, ie^{im\theta_k}\right>\\
\nonumber = -\sum_{m=1}^{\lfloor N/2 \rfloor}m\left<p_{m\theta}, e^{im\theta_k}\right>\dot{\theta}_k  & + \sum_{m=1}^{\lfloor N/2 \rfloor}m\left<p_{m\theta}, e^{im\theta_k}\right>\dot{\theta}_k = 0
\end{align}
on this set. Therefore, all the trajectories of the system \eqref{modelNew} under control \eqref{control1} asymptotically converges to the critical set of  $W(\pmb{\theta})$.

{\it Analysis of the critical points}:

The critical points of $W(\pmb{\theta})$ are given by the $N$ algebraic equations
\begin{equation}
\frac{\partial W}{\partial \theta_k} = \sum_{m=1}^{\lfloor N/2 \rfloor}\left<p_{m\theta}, ie^{im\theta_k}\right> =  \sum_{m=1}^{\lfloor N/2 \rfloor}|p_{m\theta}|\sin(\Psi_m - m\theta_k) = 0,~~1\leq k \leq N.
\end{equation}
Since the critical points with $p_{m\theta} = 0$, where $m = 1, \ldots, {\lfloor N/2 \rfloor}$, are the global minima of $W(\pmb{\theta})$, the splay phase arrangement is asymptotically stable if $K_k > 0,\forall k$. The rest of the critical points where $p_{m\theta} \neq 0$, and $\sin(\Psi_m - m\theta_k) = 0, \forall k$, are unstable points, the proof of which directly follows from the Theorem~2 in \cite{Paley2005} since the critical points are independent of the heterogeneous control gains. This completes the proof.
\end{proof}

The control law \eqref{control1_splay}, after simplification, can be written as
\begin{equation}
\label{control1_splay_new}\dot{\theta_k} = - \frac{K_k}{N} \sum_{j=1}^{N}\sum_{m=1}^{\lfloor N/2 \rfloor} \frac{1}{m} \sin(m(\theta_j - \theta_k)).
\end{equation}

From \eqref{m_phase_order_parameter}, we can write
\begin{equation}
\left|p_{m\theta}\right|e^{i(\Psi_m - m\theta_k)} =  \frac{1}{mN}\sum_{j=1}^{N}e^{im(\theta_j - \theta_k)},
\end{equation}
the imaginary part of which is given by
\begin{equation}
\label{phase_order_parameter_New_splay}\left|p_{m\theta}\right|\sin(\Psi_m - m\theta_k) = \frac{1}{mN}\sum_{j=1}^{N} \sin(m(\theta_j - \theta_k)).
\end{equation}
Using \eqref{phase_order_parameter_New_splay}, \eqref{control1_splay_new} can be written as
\begin{equation}
\label{theta_dot_new_splay}\dot{\theta}_k = -K_k\sum_{m=1}^{\lfloor N/2 \rfloor}\left|p_{m\theta}\right|\sin(\Psi_m - m\theta_k),
\end{equation}
which, for $m =1, 2$, and $3$, results in the same control as defined in \eqref{theta_dot_new}. However, for $m>3$, unlike \eqref{theta_dot_new}, in this case it may not be easy to speculate the result like Lemma~1 since \eqref{theta_dot_new_splay} contains $m$ harmonic terms, and hence, is a challenging problem.

%%%%%%%%%%%%%%%%%%%%%%%%%%%%%%%%%%%%%%%%%%%%%%%%%%%%%%%%%%%%%%%%%%%%%%%%%%%%%%%%%%%%%%%%%%%%%%%%%%%%%%%%%%%%%%%%%%%%%%%%%%%%%%%%%%%%%%%%%%%%%%%

\begin{figure}[!t]
\centering
\includegraphics[scale=0.4]{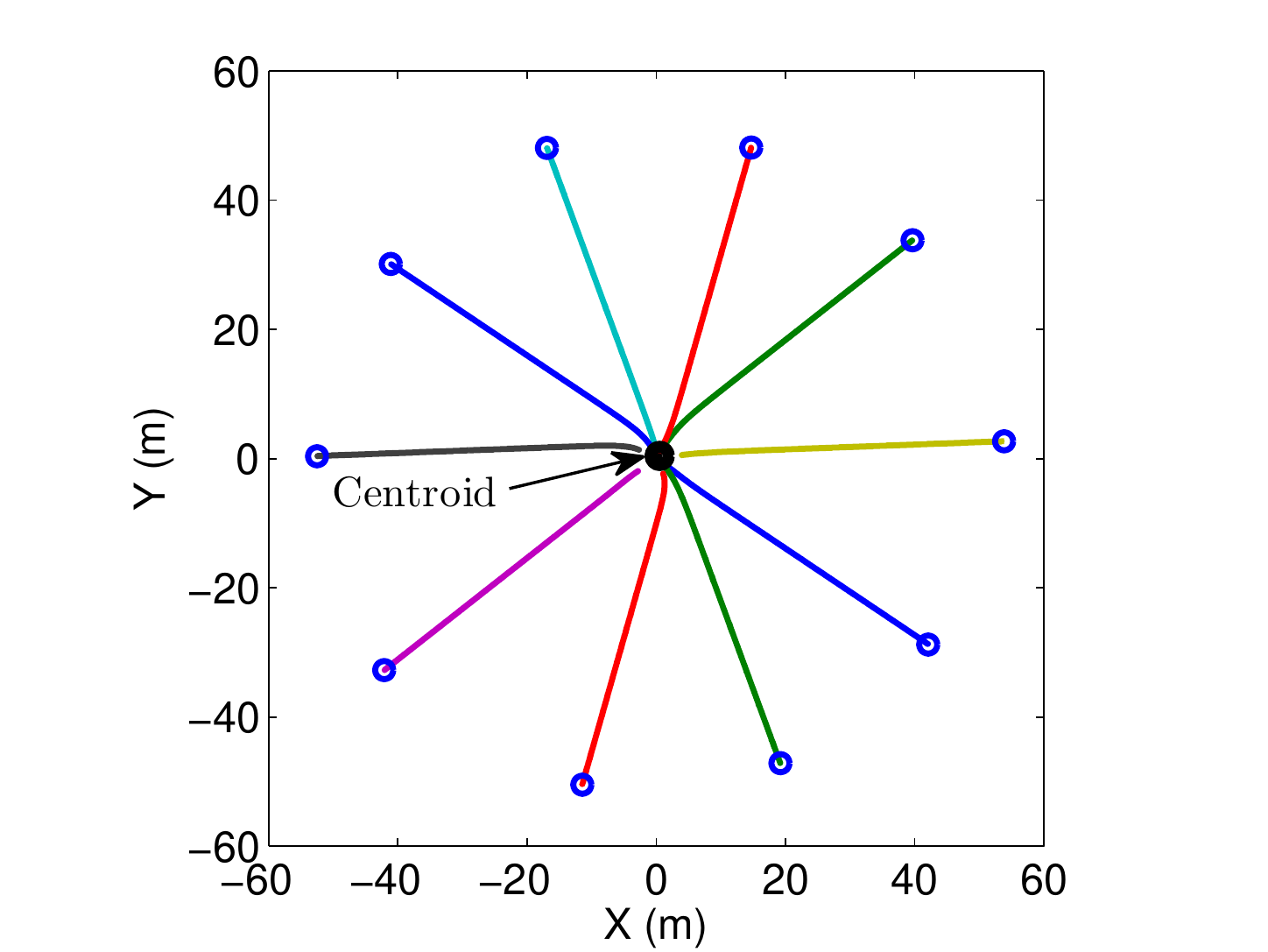}
\caption{Splay formation of ten agents under the control law \eqref{control1_splay_new} with heterogeneous gains $K_{set7} = {k, k = 1, \ldots, N}$.}
\label{splay formation}
\end{figure}

%%%%%%%%%%%%%%%%%%%%%%%%%%%%%%%%%%%%%%%%%%%%%%%%%%%%%%%%%%%%%%%%%%%%%%%%%%%%%%%%%%%%%%%%%%%%%%%%%%%%%%%%%%%%%%%%%%%%%%%%%%%%%%%%%%%%%%%%%%%%%%%%%%%

\begin{example}
The splay formation of the ten agents with arbitrary generated initial positions as well as the heading angles is shown in Fig.~\ref{splay formation} under the control \eqref{control1_splay_new} with heterogeneous gains $K_{set7} = {k, k = 1, \ldots, N}$. The angular separation between the velocity vectors of the consecutive agents is $36^\circ$ as desired.
\end{example}

%%%%%%%%%%%%%%%%%%%%%%%%%%%%%%%%%%%%%%%%%%%%%%%%%%%%%%%%%%%%%%%%%%%%%%%%%%%%%%%%%%%%%%%%%%%%%%%%%%%%%%%%%%%%%%%%%%%%%%%%%%%%%%%%%%%%%%%%%%%%%%%%%%%%%%%%%%%%%%%%%%%%%%

%%%%%%%%%%%%%%%%%%%%%%%%%%%%%%%%%%%%%%%%%%%%%%%%%%%%%%%%%%%%%%%%%%%%%%%%%%%%%%%%%%%%%%%%%%%%%%%%%%%%%%%%%%%%%%%%%%%%%%%%%%%%%%%%%%%%%%%%%%%%%%%%%%%%%%%%%%%%%%%%%%%%%%%%%%


\begin{thebibliography}{10}

\bibitem{Beard2008} W. Ren, and R.W Beard ``\emph{Distributive Consensus in Multi-Vehicle Cooperative Control: Theory and Applications}," Springer-Verlag, 2008.

\bibitem{Mesbhai2010} M. Mesbhai, and M. Egerstedt, ``\emph{Graph Theoritical Methods in Mutiagent Networks},"
Princeton Univ. Press, 2010.

\bibitem{Marshall2004} J.A. Marshall, M.E. Broucke, and B.A. Francis,
``Formations of vehicles in cyclic pursuit," \emph{IEEE Transactions on Automatic Control}, 49(11), 2004, pp. 1963-1974.

\bibitem{Fax2004} J.A. Fax, and R.M. Murray,
``Information flow and cooperative control of vehicle formations," \emph{IEEE Transactions on Automatic Control}, 49(9), 2004, pp. 1465-1476.

\bibitem{Sepulchre2007} R. Sepulchre, D.A. Paley, and N.E. Leonard,
``Stabilization of planar collective motion: All-to-all
communication," \emph{IEEE Transactions on Automatic Control},
52(5), 2007, pp. 811-824.

\bibitem{Beard2004} W. Ren, and R. Beard, ``Formation feedback control for multiple spacecraft
via virtual structures," \emph{IEE Proc.—Control Theory Appl.}, 151(3), 2004, pp. 357–368.


\bibitem{Zqu2009} Z. Qu, ``\emph{Cooperative Control of Dynamical Systems: Applications to Autonomous Vehicles},"
Springer, 2009.

\bibitem{cortes2004} J. Cort\'{e}s, S. Mart\'{i}nez, T. Karatas, and F. Bullo, ``Coverage control for
mobile sensing networks," \emph{IEEE Transactions on Robotics and Automation}, 20(2), 2004, pp. 243–255.

\bibitem{Strogatz2000} S.H. Strogatz, ``From Kuramoto to Crawford:
exploring the onset of synchronization in populations of coupled
oscillators," \emph{Physica D: Nonlinear Phenomena}, 143(1-4),
2000, pp. 1-20.

\bibitem{Olfati2004} R. Olfati-Saber, and R.M. Murray,
``Consensus Problems in Networks of Agents With
Switching Topology and Time-Delays,"
\emph{IEEE Transactions on Automatic Control}, 49(9), 2004, pp.
1520-1533.

\bibitem{Beard2005} W. Ren, R.W. Beard, and E.M. Atkins, ``A survey of consensus problems in multi-agent coordination," \emph{Proc. of
American Control Conference}, Portland, OR, USA, June 2005, pp. 1859-1864.

\bibitem{Chopra2009} N. Chopra, and M.W. Spong,
``On Exponential Synchronization of Kuramoto Oscillators,"
\emph{IEEE Transactions on Automatic Control}, 54(2), 2009, pp.
353-357.

\bibitem{Dorfler2014} F. Dorfler, and F. Bullo, ``Synchronization in complex networks of phase oscillators: A survey,"
\emph{Automatica}, 50, 2014, pp. 1539-1564.

\bibitem{Dong2015} J-G Dong, and X. Xue, ``Finite-time synchronization of Kuramoto-type oscillators," \emph{Nonlinear analysis: Real world applcations}, 26,
2015, pp. 133-149.

\bibitem{Jain2016} A. Jain, and D. Ghose, ``Synchronization of multi-agent systems with heterogeneous controllers," [online].~Available:~http://arxiv.org/abs/1512.07362v4

\bibitem{Jadbabaie2004} A. Jadbabaie, N. Motee, and M. Barahona, ``On the stability of
the Kuramoto model of coupled nonlinear oscillators," \emph{Proc. of
American Control Conference}, Boston, MA, USA, June 2004, pp. 4296–4301.

\bibitem{Seyboth2014} G.S. Seyboth, J. Wu, J. Qin, C. Yu, and F. Allg\"{o}wer, ``Collective circular motion of unicycle type vehicles
with nonidentical constant velocities," \emph{IEEE Transactions on Control of Network Systems}, 1(2), 2014, pp. 167-176.

\bibitem{Xu2013} Z. Xu, M. Egerstedt, G. Droge, and K. Schilling, ``Balanced deployment of multiple robots
using a modified Kuramoto model," \emph{Proceedings of the
American Control Conference}, Washington, DC, USA, June 2013, pp. 6138-6144.

\bibitem{Paley2005} D.A. Paley, N.E. Leonard, and R. Sepulchre, ``Oscillator Models and Collective Motion:
Splay State Stabilization of Self-Propelled Particles," \emph{Proceedings of the
44th IEEE Conference on Decision and Control, and the European Control Conference}, Seville, Spain, December 2005, pp. 3935-3940.

\bibitem{Klein2008} D.J. Klein, and K.A. Morgansen, ``Set stability of phase-coupled agents in discrete time," \emph{Proceedings of the
American Control Conference}, Westin Seattle Hotel, Seattle, Washington, USA, June 2008, pp. 2285-2290.

\bibitem{Scardovi2007} L. Scardovi, A. Sarlette, and R. Sepulchre, ``Synchronization and balancing on the N-torus,"
\emph{Systems and Control Letters}, 56, 2007, pp. 335-341.

\bibitem{Okuda1993} K. Okuda, ``Variety and generality of clustering in globally coupled
oscillators," \emph{Physica D}, vol. 63, pp. 424–436, 1993.

\bibitem{Ashwin2007} P. Ashwin, G\'{a}bor Orosz, J. Wordsworth, and S. Townley,
``Dynamics on Networks of Cluster States for Globally Coupled Phase Oscillators," \emph{SIAM Journal of Applied Dynamical Systems}, 6(4), 2007, pp. 728-758.

\bibitem{Scardovi2010} L. Scardovi, ``Clustering and Synchronization in Phase Models
with State Dependent Coupling," \emph{Proceedings of the
49th IEEE Conference on Decision and Control}, Atlanta, GA, USA, December 2010, pp. 627-632.

\bibitem{AJain2016} A. Jain, and D. Ghose, ``Collective Circular Motion of Multi-Agent Systems in Synchronized and Balanced Formations With Second-Order Rotational Dynamics," [online].~Available:~http://arxiv.org/abs/1601.03479v2


\bibitem{Strogatz1991} S.H. Strogatz, and R.E. Mirollo,
``Stability of incoherence in a population of
coupled oscillators," \emph{Journal of Statistical Physics}, 63(3/4), 1991, pp. 613-635.


\bibitem{Sinha2006} Sinha A, Ghose D. ``Generalization of linear cyclic pursuit with application to rendezvous of multiple autonomous agents," \emph{IEEE Transactions on Automatic Control} 2006; {\bf 51}(11): 1819-1824.

\bibitem{Ding2012} Ding W, Yan G, Lin Z. ``Pursuit formations with dynamic control gains," \emph{International Journal of Robust and Nonlinear Control} 2012; {\bf 22}: 300-317. DOI: 10.1002/rnc.1692

\bibitem{Jain2013} A. Jain, and D. Ghose, ``Collective
behavior with heterogeneous controller," \emph{Proceedings of the
American Control Conference}, Washington, DC, USA, June 2013, pp. 4636-4641.


\bibitem{Khalil2000} H.K. Khalil, ``\emph{Nonlinear systems},"
$3$rd Edition, Upper Saddle River, NJ: Prentice-Hall, 2000.

\bibitem{strang2007} G. Strang, ``\emph{Linear Algebra and its Applications},"
$4^\text{th}$ Edition, Cengage Learning, 2007.

\bibitem{Apostol1991} T.M. Apostol , ``\emph{Calculus, Vol. 1: One-Variable Calculus, with an Introduction to Linear Algebra},"
$2^\text{nd}$ Edition, Wiley India Pvt Ltd, 1991.

\bibitem{Shilov1996} G.E. Shilov, ``\emph{Elementary Real and Complex Analysis},"
$2^\text{nd}$ Edition, Dover Publications Inc., 1996.

\end{thebibliography}
\end{document}